\newtheorem{theorem}{Theorem}[section]
\newtheorem{lemma}[theorem]{Lemma}
\newtheorem{definition}[]{Definition}
\theoremstyle{remark}
\newtheorem*{remark}{Remark}
\newcommand{\mycase}[1]{\smallskip\noindent{\bf Case (#1):}}
\newcommand\eps\varepsilon
\newcommand{\etal}{{\em et al.\/}}
\newcommand{\N}{\mathbb{N}}
\newcommand{\Oh}{\mathcal{O}}
\newcommand{\caseref}[1]{{\rm \ref{#1}}}
\newcommand\rel{{\mbox{rel}}}
\newcommand\ALG{\textsf{ALG\/}\xspace}
\newcommand\DIV{\textsf{PG-DIV\/}\xspace}
\newcommand\MAIN{\textsf{PG\/}\xspace}
\newcommand\OPT{\textsf{OPT\/}\xspace}
\newcommand\ADV{\textsf{ADV\/}\xspace}
\newcommand\ALGs{{\textsf{ALG\/}(s)}\xspace}
\newcommand\DIVs{{\textsf{PG-DIV\/}(s)}\xspace}
\newcommand\MAINs{{\textsf{PG\/}(s)}\xspace}
\newcommand\unarylt{{<\,}}
\newcommand\unarygeq{{\geq\,}}
\numberwithin{equation}{section} 
\title{On Packet Scheduling with Adversarial Jamming and Speedup\thanks{Work partially
supported by GA \v{C}R project 17-09142S, GAUK project 634217 and Polish National Science Center grant 2016/21/D/ST6/02402.
A preliminary version of this work is in~\cite{jamming-waoa}.}
}
\author[1]{Martin B\"{o}hm}
\author[2]{{\L}ukasz Je\.{z}}
\author[1]{Ji{\v{r}}\'{\i} Sgall}
\author[1]{Pavel Vesel\'{y}}
\affil[1]{Computer Science Institute of Charles University, Prague, Czech Republic\\
	 \texttt{\{bohm,sgall,vesely\}@iuuk.mff.cuni.cz}}
\affil[2]{Institute of Computer Science, University of Wroc{\l}aw, Poland\\
  \texttt{lje@cs.uni.wroc.pl}}
\date{}
\begin{document}

\maketitle

\begin{abstract}
In Packet Scheduling with Adversarial Jamming packets of arbitrary sizes
arrive over time to be transmitted over a channel
in which instantaneous jamming errors occur at times chosen by the adversary
and not known to the algorithm. The transmission taking place at the time
of jamming is corrupt, and the algorithm learns this fact immediately.
An online algorithm maximizes the total size of packets
it successfully transmits and the goal is to develop
an algorithm with the lowest possible asymptotic competitive
ratio, where the additive constant may depend on packet sizes.

Our main contribution is a universal algorithm that works for any speedup
and packet sizes and, unlike previous algorithms for the problem, it
does not need to know these parameters in advance. We show that this
algorithm guarantees 1-competitiveness with speedup 4, making it the
first known algorithm to maintain 1-competitiveness with a moderate
speedup in the general setting of arbitrary packet sizes. 
We also prove a lower bound of $\phi+1\approx 2.618$ on the speedup of any
1-competitive deterministic algorithm, showing that our algorithm is
close to the optimum.

Additionally, we formulate a general framework for analyzing our
algorithm locally and use it to show upper bounds on its competitive
ratio for speedups in $[1,4)$ and for several special cases,
recovering some previously known results, each of which had a dedicated proof.
In particular, our algorithm is 3-competitive without speedup, matching
both the (worst-case) performance of the algorithm by Jurdzinski~\etal~\cite{JurdzinskiKL13}
and the lower bound by Anta~\etal~\cite{Anta13}.
%
\end{abstract}

\section{Introduction}

We study an online packet scheduling model recently introduced by Anta
\etal~\cite{Anta13} and extended by Jurdzinski \etal~\cite{JurdzinskiKL13}.
In our model, packets of arbitrary sizes arrive over time and they are to be transmitted
over a single communication channel. The algorithm
can schedule any packet of its choice at any time, but cannot interrupt its
subsequent transmission; in the scheduling jargon, there is a single machine 
and no preemptions.  There are, however, instantaneous \emph{jamming errors} 
or \emph{faults} at times chosen by the adversary, which are not known to the
algorithm.  A transmission taking place at the time of jamming is
corrupt, and the algorithm learns this fact immediately. The packet
whose transmission failed can be retransmitted immediately or at any
later time, but the new transmission needs to send the whole packet, i.e.,
the algorithm cannot resume a transmission that failed.

The objective is to maximize the total size of packets successfully
transmitted.  In particular, the goal is to develop an online algorithm
with the lowest possible competitive ratio, which is the asymptotic worst-case ratio
between the total size of packets in an optimal offline schedule
and the total size of packets completed by the algorithm
on a large instance.
(See the next subsection for a detailed explanation of competitive analysis.)

We focus on algorithms with
resource augmentation, namely on online algorithms that transmit packets
$s\geq 1$ times faster than the offline optimum solution they are compared
against; such algorithm is often said to be speed-$s$, running at speed $s$,
or having a speedup of $s$.
As our problem allows constant competitive ratio already at speed $1$,
we consider the competitive ratio as a function of the speed.
This deviates from previous work, which focused on the case with
no speedup or on the speedup sufficient for ratio 1, ignoring 
intermediate cases.

\subsection{Competitive Analysis and its Extensions}

Competitive analysis focuses on determining the \emph{competitive ratio} of an
online algorithm.  The competitive ratio coincides with the approximation ratio,
i.e., the supremum over all valid instances $I$ of $\OPT(I)/\ALG(I)$, which is the
ratio of the optimum profit to the profit of an algorithm \ALG on
instance $I$.\footnote{We note that this ratio
is always at least $1$ for a maximization problem such as ours, but some authors
always consider the reciprocal, i.e., the ``alg-to-opt'' ratio, which is then at
most $1$ for maximization problems and at least $1$ for minimization problems.}
This name, as opposed to approximation ratio, is used for historical reasons,
and stresses that the nature of the hardness at hand is not due to computational
complexity, but rather the online mode of computation, i.e., processing an
unpredictable sequence of requests, completing each without knowing the future.
Note that the optimum solution is to the whole instance, so it can be thought of
as being determined by an algorithm that knows the whole instance in advance and
has unlimited computational power; for this reason, the optimum solution is
sometimes called ``offline optimum''.
Competitive analysis, not yet called this way, was first applied by Sleator and
Tarjan to analyze list update and paging problems~\cite{SleatorT85}.  Since then,
it was employed to the study of many online optimization problems, as evidenced
by (now somewhat dated) textbook by Borodin and El-Yaniv~\cite{BEY}.
A nice overview of competitive analysis and its many extensions in the scheduling
context, including intuitions and meaning can be found in a survey
by Pruhs~\cite{Pruhs07}.

\subsubsection{Asymptotic Ratio and Additive Constant}

In some
problems of discrete nature, such as bin packing or various coloring problems,
it appears that the standard notion of competitive analysis is too restrictive.
The problem is that in order to attain competitive ratio relatively close to $1$ (or even any ratio),
an online algorithm must behave in a predictable way when the
current optimum value is still small, which makes the algorithms
more or less trivial and the ratio somewhat large.  To remedy this, the
``asymptotic competitive ratio'' is often considered, which means essentially
that only instances with a sufficiently large optimum value are considered.
This is often captured by stating that an algorithm is $R$-competitive if
(in our convention) there exists a constant $c$ such that
$R \cdot \ALG(I) + c \geq \OPT(I)$ holds for every instance $I$.
The constant $c$ is typically required not to depend on the class of instances
considered, which makes sense for aforementioned problems where the optimum
value corresponds to the number of bins or colors used, but is still sometimes too
restrictive.

This is the case in our problem. Specifically, using an example we show that
a deterministic algorithm running at speed $1$ can be (constant) competitive
only if the additive term in the definition of the competitive ratio depends on the
values of the packet sizes, even if there are only two packet sizes.
Suppose that a packet of size $\ell$
arrives at time $0$. If the algorithm starts transmitting it immediately 
at time $0$, then at time $\varepsilon > 0$ a packet of size $\ell-2\epsilon$ arrives,
the next fault is at time $\ell-\varepsilon$ and then the schedule ends,
i.e., the time horizon is at $T = \ell-\varepsilon$.
Thus the algorithm does not complete the packet of size $\ell$, while 
the adversary completes a slightly smaller packet of size $\ell-2\epsilon$.
Otherwise, the algorithm is idle till some time $\varepsilon > 0$,
no other packet arrives and the next fault is at time $\ell$, which is also the time horizon.
In this case, the packet of size $\ell$ is completed in the optimal schedule,
while the algorithm completes no packet again.

\subsubsection{Resource Augmentation}

Last but not least, some problems do not admit competitive algorithms at all
or yield counterintuitive results.  Again, our problem is an example of the
former kind if no additive constant depending on packet sizes is allowed
(cf.~aforementioned example), whereas the latter can be observed in the paging
problem, where the optimum ratio equals the cache size, seemingly suggesting
that the larger the cache size, the worse the performance, regardless of the
caching policy.  Perhaps for this reason, already Sleator and Tarjan~\cite{SleatorT85} considered
\emph{resource augmentation} for paging problem, comparing an online algorithm
with cache capacity $k$ to the optimum with cache capacity $h \leq k$.
Yet again, they were ahead of the time: resource augmentation was re-introduced
and popularized in scheduling problems
by Kalyanasundaram and Pruhs~\cite{speed-clairvoyance}, and the name itself
coined by Phillips \etal~\cite{Phillips-time-critical}, who also considered
scheduling.

We give a brief overview of some of the work on resource augmentation
in online scheduling, focusing on interesting open problems.
As mentioned before, Kalyanasundaram and Pruhs~\cite{speed-clairvoyance}
introduced resource augmentation. Among other
results they proved that a constant competitive ratio is possible with
a constant speedup for a preemptive variant of real-time scheduling
where each job has a release time, deadline, processing time and a
weight and the objective is to maximize the weight of jobs completed
by their deadlines on a single machine.  Subsequently resource
augmentation was applied in various scenarios.  Of the most relevant
for us are those that considered algorithms with speedup that are
1-competitive, i.e., as good as the optimum. We mention two models
that still contain interesting open problems.

For real-time scheduling, Phillips \etal~\cite{Phillips-time-critical}
considered the underloaded case in
which there exists a schedule that completes all the jobs. It is easy
to see that on a single machine, the \emph{Earliest-Deadline First} (EDF)
algorithm is then an optimal online algorithm.  Phillips
\etal~\cite{Phillips-time-critical} proved that EDF on $m$ machines is
1-competitive with speedup $2-1/m$. (Here the weights are not
relevant.) Intriguingly, finding a 1-competitive algorithm with
minimal speedup for $m>1$ is wide open: It is known that speedup at least $6/5$
is necessary, it has been conjectured that speedup $e/(e-1)$ is
sufficient, but the best upper bound proven is $2-2/(m+1)$
from~\cite{Lam-tradeoffs}. See Schewior~\cite{Schewior-thesis} for
more on this problem.

Later these results were extended to real-time scheduling of
overloaded systems, where for uniform density (i.e., weight equal to
processing time) Lam \etal~\cite{Lam-EDF-overload} have shown that a
variant of EDF with admission control is 1-competitive with speedup 2
on a single machine and with speedup 3 on more machines. For
non-uniform densities, the necessary speedup is a constant if each job
is tight (its deadline equals its release time plus its processing
time)~\cite{Koo-tight-deadlines}. Without this restriction it is
no longer constant, depending on the ratio $\xi$ of the maximum and minimum
weight. It is known that it is at least $\Omega(\log\log \xi)$ and at
most $\Oh(\log \xi)$~\cite{Chrobak-overloaded,Lam-EDF-overload}; as far
as we are aware, closing this gap is still an open problem.

\subsection{Previous and Related Results}

The model was introduced by Anta \etal~\cite{Anta13}, who resolved it
for two packet sizes: If $\gamma>1$ denotes the ratio of the two sizes,
then the optimum competitive ratio for deterministic algorithms is 
$(\gamma+\lfloor\gamma\rfloor) / \lfloor\gamma\rfloor$,
which is always in the range $[2,3)$.  
This result was extended by Jurdzinski~\etal~\cite{JurdzinskiKL13},
who proved that the optimum ratio for the case of multiple (though fixed)
packet sizes is given by the same formula for the two packet sizes which
maximize it. 

Moreover, Jurdzinski~\etal~\cite{JurdzinskiKL13} gave further
results for \emph{divisible} packet sizes, i.e., instances in which every packet
size divides every larger packet size.  In particular,
they proved that on such instances speed $2$ is sufficient for $1$-competitiveness
in the resource augmentation setting.
(Note that the above formula for the optimal competitive ratio without speedup
gives $2$ for divisible instances.)

In another work, Anta \etal~\cite{AntaGKZ15} consider popular scheduling algorithms
and analyze their performance under speed augmentation with respect to three efficiency
measures, which they call \emph{completed load}, \emph{pending load}, and \emph{latency}.
The first is precisely the objective that we aim to maximize,
the second is the total size of the available but not yet completed packets
(which we minimize in turn), and finally, the last one is
the maximum time elapsed from a packet's arrival till the end of its successful transmission.
We note that a $1$-competitive algorithm (possibly with an additive constant) for any 
of the first two objectives is also $1$-competitive for the other, but there is no
similar relation for larger ratios. 

We note that Anta \etal~\cite{Anta13} demonstrate the necessity of
instantaneous error feedback by proving that discovering errors upon completed
transmission rules out a constant competitive ratio. They also provide
improved results for a stochastic online setting.

Recently, Kowalski, Wong, and Zavou~\cite{kowalski_fault_tolerant_2_sizes_17}
studied the effect of speedup on latency and pending load objectives
in the case of two packet sizes only. They use two conditions on the speedup, defined in~\cite{Anta13-dual},
and show that if both hold, then there is no
$1$-competitive deterministic algorithm for either objective (but speedup must be below $2$),
while if one of the conditions is not satisfied, such an algorithm exists.

\subsubsection{Multiple Channels or Machines}

The problem we study was generalized to multiple communication channels, 
machines, or processors, depending on particular application.  The standard
assumption, in communication parlance, is that the jamming errors on each
channel are independent, and that any packet can be transmitted on at most
one channel at any time.

For divisible instances, Jurdzinski~\etal~\cite{JurdzinskiKL13} extended their
(optimal) $2$-compe\-ti\-ti\-ve algorithm to an arbitrary number $f$ of channels.
The same setting, without the restriction to divisible instances was studied
by Anta \etal~\cite{Anta13-dual}, who consider the objectives of minimizing
the number or the total size of pending (i.e., not yet transmitted) packets.
They investigate what speedup is necessary and sufficient for $1$-competitiveness
with respect to either objective.  Recall that $1$-competitiveness for minimizing
the total size of pending packets is equivalent to $1$-competitiveness for our objective of maximizing the total
size of completed packets.  In particular, for either objective, Anta \etal~\cite{Anta13-dual}
obtain a tight bound of 2 on speedup for 1-competitiveness for two packet sizes.
Moreover, they claim a $1$-competitive algorithm with speedup $7/2$
for a constant number of sizes and pending (or completed) load,
but the proof is incorrect; see Section~\ref{sec:examples} for a (single-channel)
counterexample. 

Georgio \etal~\cite{GeorgiouK15} consider the same problem in a distributed
setting, distinguishing between different information models.  As communication
and synchronization pose new challenges, they restrict their attention to
jobs of unit size only and no speedup.  On top of efficiency measured by
the number of pending jobs, they also consider the standard (in distributed systems)
notions of correctness and fairness.

Finally, Garncarek~\etal~\cite{GarncarekJL17} consider ``synchronized'' parallel
channels that all suffer errors at the same time.
Their work distinguishes between ``regular'' jamming errors and 
``crashes'', which also cause the algorithm's state to reset, losing
any information stored about the past events.  They proved that for two packet sizes
the optimum ratio tends to $4/3$ for the former and to
$\phi = (\sqrt{5}+1)/2 \approx 1.618$ for the latter setting as the number $f$ of 
channels tends to infinity.

\subsubsection{Randomization}

All aforementioned results, as well as our work, concern deterministic algorithms.
In general, randomization often allows an improved competitive ratio, but while
the idea is simply to replace algorithm's cost or profit with its expectation
in the competitive ratio, the latter's proper definition is subtle:
One may consider the adversary's ``strategies'' for creating
and solving an instance separately, possibly limiting their powers.
Formal considerations lead to more than one \emph{adversary model},
which may be confusing.  As a case in point, Anta \etal~\cite{Anta13} note that
their lower bound strategy for two sizes (in our model) applies to randomized algorithms as well,
which would imply that randomization provides no advantage. 
However, their argument only applies to the adaptive adversary model,
which means that in the lower bound strategy
the adversary needs to make decisions based on the previous behavior of the algorithm
that depends on random bits.
To our best knowledge, randomized algorithms for our problem were never considered
for the more common oblivious adversary model, where the adversary needs to fix
the instance in advance and cannot change it according to the decisions
of the algorithm.  For more details and formal definitions of these
adversary models, refer to the article that first distinguished them~\cite{Ben-DavidBKTW94}
or the textbook on online algorithms~\cite{BEY}.

\subsection{Our Results}

Our major contribution is a uniform algorithm, called \emph{PrudentGreedy} (\MAIN),
described in Section~\ref{sec:alg}, that works well in every setting,
together with a uniform analysis framework (in Section~\ref{sec:local}).
This contrasts with the results of Jurdzinski~\etal~\cite{JurdzinskiKL13},
where each upper bound was attained by a dedicated algorithm with independently
crafted analysis; in a sense, this means that their algorithms require
the knowledge of speed they are running at.  Moreover, algorithms in~\cite{JurdzinskiKL13} do
require the knowledge of all admissible packet sizes.  Our algorithm
has the advantage that it is completely oblivious, i.e., requires no
such knowledge.  Furthermore, our algorithm is more appealing as it is
significantly simpler and ``work-conserving'' or ``busy'', i.e., transmitting
some packet whenever there is one pending, which is desirable in practice.
In contrast, algorithms in~\cite{JurdzinskiKL13} can be unnecessarily idle
if there is a small number of pending packets.

Our main result concerns the analysis of the general case
with speedup where we show that speed $4$ is sufficient for
our algorithm \MAIN to be $1$-competitive; the proof is by a 
complex (non-local) charging argument described in Section~\ref{sec:four}.

However, we start by formulating a simpler local analysis framework in Section~\ref{sec:local},
which is very universal as we demonstrate by applying it to several settings.
In particular, we prove that on general instances \MAIN achieves
the optimal competitive ratio of~$3$ without speedup and we also get a trade-off
between the competitive ratio and the speedup for speeds in $[1,4)$;
see Figure~\ref{fig:speedVsCR} for a graph of our bounds
on the competitive ratio depending on the speedup.

To recover the $1$-competitiveness at speed $2$ and also $2$-competitiveness at speed $1$
for divisible instances, we have to modify our algorithm slightly as otherwise,
we can guarantee $1$-competitiveness for divisible instances
only at speed $2.5$ (Section~\ref{sec:main-divisible}).  This is to be
expected as divisible instances are a very special case.
The definition of the modified algorithm for divisible instances and its analysis
by our local analysis framework is in Section~\ref{sec:div}.

On the other hand, we prove that our original algorithm is $1$-competitive on far broader
class of ``well-separated'' instances at sufficient speed: If the ratio between 
two successive packet sizes (in their sorted list) is no smaller than
$\alpha \geq 1$, our algorithm is $1$-competitive if its speed is at least
$S_\alpha$ which is a non-increasing function of $\alpha$ such that $S_1=6$ and
$\lim_{\alpha \to \infty} S_\alpha = 2$; see Section~\ref{sec:wellSeparated}
for the precise definition of $S_\alpha$.
(Note that speed $4$ is sufficient for $1$-competitiveness, but having $S_1=6$
reflects the limits of the local analysis.)

In Section~\ref{sec:examples} we demonstrate that
the analyses of our algorithm are mostly tight, i.e., that (a) on general 
instances, the algorithm is no better than $(1+2/s)$-competitive for $s < 2$
and no better than $4/s$-competitive for $s\in [2,4)$,
(b) on divisible instances, it is no better than $4/3$-competitive for 
$s<2.5$, and (c) it is at least $2$-competitive for $s < 2$, even for
two divisible packet sizes (example (c) is in Section~\ref{sec:example-2sizes}).
See Figure~\ref{fig:speedVsCR} for a graph of our bounds.
Note that we do not obtain tight bounds for $s\in [2,4)$,
but we conjecture that using an appropriately adjusted non-local
analysis of Theorem~\ref{thm:four} (which shows $1$-competitiveness for $s=4$),
it is possible to show that the algorithm is $4/s$-competitive for $s\in [2,4)$.

\begin{figure}
	\begin{center}
		\input{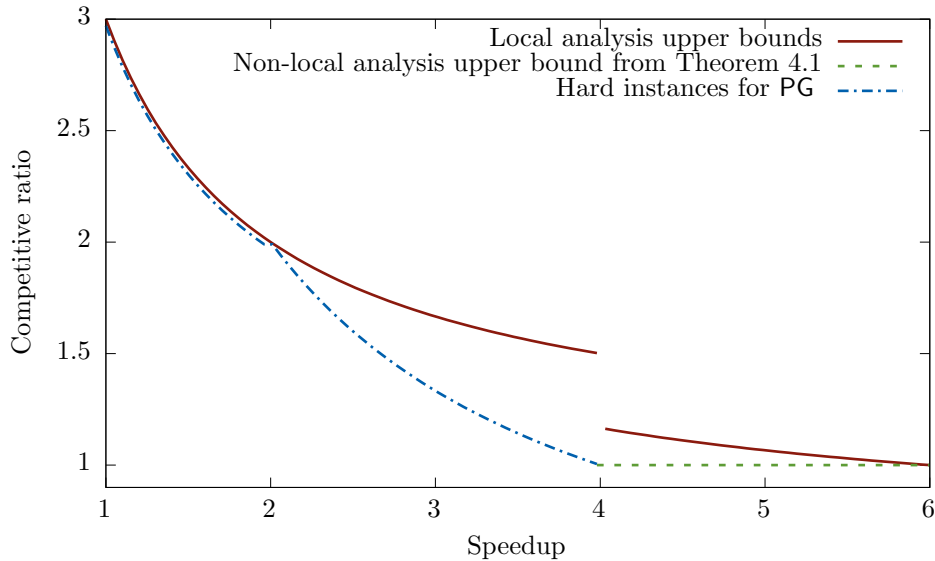}
	\end{center}
	\caption{A graph of our upper and lower bounds on the competitive ratio of algorithm $\MAINs$,
		depending on the speedup $s$.
		The upper bounds are from Theorems~\ref{thm:gen} and~\ref{thm:four} and the lower
		bounds are by hard instances from Section~\ref{sec:examples}.
	} 
	\label{fig:speedVsCR}
\end{figure}

In Section~\ref{sec:LBs} we complement these results with two lower bounds on the speed that is sufficient
to achieve $1$-competitiveness by a deterministic algorithm.
The first one proves that even for two divisible packet sizes, speed $2$ is 
required to attain $1$-competitiveness, establishing optimality of our modified
algorithm and that of Jurdzinski~\etal~\cite{JurdzinskiKL13} for the divisible case.
The second lower bound strengthens the previous construction by showing that for non-divisible
instances with more packet sizes, speed $\phi+1 \approx 2.618$ is needed
for $1$-competitiveness.  Both results hold even if all packets are released
simultaneously.

We remark that Sections~\ref{sec:local}, \ref{sec:four}, and~\ref{sec:LBs}
are independent on each other and can be read in any order.
In particular, the reader may safely skip proofs for 
special instances in Section~\ref{sec:local} (e.g., the divisible instances),
and proceed to Section~\ref{sec:four} with the main result,
which is $1$-competitiveness with speedup $4$.

\section{Algorithms, Preliminaries, Notations}\label{sec:prelims}

We start by some notations.  We assume there are $k$ distinct non-zero
packet sizes denoted by $\ell_i$ and ordered so that
$\ell_1<\cdots<\ell_k$. For convenience, we define $\ell_0=0$.  We say
that the packet sizes are divisible if $\ell_i$ divides $\ell_{i+1}$
for all $i=1,\ldots,k-1$. For a packet $p$, let $\ell(p)$ denote the
size of $p$.  For a set of packets $P$, let $\ell(P)$ denote the total
size of all the packets in $P$.

During the run of an algorithm, at time $t$, a packet is pending if it
is released before or at $t$, not completed before or at $t$ and not started before $t$ and still running. At time $t$,
if no packet is running, the algorithm may start any pending packet.
As a convention of our model, if a fault (jamming error) happens at
time $t$ and this is the completion time of a previously scheduled
packet, this packet is considered completed. Also, at the fault time,
the algorithm may start a new packet.

Let $L_{\ALG}(i,Y)$ denote the total size of packets of size $\ell_i$
completed by an algorithm \ALG\ during a time interval $Y$.
Similarly, $L_{\ALG}(\unarygeq i,Y)$ (resp. $L_{\ALG}(\unarylt i,Y)$) denotes the
total size of packets of size at least $\ell_i$ (resp. less than
$\ell_i$) completed by an algorithm \ALG\ during a time interval $Y$;
formally we define $L_{\ALG}(\unarygeq i,Y)=\sum_{j=i}^k L_{\ALG}(j,Y)$ and
$L_{\ALG}(\unarylt i,Y)=\sum_{j=1}^{i-1} L_{\ALG}(j,Y)$. We use the notation
$L_{\ALG}(Y)$ with a single parameter to denote the size
$L_{\ALG}(\unarygeq 1,Y)$ of packets of all sizes completed by \ALG during
$Y$ and the notation $L_\ALG$ without parameters to denote the size of
all packets of all sizes completed by \ALG at any time.

By convention, the schedule starts at time $0$ and ends at time $T$,
which is a part of the instance unknown to an online algorithm until
it is reached. (This is similar to the times of jamming errors, one
can also alternatively say that after $T$ the errors are so frequent
that no packet is completed.) Algorithm \ALG is called
$R$-competitive, if there exists a constant $A$, possibly dependent on
$k$ and $\ell_1$, \ldots, $\ell_k$, such that for any instance and its
optimal schedule $\OPT$ we have
$L_{\OPT}\leq R \cdot L_\ALG+A$.
We remark that in our analyses we show only a crude bound on $A$.

We denote the algorithm {\ALG} with speedup $s\geq 1$ by $\ALGs$.  The
meaning is that in $ALG(s)$, packets of size $L$ need time $L/s$ to
process. In the resource-augmentation variant, we are mainly interested in
finding the smallest $s$ such that $ALG(s)$ is $1$-competitive, compared
to $\OPT=\OPT(1)$ that runs at speed $1$. 

\subsection{Algorithm PrudentGreedy (\MAIN)}\label{sec:alg}

The general idea of the algorithm is that after each error, we start
by transmitting packets of small sizes, only increasing the size of
packets after a sufficiently long period of uninterrupted
transmissions. It turns out that the right tradeoff is to transmit a
packet only if it would have been transmitted successfully if started
just after the last error. It is also crucial that the initial packet
after each error has the right size, namely to ignore small
packet sizes if the total size of remaining packets of those
sizes is small compared to a larger packet that can be
transmitted. In other words, the size of the first transmitted
packet is larger than the \textit{total} size of all pending smaller packets
and we choose the largest such size.
This guarantees that if no error occurs, all currently pending packets with
size equal to or larger than the size of the initial packet are
eventually transmitted before the algorithm starts a smaller packet.

We now give the description of our algorithm \emph{PrudentGreedy} (\MAIN) for general packet
sizes, noting that the other algorithm for divisible sizes differs only
slightly.  We divide the run of the algorithm into phases.
Each phase starts by an invocation of
the initial step in which we need to carefully select a packet to
transmit as discussed above. The phase ends by a fault, or when there
is no pending packet, or when there are pending packets only of sizes
larger than the total size of packets completed in the current
phase.  The periods of idle time, when no packet is pending,
do not belong to any phase.

Formally, throughout the algorithm, $t$ denotes the current time. The
time $t_B$ denotes the start of the current phase; initially $t_B =
0$.  We set $\rel(t)=s\cdot (t-t_B)$. Since the algorithm does not
insert unnecessary idle time, $\rel(t)$ denotes the amount of
transmitted packets in the current phase. Note that we use $\rel(t)$
only when there is no packet running at time $t$, so there is no
partially executed packet. Thus $\rel(t)$ can be thought of as a
measure of time relative to the start of the current phase (scaled by
the speed of the algorithm). Note also that the algorithm can evaluate
$\rel(t)$ without knowing the speedup, as it can simply observe the
total size of the transmitted packets.  Let $P^{<i}$ denote the set of
pending packets of sizes $\ell_1$, \ldots, $\ell_{i-1}$ at any given
time.


\begin{center}
\fbox{\parbox{0.9\textwidth}{
{\bf Algorithm PrudentGreedy (\MAIN)}           
\begin{enumerate}[label=(\arabic*), nosep]
\item 
If no packet is pending, stay idle until the next release time.
\item
Let $i$ be the maximal $i \leq k$ such that there is a pending packet
of size $\ell_i$ and $\ell(P^{<i})<\ell_i$.  Schedule a packet of size
$\ell_i$ and set $t_B = t$.
\item
  Choose the maximum $i$ such that
\\
\mbox{}
\quad
(i) there is a pending packet of size $\ell_i$, 
\\
\mbox{}
\quad
(ii) $\ell_i \leq \rel(t)$.
\\
Schedule a packet of size $\ell_i$.  Repeat Step (3) as long as such $i$
exists.
\item
If no packet satisfies the condition in Step (3), go to Step (1). 
\end{enumerate}
}}
\end{center}

We first note that the algorithm is well-defined, i.e., that it is always
able to choose a packet in Step~(2) if it has any packets pending, and
that if it succeeds in sending it, the length of thus started phase can
be related to the total size of the packets completed in it.

\begin{lemma}\label{lem:simple} 
In Step~(2), \MAIN always chooses some packet if it has any pending.
Moreover, if \MAIN completes the first packet in the phase, then
$L_\MAINs((t_B,t_E])>s\cdot (t_E-t_B)/2$, where $t_B$ denotes the start of the phase
and $t_E$ its end (by a fault or Step~(4)).
\end{lemma}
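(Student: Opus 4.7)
The lemma makes two essentially independent claims, which I would prove separately.

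For the first claim (that Step~(2) always finds a packet when something is pending), I would simply exhibit one valid index. Let $i_0$ be the smallest $j$ such that a packet of size $\ell_j$ is currently pending. Then by minimality $P^{<i_0}$ is empty, so $\ell(P^{<i_0})=0<\ell_{i_0}$, i.e., $i_0$ satisfies both requirements in Step~(2). Hence a maximum valid $i$ exists.

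For the progress bound I would case-split on how the phase ends. Let $p_1$ be the size-$\ell_i$ packet started by Step~(2) at time $t_B$, which by hypothesis completes at time $t_B+\ell_i/s$. The crucial invariant is that every subsequent packet $p$ is started by Step~(3) at some time $t_s$ satisfying $\ell(p)\leq\rel(t_s)=s\,(t_s-t_B)$, because Step~(3) requires condition (ii). If the phase ends without any fault (because there are no pending packets or because no pending size satisfies the Step~(3) condition), then every packet scheduled in the phase was completed, the channel was busy throughout $(t_B,t_E]$, and
\[
L_\MAINs((t_B,t_E]) \;=\; s\,(t_E-t_B) \;>\; \tfrac{1}{2}\,s\,(t_E-t_B).
\]

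If instead the phase ends with a fault on some packet $p$ started at time $t_s$, then $p\neq p_1$ by hypothesis, so $p$ was picked by Step~(3) and hence $\ell(p)\leq s\,(t_s-t_B)$. The fault must occur strictly before $p$ finishes, so $t_E-t_s<\ell(p)/s\leq t_s-t_B$; in particular the failed suffix of the phase is strictly shorter than its successful prefix, giving $t_E-t_B<2(t_s-t_B)$. Since every packet started before $t_s$ was completed, $L_\MAINs((t_B,t_E])=s\,(t_s-t_B)>s\,(t_E-t_B)/2$, as required.

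There is no real technical obstacle: the argument rests entirely on the design invariant that Step~(3) only starts a packet $p$ after the algorithm has already completed at least $\ell(p)$ worth of transmissions in the current phase. This automatically caps the duration of a single terminal failed transmission by the duration of the successful prefix, which is exactly what the factor $1/2$ in the lemma encodes.
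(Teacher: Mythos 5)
Your proof is correct and follows essentially the same approach as the paper's. For the first claim, both you and the paper observe that the smallest pending size is always eligible in Step~(2) because the set of smaller pending packets is empty. For the second claim, the paper's one-sentence argument---that only the last packet can be jammed and, by the Step~(3) condition, its size is at most the total size already completed in the phase---is exactly the invariant your case split makes explicit: the failed suffix of the phase is strictly shorter than the successful prefix. Your no-fault case and the fault-at-exact-completion case (which you fold into the first) are handled implicitly in the paper by the same inequality with the jammed-packet term equal to zero.
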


\begin{proof}
For the first property, note that a pending packet of the smallest size is
eligible.
For the second property, note that there is no idle time in the phase,
and that only the last packet chosen by \MAIN in the phase may not complete
due to a jam.  By the condition in Step~(3), the size of this jammed packet is
no larger than the total size of all the packets \MAIN previously completed
in this phase (including the first packet chosen in Step~(2)), which yields the bound.
\end{proof}

The following lemma shows a crucial property of the algorithm, namely
that if packets of size $\ell_i$ are pending, the algorithm schedules
packets of size at least $\ell_i$ most of the time. Its proof also
explains the reasons behind our choice of the first packet in a phase
in Step (2) of the algorithm.
%

\begin{lemma}
\label{l:main}
Let $u$ be a start of a phase in $\MAINs$ and $t=u+\ell_i/s$.
\begin{enumerate}[label=\rm(\roman*), nosep]
\item
If a packet of size $\ell_i$ is pending at time $u$ and no fault
occurs in $(u,t)$, then the phase does not end before $t$.
\item
Suppose that $v>u$ is such that any time in $[u,v)$ a packet of size
$\ell_i$ is pending and no fault occurs. Then the phase 
does not end in $(u,v)$ and $L_\MAINs(<i,(u,v])<
\ell_i+\ell_{i-1}$. (Recall that $\ell_0=0$.)
\end{enumerate}
\end{lemma}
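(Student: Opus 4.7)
My plan is to trace what algorithm $\MAINs$ schedules during the phase starting at $u$ and use a short work-accounting argument. Let $i'$ denote the index chosen by Step~(2) at time $u$, so the phase begins with a packet of size $\ell_{i'}$. If $i' \ge i$, part~(i) is immediate, since the first packet occupies the channel for at least $\ell_i/s$ time; for part~(ii), after time $u+\ell_i/s$ one has $\rel \ge \ell_i$, and together with the standing hypothesis that a size-$\ell_i$ packet is always pending this forces Step~(3) to keep scheduling packets of size $\ge \ell_i$. Moreover no size-$<\ell_i$ packet is ever scheduled in the phase, making the bound trivial. The interesting case is $i' < i$.

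Assume $i' < i$. Two consequences of Step~(2)'s maximality will be used: first, since a size-$\ell_i$ packet is pending at $u$, the value of $\ell(P^{<i})$ at time $u$ is at least $\ell_i$; more generally, for every $j > i'$ with a pending size-$\ell_j$ packet at $u$ one has $\ell(P^{<j}) \ge \ell_j$. Second, throughout $(u, u+\ell_i/s)$ we have $\rel < \ell_i$, so Step~(2) (which picked $\ell_{i'} < \ell_i$) and Step~(3) (which requires $\ell_j \le \rel$) can only schedule packets of size $<\ell_i$; hence the total work done in this subinterval equals $\rel$. Now, for contradiction, suppose the phase ends at some $t' \in (u, u+\ell_i/s)$. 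Since no fault occurs, this must happen via Step~(4), so every pending packet at $t'$ has size greater than $\rel(t')$. Let $\ell_{j_0}$ be the smallest pending size at $t'$. If $j_0 \le i'$, then $\rel(t') \ge \ell_{i'} \ge \ell_{j_0}$ because the initial packet has already completed, a contradiction. If $j_0 > i'$, the maximality above gives $\ell(P^{<j_0}) \ge \ell_{j_0}$ at time $u$, but the value of $\ell(P^{<j_0})$ at $t'$ is $0$ (no pending of size $<\ell_{j_0}$ at $t'$), so the algorithm has completed at least $\ell_{j_0}$ units of work on packets of size $<\ell_{j_0}$, exceeding the total phase work $\rel(t') < \ell_{j_0}$, again a contradiction. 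This proves part~(i), and the same argument shows the phase cannot end inside $(u, u+\ell_i/s)$ in part~(ii); for $t' \in [u+\ell_i/s, v)$ non-termination is easier, since $\rel(t') \ge \ell_i$ together with the pending size-$\ell_i$ packet ensures that Step~(3) always finds an eligible packet, and no fault occurs by hypothesis.

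For the bound on $L_\MAINs(<i,(u,v])$, the key observation is that any packet of size $<\ell_i$ scheduled during the phase must be scheduled at a time with $\rel<\ell_i$: otherwise Step~(3), seeing the pending size-$\ell_i$ packet, would pick a size $\ge \ell_i$. Hence all size-$<\ell_i$ packets scheduled in the phase start in $[u, u+\ell_i/s)$. If no such packet is scheduled (e.g., when $i' \ge i$) the bound is trivial; otherwise let $t_m$ be the start time of the last such packet. Its size is at most $\ell_{i-1}$, and the total size of packets scheduled strictly before $t_m$ in the phase equals $\rel(t_m) < \ell_i$ (and when $i' < i$ these are precisely all the earlier size-$<\ell_i$ packets). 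Summing yields a total strictly less than $\ell_i + \ell_{i-1}$; since no fault occurs, every scheduled packet completes, which upper-bounds $L_\MAINs(<i,(u,v])$. I expect the main subtlety to be the $j_0 > i'$ subcase of the work accounting, where what needs to be extracted from Step~(2)'s maximality is the invariant that the choice of $i'$ leaves enough pending small packets to feed Step~(3) until $\rel$ reaches $\ell_i$.
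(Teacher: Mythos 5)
Your proof follows essentially the same route as the paper's: you case-split on the index $i'$ chosen in Step~(2), use the maximality of Step~(2)'s choice to argue that enough small work is pending, and derive a contradiction via the work-accounting $\rel(t') < \ell_{j_0}$; part~(ii) is then the same ``packets of size $< \ell_i$ can only start while $\rel < \ell_i$'' argument.

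There is one genuine (though easily repaired) gap in the $j_0 > i'$ subcase of part~(i). Your stated consequence of Step~(2)'s maximality is ``for every $j > i'$ \emph{with a pending size-$\ell_j$ packet at $u$}, $\ell(P^{<j}) \ge \ell_j$,'' but you then apply it to $j_0$ without checking that a size-$\ell_{j_0}$ packet was pending at $u$. That need not hold: $\ell_{j_0}$ is the smallest size pending at $t'$, and such a packet may have been released strictly after $u$. The paper flags exactly this issue (``Note that it is possible that no packet of size $\ell_j$ is pending at $u$''). The fix is to apply your observation to $\ell_m$, the smallest size $\ge \ell_{j_0}$ pending at $u$; $m$ exists because the size-$\ell_i$ packet pending at $u$ is still pending at $t'$ (only sizes $<\ell_i$ are ever run while $\rel < \ell_i$), so $j_0 \le i$. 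Then $\ell_m \ge \ell_{j_0} > \rel(t') \ge \ell_{i'}$ gives $m > i'$, and since no size in $[\ell_{j_0},\ell_m)$ is pending at $u$, $P^{<m}$ and $P^{<j_0}$ coincide at $u$, yielding $\ell(P^{<j_0}) \ge \ell_m \ge \ell_{j_0}$ as you wanted. With that patch the argument is complete and matches the paper's.
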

\begin{proof}
(i) Suppose for a contradiction that the phase started at $u$ ends at
time $t'<t$. We have $\rel(t')<\rel(t)=\ell_i$.  Let $\ell_j$ be the
smallest packet size among the packets pending at $t'$. As there is
no fault, the reason for a new phase has to be that
$\rel(t')<\ell_j$, and thus Step (3) did not choose a packet to be
scheduled. Also note that any packet started before $t'$ was
completed. This implies, first, that there is a pending packet of
size $\ell_i$, as there was one at time $u$ and there was
insufficient time to complete it, so $j$ is well-defined and $j\le i$.  Second, all
packets of sizes smaller than $\ell_j$ pending at $u$ were completed
before $t'$, so their total size is at most
$\rel(t')<\ell_j$. However, this contradicts the fact that the phase
started by a packet smaller than $\ell_j$ at time $u$, as a pending packet
of the smallest size equal to or larger than $\ell_j$ satisfied the
condition in Step (2) at time $u$ and a packet of size $\ell_i$ is
pending at $u$. (Note that it is possible that no packet of
size $\ell_j$ is pending at $u$.)

(ii) By (i), the phase that started at $u$ does not end before time $t$ if
no fault happens. A packet of size $\ell_i$ is always pending by the
assumption of the lemma, and it is always a valid choice of a packet
in Step (3) from time $t$ on. Thus, the phase that started at $u$
does not end in $(u,v)$, and moreover only packets of sizes at least
$\ell_i$ are started in $[t,v)$. It follows that packets of sizes
smaller than $\ell_i$ are started only before time $t$ and their
total size is thus less than
$\rel(t)+\ell_{i-1}=\ell_i+\ell_{i-1}$. The lemma follows.
\end{proof}

\section{Local Analysis and Results}
\label{sec:local}

In this section we formulate a general method for analyzing our
algorithms by comparing locally within each phase the size of
``large'' packets completed by the algorithm and by the
adversary. This method simplifies a complicated induction used 
in~\cite{JurdzinskiKL13}, letting us obtain the same upper bounds of $2$ and $3$ 
on competitiveness for divisible and unrestricted packet sizes, respectively, at 
no speedup, as well as several new results for the non-divisible
cases included in this section.  In Section~\ref{sec:four},
we use a more complex charging scheme to obtain our main result.
We postpone the use of local analysis for the divisible case to
Section~\ref{sec:div}.

For the analysis, 
let $s\geq 1$ be the speedup. We fix an instance and its
schedules for $\MAINs$ and \OPT.

\subsection{Critical Times and Master Theorem}
\label{sec:master}

The common scheme is the following. We carefully
define a sequence of critical times $C_k\leq C_{k-1}\leq
\cdots \leq C_1\leq C_0$, where $C_0=T$ is the end of the schedule,
satisfying two properties: (1) till time $C_i$ the algorithm has completed
almost all pending packets of size $\ell_i$
released before $C_i$ and (2) in $(C_i,C_{i-1}]$, a packet of size $\ell_i$ is always
pending. Properties~(1) and~(2) allow us to relate $L_\OPT(i,(0,C_i])$
and $L_\OPT(\geq i,(C_i,C_{i-1}])$, respectively, to their ``\MAIN\ counterparts''.
As each packet completed by \OPT\ belongs to exactly one of these sets,
summing the bounds gives the desired results; see Figure~\ref{fig:local} for an illustration.
These two facts together imply
$R$-competitiveness of the algorithm for appropriate $R$ and speed
$s$.

We first define the notion of $i$-good times so that they satisfy
property~(1), and then choose the critical times among their suprema so that
those satisfy property~(2) as well.
\begin{definition}
  \label{def:critical}
Let $s\geq 1$
be the speedup. For $i=1,\ldots k$, time $t$ is called $i$-good if one
of the following conditions holds:
\begin{enumerate}[label=(\roman*), nosep]
\item
At time $t$, algorithm $\MAINs$ starts a new phase by scheduling a packet of size
larger than $\ell_i$, or
\item
at time $t$, no packet of size $\ell_i$ is pending for $\MAINs$, or
\item
$t=0$.
\end{enumerate}
\noindent
We define critical times $C_0, C_1,\ldots,C_k$ iteratively as follows:
\begin{itemize}[nosep]
\item
$C_0=T$, i.e., it is the end of the schedule. 
\item
For $i=1,\ldots,k$, $C_i$ is the supremum of $i$-good times $t$ such that
$t\leq C_{i-1}$.
\end{itemize}
\end{definition}

Note that all $C_i$'s are defined and $C_i\geq0$, as time $t=0$ is
$i$-good for all $i$.
The choice of $C_i$ implies that each $C_i$ is of one of the three
types (the types are not disjoint):
\begin{itemize}[nosep]
\item
  $C_i$ is $i$-good and a phase starts at $C_i$ (this includes $C_i=0$),
\item
  $C_i$ is $i$-good and $C_i=C_{i-1}$, or
\item
  there exists a packet of size $\ell_i$
  pending at $C_i$, however, any such packet was released at $C_i$.
\end{itemize}
If the first two options do not apply, then the last one 
is the only remaining possibility (as otherwise some time
in the non-empty interval $(C_i,C_{i-1}]$ would be $i$-good); in this case,
$C_i$ is not $i$-good, but only the supremum of $i$-good times.

First we bound the total size of packets of size $\ell_i$ completed before
$C_i$; the proof actually only uses the fact that each $C_i$ is the
supremum of $i$-good times and justifies the definition above.
\begin{lemma}
\label{l:master-small}
Let $s\geq 1$ be the speedup. Then, for any $i$, it holds
$L_{\OPT}(i,(0,C_i])\leq L_{\MAINs}(i,(0,C_i])+\ell_k$.
\end{lemma}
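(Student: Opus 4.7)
The plan is to bound $L_{\OPT}(i,(0,C_i])$ by the total size of $\ell_i$-packets released sufficiently long before $C_i$, and then match this size to MAIN's completed volume at a nearby $i$-good time. Since $\OPT$ runs at speed~$1$ and each $\ell_i$-packet requires exactly $\ell_i$ time to transmit, every $\ell_i$-packet that $\OPT$ completes in $(0,C_i]$ must have been released by time $C_i-\ell_i$. Writing $R_i(t)$ for the total size of $\ell_i$-packets released in $[0,t]$, this observation gives $L_{\OPT}(i,(0,C_i])\le R_i(C_i-\ell_i)$.

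The main step is to show that at any $i$-good time $t$, the excess $R_i(t)-L_{\MAINs}(i,(0,t])$, which equals the total size of $\ell_i$-packets MAIN has seen released but not yet completed (i.e., $\ell_i$-packets currently pending or currently running in MAIN's schedule at $t$), is at most $\ell_k$. I would verify this by the three defining conditions of an $i$-good time. In case~(iii), $t=0$ and nothing has been released, so the excess is $0$. In case~(ii), no $\ell_i$-packet is pending at $t$, so only a possibly currently running $\ell_i$-packet can contribute, yielding an excess of at most $\ell_i\le\ell_k$. In case~(i), $\MAINs$ starts a new phase at $t$ with a packet of size $\ell_j>\ell_i$, so by the selection rule in Step~(2), $\ell(P^{<j})<\ell_j$; the pending $\ell_i$-packets, being contained in $P^{<j}$, therefore have total size strictly less than $\ell_j\le\ell_k$, and no packet is running at the start of a phase.

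To conclude, I would pick an $i$-good time $t^*\le C_i$ with $t^*>C_i-\ell_i$: if $C_i$ is itself $i$-good (covering types 1 and 2 of the structural classification preceding the lemma), take $t^*:=C_i$; otherwise (type 3), since $C_i$ is the supremum of $i$-good times in $[0,C_{i-1}]$ and is not attained, there is some $i$-good $t^*\in(C_i-\ell_i,C_i)$, using $\ell_i>0$. Either way, chaining the bounds together and using monotonicity of $R_i$ and of $L_{\MAINs}$ yields
\begin{equation*}
L_{\OPT}(i,(0,C_i])\le R_i(C_i-\ell_i)\le R_i(t^*)\le L_{\MAINs}(i,(0,t^*])+\ell_k\le L_{\MAINs}(i,(0,C_i])+\ell_k.
\end{equation*}
The delicate point, which this approach deliberately sidesteps, is that $C_i$ itself need not be $i$-good; the $\ell_i$-time buffer in $\OPT$'s processing is precisely what lets us step back to an $i$-good $t^*$ arbitrarily close to $C_i$ without losing any $\OPT$ contribution to $L_{\OPT}(i,(0,C_i])$.
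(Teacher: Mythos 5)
Your proof is correct, and its core is the same as the paper's: OPT needs a full $\ell_i$ units of real time per $\ell_i$-packet, so every such packet OPT finishes by $C_i$ was released strictly before $C_i$, and at an $i$-good time the algorithm has completed all but at most $\ell_k$ worth of released $\ell_i$-packets (by Step~(2) of the algorithm in case~(i), vacuously or up to one running packet in cases~(ii) and~(iii)). Where you diverge is in how the boundary is handled: the paper argues directly at $C_i$, invoking the three-way classification of $C_i$ stated just above the lemma (in particular, that in the ``supremum not attained'' case any pending $\ell_i$-packet at $C_i$ was released exactly at $C_i$), and simply asserts that outside of case~(i) the algorithm has finished all $\ell_i$-packets released before $C_i$. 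You instead establish the $\le \ell_k$ excess bound at \emph{every} $i$-good time and then step back to an $i$-good $t^*\in(C_i-\ell_i,C_i]$, exploiting the $\ell_i$-time slack in OPT's transmission so that no contribution to $L_{\OPT}(i,(0,C_i])$ is lost. This sidesteps any reasoning about $C_i$ itself when it is only a supremum, and it also makes the running-packet accounting explicit (your bound $\le\ell_i$ in case~(ii)), which the paper's phrasing ``has completed all the jobs of size $\ell_i$ released before $C_i$'' quietly glosses over. The cost is a slightly longer argument; the benefit is that every inequality in your chain is justified without appeal to the structural classification of $C_i$.
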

\begin{proof}
If $C_i$ is $i$-good and satisfies condition (i) in
Definition~\ref{def:critical}, then by the description of Step (2) of
the algorithm, the total size of pending packets of size $\ell_i$ is
less than the size of the scheduled packet, which is at most $\ell_k$
and the lemma follows.

In all the remaining cases it holds that \MAINs has completed all the
jobs of size $\ell_i$ released before $C_i$, thus the inequality holds
trivially even without the additive term.
\end{proof}

Our remaining goal is to bound $L_\OPT(\geq i,(C_i,C_{i-1}])$. 
We divide $(C_i,C_{i-1}]$ into $i$-segments by the faults. We prove
the bounds separately for each $i$-segment. One important fact is that
for the first $i$-segment we use only a loose bound, as we can use the
additive constant. The critical part is then the bound for
$i$-segments started by a fault, this part determines the competitive ratio and is
different for each case. We summarize the general method by the
following definition and master theorem.

\begin{definition}
The interval $(u,v]$ is called the initial $i$-segment if
$u=C_i$ and $v$ is either $C_{i-1}$ or the first time of a fault after
$u$, whichever comes first. 

The interval $(u,v]$ is called a proper $i$-segment if
$u\in(C_i,C_{i-1})$ is a time of a fault  and $v$ is either $C_{i-1}$
or the first time of a fault after $u$, whichever comes first. 
\end{definition}
Note that there is no $i$-segment if $C_{i-1}=C_i$.

\begin{theorem}[Master Theorem]\label{thm:master}
Let $s\geq 1$ be the speedup. Suppose that for $R\geq 1$ both of the following hold:
\begin{enumerate}
	\item For each $i=1,\ldots,k$ and each proper $i$-segment $(u,v]$
	with $v-u\geq \ell_i$, it holds that
	\begin{equation}
	\label{eq:master}
	(R-1)L_{\MAINs}((u,v])+L_{\MAINs}(\geq i,(u,v])
	\;\geq\;
	L_\OPT(\geq i,(u,v])
	\,.
	\end{equation}
	\item For the initial $i$-segment $(u,v]$, it holds that
	\begin{equation}
	\label{eq:master-init}
	L_\MAINs(\geq i,(u,v])
	\;>\;
	s(v-u)-4\ell_k
	\,.
	\end{equation}
\end{enumerate}
Then $\MAINs$ is $R$-competitive.
\end{theorem}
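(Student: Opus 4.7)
My plan is to reduce the Master Theorem to local comparisons on the pieces of the time line cut out by the critical times $C_0\geq C_1\geq\cdots\geq C_k$. Every size-$\ell_i$ packet completed by \OPT\ falls either in $(0,C_i]$, where Lemma~\ref{l:master-small} directly compares it against $\MAINs$'s throughput of the same size at the cost of an additive $\ell_k$, or in $(C_i,T]$. Writing $(C_i,T]$ as the disjoint union of the intervals $(C_j,C_{j-1}]$ for $j=1,\dots,i$ and swapping the order of summation yields
\[
L_\OPT \;\leq\; \sum_{i=1}^{k} L_\MAINs(i,(0,C_i]) \;+\; k\ell_k \;+\; \sum_{j=1}^{k} L_\OPT(\unarygeq j,(C_j,C_{j-1}])\,,
\]
so the remaining task is to estimate each $L_\OPT(\unarygeq j,(C_j,C_{j-1}])$ in terms of algorithm quantities.

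For that, I would partition $(C_j,C_{j-1}]$ into the initial $j$-segment and the proper $j$-segments delimited by faults. On a proper $j$-segment $(u,v]$ with $v-u\geq \ell_j$, hypothesis~(1) applies outright. On a shorter proper $j$-segment, the fault at $u$ wipes out whatever packet \OPT\ is running, so \OPT\ must restart with a fresh packet, and since it runs at speed~$1$ no packet of size at least $\ell_j$ fits in $(u,v]$; the contribution is zero. For the initial $j$-segment $(u,v]$, the crude bound $L_\OPT(\unarygeq j,(u,v])\leq v-u$ together with hypothesis~(2) and $s\geq 1$ gives $L_\OPT(\unarygeq j,(u,v])\leq L_\MAINs(\unarygeq j,(u,v])+4\ell_k$. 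Adding the nonnegative slack $(R-1)L_\MAINs((u,v])$ on the initial segment, which is permitted since $R\geq 1$, and summing over all segments of $(C_j,C_{j-1}]$ produces
\[
L_\OPT(\unarygeq j,(C_j,C_{j-1}]) \;\leq\; 4\ell_k + (R-1)L_\MAINs((C_j,C_{j-1}]) + L_\MAINs(\unarygeq j,(C_j,C_{j-1}])\,.
\]

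The step I expect to be the most delicate is the bookkeeping that ties both layers together, namely verifying the identity
\[
\sum_{i=1}^{k} L_\MAINs(i,(0,C_i]) \;+\; \sum_{j=1}^{k} L_\MAINs(\unarygeq j,(C_j,C_{j-1}]) \;=\; L_\MAINs\,.
\]
I would prove it packet by packet: for a size-$\ell_i$ packet completed at time $t$, if $t\leq C_k$ then $t\leq C_i$ and it contributes $\ell_i$ only to the first sum; otherwise $t$ lies in a unique interval $(C_{j^*},C_{j^*-1}]$, and the monotonicity $C_1\geq\cdots\geq C_k$ forces $t\leq C_i$ iff $j^*>i$, so in each case exactly one of the two sums picks up $\ell_i$. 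Substituting the identity into the aggregated bound and using $(R-1)L_\MAINs((C_k,T])\leq (R-1)L_\MAINs$ collapses everything to $L_\OPT \leq R\cdot L_\MAINs + 5k\ell_k$, establishing $R$-competitiveness with additive constant $A=5k\ell_k$.
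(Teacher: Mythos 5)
Your decomposition of $L_\OPT$ and $L_\MAINs$ via the nested intervals $(0,C_i]$ and $(C_j,C_{j-1}]$, the packet-by-packet verification that the two families of terms tile $L_\MAINs$ exactly, and the case split on proper versus initial $i$-segments all match the paper's proof step for step; this is the same argument, not a different route.

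One small slip: for the initial $j$-segment $(u,v]$ the point $u=C_j$ need not be a fault time, so a packet that \OPT\ started before $u$ can complete inside $(u,v]$ and contribute up to $\ell_k$ beyond the interval's length. The correct crude bound is $L_\OPT(\unarygeq j,(u,v])\leq (v-u)+\ell_k$, not $v-u$ (the paper states this explicitly). This does not affect the logic of the proof at all, only the additive constant: you would obtain $6k\ell_k$ rather than $5k\ell_k$, which is still a constant depending only on $k$ and $\ell_k$, so $R$-competitiveness still follows.
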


\begin{proof}
First note that for a proper $i$-segment $(u,v]$, $u$ is a fault
time. Thus if $v-u<\ell_i$, then $L_\OPT(\geq i,(u,v])=0$ and
(\ref{eq:master}) is trivial. It follows that (\ref{eq:master})
holds even without the assumption $v-u\geq \ell_i$.

Now consider the initial $i$-segment $(u,v]$. We have $L_\OPT(\geq
i,(u,v])\leq\ell_k+ v-u$, as at most a single packet started before
$u$ can be completed.  Combining this with \eqref{eq:master-init} and using $s\ge 1$, we get
$L_\MAINs(\geq i,(u,v])\;>\;s(v-u)-4 \ell_k\;\geq\;v-u-4\ell_k\;\geq\;L_\OPT(\geq i,(u,v])-5\ell_k$.

Summing this with \eqref{eq:master} for all proper $i$-segments and
using $R\geq 1$ we get
\begin{align}
(R-1)L_\MAINs((C_i,C_{i-1}])+ L_\MAINs(\geq i,(C_i,C_{i-1}])+5\ell_k \nonumber \\
 \geq L_\OPT(\geq i,(C_i,C_{i-1}])\,. \label{eq:master-phase}
\end{align}
Note that for $C_i=C_{i-1}$, Equation \eqref{eq:master-phase}
holds trivially.

To complete the proof, note that each completed job in
the optimum contributes to exactly one among the $2k$ terms
$L_\OPT(\geq i,(C_i,C_{i-1}])$ and $L_\OPT(i,(0,C_i])$; similarly for
$L_\MAINs$. Thus by summing both \eqref{eq:master-phase} and
Lemma~\ref{l:master-small} for all $i=1,\ldots,k$ we obtain
\begin{eqnarray*}
L_\OPT
&=& 
\sum_{i=1}^kL_\OPT\left(\geq i,(C_i,C_{i-1}]\right)+\sum_{i=1}^kL_\OPT(i,(0,C_i])
\\
&\leq& 
\sum_{i=1}^k \bigg( (R-1)L_\MAINs((C_i,C_{i-1}]) +
\left(L_\MAINs(\geq i,(C_i,C_{i-1}])+5\ell_k \right)\bigg)
\\
& & \mbox{}
+
\sum_{i=1}^k\left(L_\MAINs(i,(0,C_i])+\ell_k\right)
\\
&\leq&
(R-1)L_\MAINs+L_\MAINs+6k\ell_k
\;=\;R\cdot L_\MAINs+6k\ell_k
\,. 
\end{eqnarray*}
The theorem follows.
\end{proof}

\begin{figure}[!ht]
  \begin{center}
    \includegraphics{./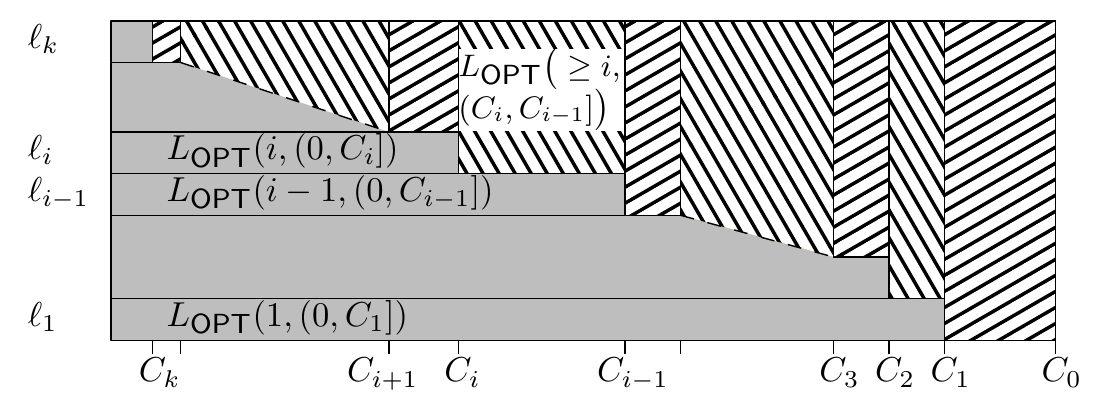}
  \end{center}
  \caption{An illustration of dividing the schedule of \OPT in the local analysis, i.e.,
  dividing the (total size of) packets completed by \OPT into $L_\OPT(i,(0,C_i])$
  and $L_\OPT\left(\geq i,(C_i,C_{i-1}]\right)$ for $i=1,\dots, k$.
  Rows correspond to packet sizes and the X-axis to time.
  Gray horizontal rectangles thus correspond to $L_\OPT(i,(0,C_i])$, i.e.,
  these rectangles represent the time interval $(0,C_i]$ and packets of size $\ell_i$ completed by \OPT 
  in $(0,C_i]$, whereas hatched rectangles correspond to $L_\OPT\left(\geq i,(C_i,C_{i-1}]\right)$.
  }
  \label{fig:local}
\end{figure}

\subsection{Local Analysis of PrudentGreedy (\MAIN)}
\label{sec:local-main}

The first part of the following lemma implies the condition
\eqref{eq:master-init} for the initial $i$-segments in all cases. The
second part of the lemma is the base of the analysis of a proper
$i$-segment, which is different in each situation.
\begin{lemma}
\label{l:main-segments}
\begin{enumerate}[label=\rm(\roman*), nosep]
\item
If $(u,v]$ is the initial $i$-segment, 
then $L_\MAINs(\geq i,(u,v])> s(v-u)-4\ell_k$.
\item
If $(u,v]$ is a proper $i$-segment and $v-u\geq \ell_i$ then
$L_\MAINs((u,v])>s(v-u)/2$ and
$L_\MAINs(\geq i,(u,v])>s(v-u)/2-\ell_i-\ell_{i-1}$. (Recall that $\ell_0=0$.)  
\end{enumerate}
\end{lemma}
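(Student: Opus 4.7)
Since $u$ is a fault time in $(C_i, C_{i-1})$, a fresh phase of $\MAINs$ begins at $u$; moreover, $u$ fails to be $i$-good by definition of $C_i$, and negating conditions (i) and (ii) of $i$-goodness simultaneously yields that a size-$\ell_i$ packet is pending at $u$ and that the first packet scheduled at $u$ has size at most $\ell_i$. Because a size-$\ell_i$ packet is pending throughout $(C_i, C_{i-1}] \supseteq [u,v]$ and no fault occurs in $(u, v)$, Lemma~\ref{l:main}(ii) applies to this phase and delivers $L_\MAINs(<i, (u, v]) < \ell_i + \ell_{i-1}$ together with the fact that the phase does not end in $(u, v)$. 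To bound $L_\MAINs((u, v])$ from below, I reuse the idea behind Lemma~\ref{lem:simple}: the first packet, of size $\le \ell_i \le s(v - u)$, completes in $(u, v]$; any packet of this phase still running at $v$ (or jammed by a fault at $v$) was chosen via Step~(3), so its size is at most $\rel$ at its start, which is exactly the sum of sizes of earlier packets of this phase completed in $(u, v]$. Hence $s(v - u) \le 2\,L_\MAINs((u, v])$, giving $L_\MAINs((u, v]) > s(v - u)/2$, and subtracting yields the second inequality in (ii).

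\textbf{Plan for part (i).} I split according to whether any phase of $\MAINs$ starts in $[u, v)$. \emph{Case 1:} some phase $P_1$ starts at $u_1 \in [u, v)$. Then a size-$\ell_i$ packet is pending at $u_1$ and throughout $[u_1, v)$: for $u_1 = u$ this uses the convention that a release at $C_i$ is pending at $C_i$, and for $u_1 > u$ it follows from $u_1 \in (C_i, C_{i-1}]$ and the characterization of $C_i$. Lemma~\ref{l:main}(ii) applied to $P_1$ therefore gives $L_\MAINs(<i, (u_1, v]) < \ell_i + \ell_{i-1}$ and that $P_1$ does not end in $(u_1, v)$, so at most one packet of $P_1$ (partial at $v$, or jammed by a fault at $v$), of size at most $\ell_k$, is missing from $L_\MAINs((u_1, v])$; hence $L_\MAINs(\ge i,(u_1,v]) > s(v - u_1) - 3\ell_k$. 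If $u_1 > u$, the previous phase $P_0$ (begun at some $u_0 \le u$) must have ended at $u_1$ via Step~(4) because no fault occurs in $(u, v)$, and Step~(4) together with a size-$\ell_i$ packet pending at $u_1$ forces $\rel(u_1) = s(u_1 - u_0) < \ell_i$, so $u_1 - u \le u_1 - u_0 < \ell_k/s$ and $s(v - u_1) > s(v - u) - \ell_k$; combined with $L_\MAINs(\ge i, (u, u_1]) \ge 0$ this yields $L_\MAINs(\ge i,(u, v]) > s(v - u) - 4\ell_k$. \emph{Case 2:} no phase starts in $[u, v)$, so a single phase $P_0$ begun before $u$ covers $(u, v]$. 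Lemma~\ref{l:main}(ii) is not directly available for $P_0$, so I count small packets by hand: for $t \in (u, v]$, Step~(3) picks a packet of size $<\ell_i$ only if $\rel(t) < \ell_i$ (since size-$\ell_i$ is pending), which forces any such packet to start before $u_0 + \ell_i/s$; a short calculation gives $L_\MAINs(<i,(u,v]) \le \ell_i + \ell_{i-1}$, and combined with $L_\MAINs((u,v]) \ge s(v - u) - \ell_k$ this even yields $L_\MAINs(\ge i,(u,v]) \ge s(v - u) - 3\ell_k$.

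\textbf{Main obstacle.} The principal subtlety is ensuring that the hypothesis of Lemma~\ref{l:main}(ii)---that a size-$\ell_i$ packet be pending at the relevant phase start---genuinely holds at $u$ in part~(ii) and at $u_1$ in part~(i); this relies on the fact that no time in $(C_i, C_{i-1}]$ is $i$-good, together with careful handling of the convention for pendency at release instants, and is most delicate when $\MAINs$ is idle just before $u$ or when $u = C_i = 0$. A separate (but routine) direct bound on small packets is needed for Case~2 of (i), where Lemma~\ref{l:main}(ii) cannot be applied because $P_0$ may start well before any size-$\ell_i$ packet becomes pending.
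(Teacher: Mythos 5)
Your proposal is correct and follows essentially the same route as the paper: part (ii) is the paper's argument verbatim (negate $i$-goodness at the fault time $u$, then combine Lemma~\ref{lem:simple} with Lemma~\ref{l:main}(ii)), and part (i) likewise reduces to Lemma~\ref{l:main}(ii) applied from a point near $u$ after accounting for a prefix of length at most $\ell_i/s \le \ell_k/s$. The one small difference is cosmetic: the paper splits (i) on whether the phase containing $u$ ends before $v$ (taking $u'$ to be the phase end or $u$ itself), while you split on whether any phase start falls in $[u,v)$; your Case~2, where a single pre-existing phase spans $(u,v]$, handles by direct counting a situation in which the paper's appeal to Lemma~\ref{l:main}(ii) is formally applied at a point that need not be a phase start, so your version is if anything slightly more careful, though the resulting bound ($3\ell_k$ vs.\ $4\ell_k$) is the same up to slack the lemma already tolerates.
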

\begin{proof}
(i)
If the phase that starts at $u$
or contains $u$ ends before $v$, let $u'$ be its end; otherwise let
$u'=u$. We have $u'\leq u+\ell_i/s$, as otherwise any packet of size
$\ell_i$, pending throughout the $i$-segment by definition, would be an
eligible choice in Step (3) of the algorithm, and the phase would not end
before $v$. Using Lemma~\ref{l:main}(ii), we have
$L_\MAINs(<i,(u',v])<\ell_i+\ell_{i-1}<2\ell_k$. Since at most one
packet at the end of the segment is unfinished, we have
$L_\MAINs(\geq i,(u,v])
\geq
L_\MAINs(\geq i,(u',v])
> s(v-u')-3\ell_k
\ge s(v-u)-4\ell_k$.

(ii)
Let $(u,v]$ be a proper $i$-segment. Thus $u$ is a start of a phase
that contains at least the whole interval $(u,v]$ by Lemma~\ref{l:main}(ii).
By the definition of $C_i$, $u$ is not $i$-good,
so the phase starts by a packet of size at most $\ell_i$.
If $v-u\geq \ell_i$ then the first packet finishes (as $s\geq 1$) and thus
$L_\MAINs((u,v])>s(v-u)/2$ by Lemma~\ref{lem:simple}.
The total size of completed packets smaller than
$\ell_i$ is less than $\ell_i+\ell_{i-1}$ by Lemma~\ref{l:main}(ii), and
thus $L_\MAINs(\geq i,(u,v])>s(v-u)/2-\ell_i-\ell_{i-1}$.
\end{proof}

\subsubsection{General Packet Sizes}

The next theorem gives a tradeoff of the competitive ratio of $\MAINs$
and the speedup $s$ using our local analysis. While Theorem~\ref{thm:four} shows that $\MAINs$ is
$1$-competitive for $s\geq 4$, here we give a weaker result that
reflects the limits of the local analysis.  However, for
$s=1$ our local analysis is tight as already the lower bound
from~\cite{Anta13} shows that no algorithm is better than 3-competitive 
(for packet sizes $1$ and $2-\eps$).  
See Figure~\ref{fig:speedVsCR} for an illustration of our upper
and lower bounds on the competitive ratio of $\MAINs$.
\begin{theorem}
\label{thm:gen}
$\MAINs$ is $R_s$-competitive where:

$R_s=1+2/s$ for $s\in[1,4)$,

$R_s=2/3+2/s$ for $s\in[4,6)$, and

$R_s=1$ for $s\geq 6$.
\end{theorem}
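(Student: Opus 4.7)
The plan is to invoke the Master Theorem (Theorem~\ref{thm:master}). The initial $i$-segment condition~\eqref{eq:master-init} is given directly by Lemma~\ref{l:main-segments}(i), so all the work lies in verifying the proper $i$-segment inequality~\eqref{eq:master}. Throughout I will use that $L_\OPT(\geq i, (u,v]) \leq v-u$ (since $u$ is a fault time for \OPT and \OPT runs at speed $1$) together with the two bounds of Lemma~\ref{l:main-segments}(ii), namely $L_\MAINs((u,v]) > s(v-u)/2$ and $L_\MAINs(\geq i, (u,v]) > s(v-u)/2 - \ell_i - \ell_{i-1}$.

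For $s \in [1,4)$, where $R_s - 1 = 2/s$, the first bound alone already suffices: $(R_s-1) L_\MAINs((u,v]) > (2/s)\cdot s(v-u)/2 = v-u \geq L_\OPT(\geq i, (u,v])$, and adding $L_\MAINs(\geq i, (u,v]) \geq 0$ gives~\eqref{eq:master}. For $s \geq 6$, where $R_s = 1$, the second bound alone suffices: I need $s(v-u)/2 - \ell_i - \ell_{i-1} \geq v-u$, equivalently $(s/2-1)(v-u) \geq \ell_i + \ell_{i-1}$, and since $s/2 - 1 \geq 2$ and $v - u \geq \ell_i$, the left side is at least $2\ell_i \geq \ell_i + \ell_{i-1}$.

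The delicate range is $s \in [4,6)$, because na\"ively combining the two bounds only yields $R_s s(v-u)/2 \geq (v-u) + \ell_i + \ell_{i-1}$, which at the extreme $v-u = \ell_i$ with $\ell_{i-1} \to \ell_i$ would demand $R_s s \geq 6$, strictly weaker than $R_s = 2/3 + 2/s$ promises for $s < 6$. To close this gap I will establish the structural claim that for $s \geq 4$ and $v-u \geq \ell_i$, every proper $i$-segment satisfies $L_\MAINs(\geq i, (u,v]) \geq \ell_i$. The proof is by contradiction. The segment is a single phase by Lemma~\ref{l:main}(ii); once the algorithm's accumulated work in the phase first reaches $\ell_i$, at some time $t^*$ with $t^* - u < (\ell_i + \ell_{i-1})/s$, Step~(3) schedules a packet of some size $\ell_{i'} \geq \ell_i$ satisfying $\ell_{i'} \leq \mathrm{rel}(t^*) < \ell_i + \ell_{i-1}$. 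If no large packet is completed, this one must be jammed at $v$ and no further packet is scheduled after $t^*$, so $s(v-u) = L_\MAINs(<i,(u,v]) + s(v-t^*) < (\ell_i+\ell_{i-1}) + \ell_{i'} < 2(\ell_i+\ell_{i-1})$. With $\ell_{i-1} < \ell_i$ and $s \geq 4$ this yields $s(v-u) < 4\ell_i \leq s\ell_i$, i.e., $v-u < \ell_i$, contradicting the hypothesis. This structural claim is the main obstacle; it requires a careful reading of what Steps~(2) and~(3) do in the startup of the phase.

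Given the claim, I split the verification on the segment length. When $v-u \leq 6\ell_i/s$, using $L_\MAINs(\geq i,(u,v]) \geq \ell_i$ together with $L_\MAINs((u,v]) > s(v-u)/2$ yields $(R_s-1)L_\MAINs((u,v]) + L_\MAINs(\geq i,(u,v]) > (R_s-1)s(v-u)/2 + \ell_i = (1 - s/6)(v-u) + \ell_i$, which is at least $v-u$ precisely when $v-u \leq 6\ell_i/s$. When $v-u > 6\ell_i/s$, using the Lemma~\ref{l:main-segments}(ii) bound on $L_\MAINs(\geq i,(u,v])$ yields $(R_s-1)L_\MAINs((u,v]) + L_\MAINs(\geq i,(u,v]) > (R_s s/2)(v-u) - \ell_i - \ell_{i-1} = (s/3+1)(v-u) - \ell_i - \ell_{i-1}$, and then $(s/3)(v-u) > 2\ell_i > \ell_i + \ell_{i-1}$ (using $\ell_{i-1} < \ell_i$) produces the required $\geq v-u$. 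Since $s \geq 4$ implies $6\ell_i/s \leq 3\ell_i/2$, the two subranges cover all $v-u \geq \ell_i$, and the Master Theorem gives the stated competitive ratio.
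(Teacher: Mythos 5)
Your proposal is correct and follows essentially the same approach as the paper: the Master Theorem combined with Lemma~\ref{l:main-segments}, with identical case splits on $s$ and, within $s\in[4,6)$, on whether $v-u$ exceeds $6\ell_i/s$. The one place you diverge is the structural claim $L_\MAINs(\geq i,(u,v])\geq\ell_i$ (for $s\geq 4$, $v-u\geq\ell_i$), which you prove from scratch by tracing Steps~(2)--(3) of the algorithm; the paper gets the same fact in one line from Lemma~\ref{l:main-segments}(ii), namely $L_\MAINs(\geq i,(u,v])>s(v-u)/2-\ell_i-\ell_{i-1}\geq 2\ell_i-2\ell_i=0$, which forces $L_\MAINs(\geq i,(u,v])\geq\ell_i$ since it is a sum of packet sizes each at least $\ell_i$.
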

\begin{proof}
Lemma~\ref{l:main-segments}(i) implies the condition
\eqref{eq:master-init} for the initial $i$-segments.  We now
prove (\ref{eq:master}) for any proper $i$-segment $(u,v]$ with
$v-u\geq\ell_i$ and appropriate $R$. The bound then follows by the
Master Theorem.

Since there is a fault at time $u$, we have $L_\OPT(\geq i,(u,v])\leq v-u$.

\smallskip
For $s\geq 6$, Lemma~\ref{l:main-segments}(ii) implies
\begin{align*}
L_\MAINs(\geq i,(u,v])&>s(v-u)/2-2\ell_i\\
&\geq 3(v-u)-2(v-u)=v-u\geq L_\OPT(\geq i,(u,v])\,,
\end{align*}
which is (\ref{eq:master}) for $R=1$.

\smallskip
For $s\in[4,6)$, by Lemma~\ref{l:main-segments}(ii) we have 
$L_\MAINs((u,v]) > s(v-u)/2$ and by multiplying it by $(2/s - 1/3)$
we obtain
\[
\left(\frac{2}{s}-\frac{1}{3}\right)\cdot L_\MAINs((u,v])>
\left(1-\frac{s}{6}\right)(v-u)
\,.
\]
Thus to prove (\ref{eq:master}) for $R=2/3+2/s$, it suffices to show that
\[
L_\MAINs(\geq i,(u,v])>\frac{s}{6}(v-u)
\,,
\]
as clearly $v-u\geq L_\OPT(\geq i,(u,v])$.  The remaining inequality again follows 
from Lemma~\ref{l:main-segments}(ii), but we need to consider two cases:

If $(v-u)\geq\frac{6}{s}\ell_i$, then 
\[
L_\MAINs(\geq i,(u,v])>\frac{s}{2}(v-u)-2\ell_i
\geq
\frac{s}{2}(v-u) - \frac{s}{3}(v-u)
= \frac{s}{6}(v-u)
\,.
\]
On the other hand, if $(v-u)<\frac{6}{s}\ell_i$, then using $s\geq 4$ as well,
\[
L_\MAINs(\geq i,(u,v])>\frac{s}{2}(v-u)-2\ell_i\geq 0\,,
\]
therefore $\MAINs$ completes a packet of size at least $\ell_i$ which implies
\[
L_\MAINs(\geq i,(u,v]) \geq \ell_i > \frac{s}{6}(v-u)\,,
\]
concluding the case of $s\in[4,6)$.

\smallskip
For $s\in[1,4)$, by Lemma~\ref{l:main-segments}(ii) we get
$(2/s)\cdot L_\MAINs((u,v])>v-u\geq L_\OPT(\geq i,(u,v])$,
which implies (\ref{eq:master}) for $R=1+2/s$.
\end{proof}

\subsubsection{Well-separated Packet Sizes}\label{sec:wellSeparated}

We can obtain better bounds on the speedup necessary for
$1$-competitiveness if the packet sizes are sufficiently
different. Namely, we call the packet sizes $\ell_1,\ldots,\ell_k$
\emph{$\alpha$-separated} if $\ell_i\geq \alpha\ell_{i-1}$ holds for
$i=2,\ldots,k$. 

\begin{figure}
  \begin{center}
    \input{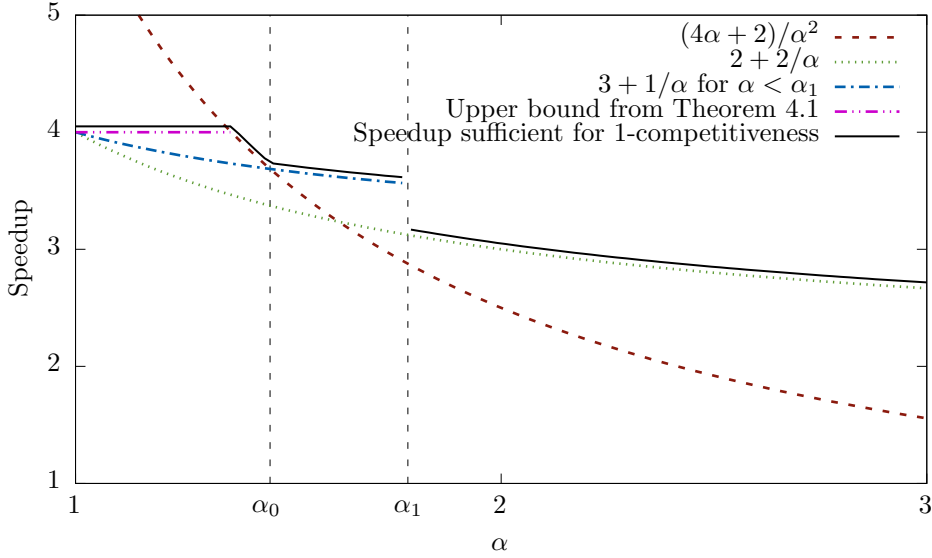}
  \end{center}
  \caption{A graph of $S_\alpha$ and the bounds on the speedup that we use
  in Theorem~\ref{thm:main-separated}.
  Note that in the graph we also use Theorem~\ref{thm:four} for $1$-competitiveness with speed $4$ (for any $\alpha$),
  but in the definition of $S_\alpha$ we do not take it into account.
  }
  \label{fig:speedVsSeparation}
\end{figure}

Next, we show that for $\alpha$-separated packet sizes,
$\MAIN(S_\alpha)$ is $1$-competitive for the following $S_\alpha$. We
define 
\begin{align*}
\alpha_0 &=\frac12+\frac16\sqrt{33}\approx 1.46
\mbox{\,, which is the positive root of $3\alpha^2-3\alpha-2$.}
\\[1ex]
\alpha_1 &=\frac{3+\sqrt{17}}{4}\approx 1.78
\mbox{\,, which is the positive root of $2\alpha^2-3\alpha-1$.}
\\[1ex]
S_\alpha &=
\begin{cases}
\displaystyle
\frac{4\alpha+2}{\alpha^2}& \mbox{ for $\alpha\in[1,\alpha_0]$,}
\\[1ex]
3+\frac1\alpha &\mbox{ for $\alpha\in[\alpha_0,\alpha_1)$, and}
\\[1ex]
2+\frac2\alpha &\mbox{ for $\alpha\geq\alpha_1$.}
\end{cases}
\end{align*}
See Figure~\ref{fig:speedVsSeparation} for a graph of $S_\alpha$
and all the bounds on it that we use.
The value of $\alpha_0$ is chosen as the point where 
$(4\alpha+2)/\alpha^2=3+1/\alpha$. The value of $\alpha_1$ is chosen
as the point from which the argument in case (viii) of the proof below works, 
which allows for a better result for $\alpha\geq\alpha_1$.
If $s\geq S_\alpha$ then $s\geq (4\alpha+2)/\alpha^2$
and $s\geq 2+2/\alpha$ for all $\alpha$ and also $s\geq 3+1/\alpha$
for $\alpha < \alpha_1$; these facts follow from 
inspection of the functions and are useful for the analysis.

Note that $S_\alpha$ is decreasing in $\alpha$, with a single
discontinuity at $\alpha_1$. We have $S_1=6$, matching
the upper bound for 1-competitiveness using local analysis.  We have $S_2=3$, i.e., $\MAIN(3)$ is
$1$-competitive for $2$-separated packet sizes, which includes the
case of divisible packet sizes (however, only $s\ge 2.5$ is needed
in the divisible case, as we show later). The limit of $S_\alpha$
for $\alpha\rightarrow +\infty$ is $2$.  For
$\alpha<(1+\sqrt3)/2\approx 1.366$, we get $S_\alpha>4$, while
Theorem~\ref{thm:four} shows that $\MAINs$ is $1$-competitive for
$s\geq 4$; the weaker result of Theorem~\ref{thm:main-separated} below
reflect the limits of the local analysis.

\begin{theorem}
\label{thm:main-separated}
Let $\alpha>1$. If the packet sizes are $\alpha$-separated, then
$\MAINs$ is $1$-competitive for any $s\geq S_\alpha$.
\end{theorem}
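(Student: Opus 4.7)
The plan is to apply the Master Theorem (Theorem~\ref{thm:master}) with $R = 1$. Condition~\eqref{eq:master-init} on every initial $i$-segment is immediate from Lemma~\ref{l:main-segments}(i), so the task reduces to establishing \eqref{eq:master} with $R = 1$ on every proper $i$-segment $(u, v]$ satisfying $v - u \geq \ell_i$. Because $u$ is a fault time and \OPT runs at speed $1$, every packet \OPT completes in $(u, v]$ must lie entirely inside the segment, giving $L_{\OPT}(\geq i, (u, v]) \leq v - u$; it thus suffices to prove $L_{\MAINs}(\geq i, (u, v]) \geq v - u$. The starting estimate is Lemma~\ref{l:main-segments}(ii), into which I substitute $\ell_{i-1} \leq \ell_i/\alpha$ (by $\alpha$-separation) to obtain
\begin{equation*}
L_{\MAINs}(\geq i, (u, v]) \;>\; \tfrac{s}{2}(v - u) - \ell_i\bigl(1 + \tfrac{1}{\alpha}\bigr).
\end{equation*}
Since $s \geq S_\alpha \geq 2 + 2/\alpha$ throughout, this simplifies to $L_{\MAINs}(\geq i, (u, v]) > (1 + 1/\alpha)(v - u - \ell_i)$, and the right-hand side already reaches $v - u$ whenever $v - u \geq \ell_i(\alpha + 1)$, handling the ``long-segment'' regime uniformly in $\alpha$.

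The remaining ``short-segment'' regime $\ell_i \leq v - u < \ell_i(\alpha + 1)$ calls for a finer case analysis exploiting the discrete structure of packet sizes. Two refinements are central. First, $\alpha$-separation together with $v - u < \ell_i(\alpha+1) \leq \ell_i + \ell_{i+1}$ leaves only a handful of feasible compositions for \OPT's completed packets of size $\geq \ell_i$: essentially one ``big'' packet of some size $\ell_j \leq v-u$ or a moderate number of $\ell_i$-packets. Second, $L_{\MAINs}(\geq i, (u, v])$ is itself a sum of sizes from $\{\ell_i, \ell_{i+1}, \ldots\}$, so the strict inequality in Lemma~\ref{l:main-segments}(ii) can be ``rounded up'' to the smallest realizable sum strictly exceeding the bound; moreover, Lemma~\ref{lem:simple} can be sharpened to $L_{\MAINs}((u, v]) \geq s(v - u) - \ell_j$, where $\ell_j$ is the size of the (at most one) packet jammed at $v$, itself constrained by Step~(3) of \MAIN.

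The case analysis then branches on (a) the sizes and count of \OPT's completed packets, (b) the position of $v - u$ relative to the thresholds $2\ell_i$, $\ell_{i+1}$, $2\ell_{i+1}$, and (c) the size of \MAINs's initial packet at $u$. The three sub-regimes of $S_\alpha$ correspond to distinct pinch points: the bound $s \geq 2 + 2/\alpha$ closes every case when $\alpha \geq \alpha_1$; for $\alpha \in [\alpha_0, \alpha_1)$ a residual case (in which \OPT fits two $\ell_i$-packets in a segment of length just above $2\ell_i$ while \MAIN is forced to begin with a small packet and then waste work on a jammed $\ell_{i+1}$) forces $s \geq 3 + 1/\alpha$; and for $\alpha \in [1, \alpha_0]$ the separation is so mild that even the long-segment bound becomes the binding constraint, forcing $s \geq (4\alpha + 2)/\alpha^2$. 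I expect the main obstacle to be this enumeration of sub-cases in the short-segment regime — in particular ``case (viii)'' flagged in the preamble to $\alpha_1$ — where the simultaneous constraints of \OPT's feasible packings, \MAINs's Step~(2)/(3) invariants, and the discreteness of realizable packet-sum values must all be balanced to recover a matching competitive bound.
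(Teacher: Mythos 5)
Your plan matches the paper's at the level of strategy: invoke the Master Theorem with $R=1$, dispatch the initial $i$-segments with Lemma~\ref{l:main-segments}(i), and for proper $i$-segments work from the estimate $L_{\MAINs}(\geq i,(u,v]) > \tfrac{s}{2}(v-u) - (1+\tfrac1\alpha)\ell_i$ combined with the discreteness of packet sizes to round up. Your long-segment observation is correct: $s\geq 2+2/\alpha$ makes $(1+1/\alpha)(v-u-\ell_i)\geq v-u$ whenever $v-u\geq(\alpha+1)\ell_i$, so that regime is closed uniformly. (The paper instead splits on $X=L_\OPT(\geq i,(u,v])$ and handles $X\geq(\alpha+1)\ell_i$ as case~(vi); since $X\leq v-u$, your split subsumes that case and is a mild simplification.)

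The gap is that the remaining ``short-segment'' analysis, which is where all three branches of the formula $S_\alpha$ are actually pinned down, is only gestured at and not carried out. The paper needs eight enumerated cases precisely because both $X$ and $L_{\MAINs}(\geq i,(u,v])$ range over a non-trivial set of realizable packet-size sums; the rounding-up step is different in each one (sometimes to $\ell_i$, sometimes to $\ell_{i+1}$, sometimes to the next multiple of $\ell_i$, and in case~(viii) it hinges on the size of the \emph{jammed} packet rather than the initial one). Your sketch lists the right ingredients but does not verify that the chosen $S_\alpha$ actually closes all of them. Moreover, your attribution of the three regimes is partly wrong: you assert that for $\alpha\in[1,\alpha_0]$ the long-segment bound forces $s\geq(4\alpha+2)/\alpha^2$, but the long-segment argument needs only $s\geq 2+2/\alpha$; the $(4\alpha+2)/\alpha^2$ constraint actually comes from the short-segment case $X=\ell_{i+1}$ with $\ell_{i+1}\leq 2\ell_i$ (paper's case~(iii)), where one must round $M\geq \tfrac{s}{2}\alpha\ell_i-(1+\tfrac1\alpha)\ell_i$ up to $\ell_i$ so that $L_{\MAINs}(\geq i,(u,v])\geq\ell_{i+1}$. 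Similarly, the paper's case~(iv) ($X\geq\alpha^2\ell_i$), which requires the auxiliary bound $s\geq 2(1+1/\alpha^2+1/\alpha^3)$ derived by a golden-ratio split, does not appear in your enumeration at all, yet it is necessary to cover $X=\ell_{i+2}$ for small $\alpha$. Without executing these cases, the claim that $s\geq S_\alpha$ suffices is unsupported.
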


\begin{proof}
Lemma~\ref{l:main-segments}(i) implies \eqref{eq:master-init}.  We now
prove for any proper $i$-segment $(u,v]$ with   $v-u\geq\ell_i$ that
\begin{equation}
\label{eq:master1}
L_\MAINs(\geq i,(u,v])\geq L_\OPT(\geq i,(u,v])
\,,
\end{equation}
which is \eqref{eq:master} for $R=1$. The bound then follows by the
Master Theorem.

Let $X=L_\OPT(\geq i,(u,v])$. Note that $X\leq v-u$. 

Lemma~\ref{l:main-segments}(ii) together with
$\ell_{i-1}\leq\ell_i/\alpha$ gives $L_\MAINs(\geq i,(u,v])>M$ for
$M=sX/2-(1+1/\alpha)\ell_i$.

We use the fact that both
$X$ and $L_\MAINs(\geq i,(u,v])$ are sums of some packet sizes
$\ell_j$, $j\geq i$, and thus only some of the values are
possible. However, the situation is quite complicated, as for example
$\ell_{i+1}$, $\ell_{i+2}$, $2\ell_i$, $\ell_i+\ell_{i+1}$ are
possible values, but their ordering may vary.

We distinguish several cases based on $X$ and $\alpha$.  We note in
advance that the first five cases suffice for $\alpha<\alpha_1$; only
after completing the proof for $\alpha<\alpha_1$, we analyze the
additional cases needed for $\alpha \geq \alpha_1$.

\mycase{i}
$X=0$. Then \eqref{eq:master1} is trivial. 

\mycase{ii}
$X=\ell_i$. Using $s\geq 2+2/\alpha$, we obtain $M\geq (1+1/\alpha)\ell_i-(1+1/\alpha)\ell_i = 0$.
Thus $L_\MAINs(\geq i,(u,v])>M\geq 0$ which implies
$L_\MAINs(\geq i,(u,v])\geq \ell_i=X$ and \eqref{eq:master1} holds.

\mycase{iii}
$X=\ell_{i+1}$ and $\ell_{i+1}\leq 2\ell_i$. Using $s\geq (4\alpha+2)/\alpha^2$ and
$X=\ell_{i+1}\geq\alpha\ell_i$, we obtain
$$M\geq \frac{s\ell_{i+1}}{2} -\left(1+\frac1\alpha\right)\ell_i \ge \left(2+\frac1\alpha\right)\ell_i-\left(1+\frac1\alpha\right)\ell_i=\ell_i\,.$$
Thus $L_\MAINs(\geq i,(u,v])>\ell_i$ which together with
$\ell_{i+1}\leq 2\ell_i$ implies
$L_\MAINs(\geq i,(u,v])\geq \ell_{i+1}=X$ and \eqref{eq:master1} holds.

\mycase{iv}
$X\geq \alpha^2\ell_i$.
(Note that this includes all cases when a
packet of size at least  $\ell_{i+2}$ contributes to $X$.)
We first show that $s\geq
2(1+1/\alpha^2+1/\alpha^3)$ by straightforward calculations with the
golden ratio $\phi$:
\begin{itemize}[nosep]
\item 
If $\alpha\leq\phi$, we have 
$$s\geq \frac{4\alpha+2}{\alpha^2}=2\left(\frac2\alpha + \frac{1}{\alpha^2}\right)\geq
2\left(1+\frac{1}{\alpha^2}+\frac{1}{\alpha^3}\right)\,,$$
where we use $2/\alpha \geq 1+1/\alpha^3$ or equivalently $\alpha^3 + 1 - 2\alpha^2\le 0$,
which is true as
$$\alpha^3 + 1 - 2\alpha^2 = \alpha^3 - \alpha^2 + 1 - \alpha^2
= \alpha^2(\alpha - 1) - (\alpha + 1)(\alpha - 1) = (\alpha-1)(\alpha^2-\alpha-1)\le 0\,,$$
where the last inequality holds for $\alpha\in(1,\phi)$.
%
\item If on the other hand $\alpha\geq\phi$, then $s\geq
2(1+1/\alpha)\geq 2(1+1/\alpha^2+1/\alpha^3)$, as
$1/\alpha \geq 1/\alpha^2 + 1/\alpha^3$ holds for $\alpha \geq \phi$.
\end{itemize}
Now we obtain
\begin{align*}
M-X &\geq \left(\frac{s}2-1\right)X-\left(1+\frac1\alpha\right)\ell_i \\
&\geq
\left(1+\frac1{\alpha^2}+\frac1{\alpha^3}-1\right)X-\left(1+\frac1\alpha\right)\ell_i \\
&\geq
\left(\frac1{\alpha^2}+\frac1{\alpha^3}\right)\alpha^2\ell_i-\left(1+\frac1\alpha\right)\ell_i=0 \enspace,
\end{align*}
and \eqref{eq:master1} holds.

\mycase{v}
$X\geq 2\ell_i$ and $\alpha<\alpha_1$.
(Note that this includes all cases when at least
two packets contribute to $X$, but we use it only if $\alpha<\alpha_1$.)
Using $s\geq 3+1/\alpha$
we obtain
\[
M-X\geq
\left(\frac12\left(3+\frac1\alpha\right)-1\right)X-\left(1+\frac1\alpha\right)\ell_i
\geq
\frac12\left(1+\frac1\alpha\right)2\ell_i-\left(1+\frac1\alpha\right)\ell_i
=0 \enspace,
\]
and \eqref{eq:master1} holds.

\smallskip
\noindent
{\bf Proof for $\alpha<\alpha_1$:}
We now observe that for $\alpha<\alpha_1$, we have exhausted all
the possible values of $X$. Indeed, if (v) does not apply, then at
most a single packet contributes to $X$, and one of the cases (i)-(iv)
applies, as (iv) covers the case when $X\geq\ell_{i+2}$,
and as $X=\ell_{i+1}$ is covered by (iii) or (v). Thus
\eqref{eq:master1} holds and the proof is complete.

\smallskip
\noindent
{\bf Proof for $\alpha\geq\alpha_1$:} We now analyze the remaining cases
for $\alpha\geq\alpha_1$.

\mycase{vi}
$X\geq (\alpha+1)\ell_i$.
(Note that this includes all cases when two packets not both of size
$\ell_i$ contribute to $X$.)
Using $s\geq 2+2/\alpha$ we obtain
$$M-X \geq\left(1+\frac{1}{\alpha}-1\right)X-\left(1+\frac1\alpha\right)\ell_i
\geq \frac{1}{\alpha} (\alpha+1)\ell_i -\left(1+\frac1\alpha\right)\ell_i
= 0$$
and \eqref{eq:master1} holds.

\mycase{vii}
$X=n\cdot\ell_i<(\alpha+1)\ell_i$ for some $n=2,3,\ldots$.
Since $\alpha>\alpha_1 > \phi$, we have $\ell_{i+1}>\ell_i+\ell_{i-1}$. This
implies that the first packet of size at least $\ell_i$ that is
scheduled in the phase has size equal to $\ell_i$ by the condition in
Step (3) of the algorithm. Thus, if also a packet of size larger than
$\ell_i$ contributes to $L_\MAINs(\geq i,(u,v])$, we have
$$L_\MAINs(\geq i,(u,v])\geq \ell_{i+1}+\ell_i\geq(\alpha+1)\ell_i>X$$
by the case condition and \eqref{eq:master1} holds. Otherwise
$L_\MAINs(\geq i,(u,v])$ is a multiple of $\ell_i$. Using $s\geq
2+2/\alpha$, we obtain
$$M\ge \left(1+\frac1\alpha\right)n\cdot\ell_i -\left(1+\frac1\alpha\right)\ell_i\geq(n-1)\left(1+\frac1\alpha\right)\ell_i>(n-1)\ell_i\,.$$
This, together with
divisibility by $\ell_i$ implies $L_\MAINs(\geq i,(u,v])\geq
n\cdot\ell_i=X$ and \eqref{eq:master1} holds again.

\mycase{viii}
$X=\ell_{i+1}$ and $\ell_{i+1}>2\ell_i$. We distinguish two
subcases depending on the size of the unfinished packet of \MAINs in
this phase.

If the unfinished packet has size at most $\ell_{i+1}$, the size of
the completed packets is bounded by
$$L_\MAINs((u,v])>sX-\ell_{i+1}=(s-1)\ell_{i+1}\geq \left(1+\frac{2}{\alpha}\right)\ell_{i+1}\,,$$
using $s\geq 2+2/\alpha$. Since the total size
of packets smaller than $\ell_i$ is less then $(1+1/\alpha)\ell_i$ by 
Lemma~\ref{l:main}(ii), we obtain
$$L_\MAINs(\geq i,(u,v])-X>\frac{2\ell_{i+1}}{\alpha}-\left(1+\frac1\alpha\right)\ell_i
\ge 2\ell_i -\left(1+\frac1\alpha\right)\ell_i >0\,,$$
where the penultimate inequality uses $\ell_{i+1}/\alpha\geq\ell_i$. Thus \eqref{eq:master1} holds.

Otherwise the unfinished packet has size at least $\ell_{i+2}$ and, by
Step (3) of the algorithm, also $L_\MAINs((u,v])>\ell_{i+2}$. We
have $\ell_{i+2}\geq\alpha\ell_{i+1}$  and by the case condition
$\ell_{i+1}>2\ell_i$ we obtain
$$L_\MAINs(\geq i,(u,v])-X>(\alpha-1)\ell_{i+1}-\left(1+\frac1\alpha\right)\ell_i>
2(\alpha-1)\ell_i-\left(1+\frac1\alpha\right)\ell_i\geq 0\,,$$
as the definition of $\alpha_1$ implies that $2(\alpha-1)\geq 1+1/\alpha$ for
$\alpha\geq\alpha_1$. Thus \eqref{eq:master1} holds.

\smallskip
We now observe that we have exhausted all the possible values of $X$
for $\alpha\geq\alpha_1$. Indeed, if at least two packets contribute to
$X$, either (vi) or (vii) applies.  Otherwise, at most a single
packet contributes to $X$, and one of the cases (i)-(iv) or (viii)
applies, as (iv) covers the case when $X\geq\ell_{i+2}$. Thus
\eqref{eq:master1} holds and the proof is complete.
\end{proof}

\subsubsection{Divisible Packet Sizes}\label{sec:main-divisible}

Now, we turn briefly to even more restricted \emph{divisible instances} considered by
Jurdziński et al.~\cite{JurdzinskiKL13}, which are a special case of $2$-separated instances.
Namely, we improve upon Theorem~\ref{thm:main-separated} in Theorem~\ref{thm:main-divisible}
presented below in the following sense: While the former guarantees that \MAINs is
$1$-competitive on (more general) $2$-separated instances at speed $s \geq 3$,
the latter shows that speed $s \geq 2.5$ is sufficient for 
(more restricted) divisible instances.
Moreover, we note that that by an example in Section~\ref{sec:examples},
the bound of Theorem~\ref{thm:main-divisible} is tight, i.e.,
\MAINs is not $1$-competitive for $s<2.5$, even on divisible instances.

\begin{theorem}
	\label{thm:main-divisible}
	If the packet sizes are divisible, then 
	$\MAINs$ is 1-competitive for $s\geq 2.5$.
\end{theorem}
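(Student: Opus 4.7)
The plan is to invoke the Master Theorem (Theorem~\ref{thm:master}) with $R = 1$. Inequality~\eqref{eq:master-init} on initial $i$-segments is immediate from Lemma~\ref{l:main-segments}(i), so the only task is to verify, for every proper $i$-segment $(u,v]$ with $v-u \geq \ell_i$, the bound
\[
L_\MAINs(\geq i,(u,v]) \;\geq\; L_\OPT(\geq i,(u,v]) \;=:\; X.
\]
Two consequences of divisibility do the heavy lifting: (a) every packet of size $\geq \ell_i$ has size an integer multiple of $\ell_i$, so both $X$ and $L_\MAINs(\geq i,(u,v])$ are multiples of $\ell_i$; and (b) $\ell_{i-1} \leq \ell_i/2$ while $\ell_{i+1} \geq 2\ell_i$. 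Write $X = n\ell_i$.

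The case $n = 0$ is immediate. For $n \geq 2$, plugging $\ell_{i-1} \leq \ell_i/2$ into Lemma~\ref{l:main-segments}(ii) and using $v-u \geq X = n\ell_i$ together with $s \geq 2.5$ gives
\[
L_\MAINs(\geq i,(u,v]) \;>\; \frac{s}{2}(v-u) - \frac{3}{2}\ell_i \;\geq\; \frac{5n}{4}\ell_i - \frac{3}{2}\ell_i \;\geq\; (n-1)\ell_i,
\]
where the last inequality uses $n \geq 2$. Since $L_\MAINs(\geq i,(u,v])$ is a multiple of $\ell_i$ by (a), it must then be at least $n\ell_i = X$.

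The only hard case is $n = 1$, where the above bound degenerates into a negative number. Here I argue directly from the schedule of $\MAINs$ on the single phase containing $(u,v]$ (by Lemma~\ref{l:main}(ii) the phase begins at $u$ and has no fault and no idle moment on $(u,v)$), splitting by the packet chosen in Step~(2) at $u$. In sub-case A the first packet already has size $\geq \ell_i$; since $u$ is not $i$-good that size is also $\leq \ell_i$, hence exactly $\ell_i$, and the packet finishes by $u + \ell_i/s \leq u + \ell_i \leq v$. In sub-case B the first packet has size below $\ell_i$, which by Step~(2)'s maximality rule forces $\ell(P^{<i}) \geq \ell_i$ at $u$; I then track $\rel(t) = s(t-u)$ and let $t_{k^*}$ be the first completion time with $\rel(t_{k^*}) \geq \ell_i$ (it exists because the phase stays busy on $(u,v]$ and $\rel(v) \geq s\ell_i > \ell_i$). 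Every packet completed up to and including $t_{k^*}$ has size strictly less than $\ell_i$, hence at most $\ell_{i-1} \leq \ell_i/2$, so
\[
\rel(t_{k^*}) \;<\; \ell_i + \ell_{i-1} \;\leq\; \tfrac{3}{2}\ell_i
\quad\text{and}\quad
t_{k^*} - u \;<\; \tfrac{3\ell_i}{2s} \;\leq\; 0.6\,\ell_i.
\]
Divisibility now gives $\ell_{i+1} \geq 2\ell_i > \rel(t_{k^*})$, so the largest eligible size in Step~(3) at $t_{k^*}$ is exactly $\ell_i$ (and such a packet is pending throughout the segment); $\MAINs$ schedules it, and it completes by $t_{k^*} + \ell_i/s \leq u + 0.6\ell_i + 0.4\ell_i = u + \ell_i \leq v$. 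In either sub-case $L_\MAINs(\geq i,(u,v]) \geq \ell_i = X$.

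The main obstacle is really only the $n = 1$ case; for $n \geq 2$ one simply combines Lemma~\ref{l:main-segments}(ii) with the integrality from (a). Inside the $n = 1$ analysis, divisibility is used twice: via $\ell_{i-1} \leq \ell_i/2$ to bound $\rel(t_{k^*}) < \tfrac{3}{2}\ell_i$, and via $\ell_{i+1} \geq 2\ell_i$ to guarantee that the first packet of size $\geq \ell_i$ that $\MAINs$ schedules after $\rel$ crosses $\ell_i$ has size exactly $\ell_i$, and hence finishes by $v$.
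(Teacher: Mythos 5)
Your proof is correct and uses the same overall framework as the paper (invoke the Master Theorem with $R=1$, handle initial $i$-segments via Lemma~\ref{l:main-segments}(i), and verify~\eqref{eq:master} on proper $i$-segments using divisibility to round $L_\MAINs(\geq i,(u,v])$ up to a multiple of $\ell_i$). Where you differ is in the structure of the case analysis for the proper $i$-segment. The paper distinguishes two cases by the size of the last (possibly jammed) packet relative to $n\ell_i$: if it is small, the total-work bound $s(v-u)-\ell_i-\ell_{i-1}-n\ell_i$ already gives $>(n-1)\ell_i$ for every $n\geq 1$; if it is large, the Step~(3) condition on the jammed packet forces the completed volume to be at least $(n+1)\ell_i$, and again one rounds up. This keeps the argument uniform in $n$. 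You instead split by $n$: for $n\geq 2$ the weaker generic bound $L_\MAINs(\geq i,(u,v])>s(v-u)/2-\ell_i-\ell_{i-1}$ from Lemma~\ref{l:main-segments}(ii) already suffices, and for $n=1$ you run a direct schedule-tracking argument showing that the first packet of size $\geq\ell_i$ that $\MAINs$ starts has size exactly $\ell_i$ (because divisibility gives $\ell_{i+1}\geq 2\ell_i>\rel$ at that moment) and completes by $u+\ell_i\leq v$. This $n=1$ step is essentially the ``intuition'' argument that the paper gives at the start of Section~\ref{sec:four} for speed~$4$, transplanted to speed~$2.5$ by using $\ell_{i-1}\leq\ell_i/2$ and $\ell_{i+1}\geq 2\ell_i$. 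Both routes work; the paper's is a bit shorter and avoids a dedicated $n=1$ case, while yours makes very explicit where the speedup $2.5$ and divisibility are each used. One small remark: your claim that $t_{k^*}$ exists is stated briefly; it does hold, but the clean way to see it is that if no completion time reached $\rel\geq\ell_i$, the packet running into the fault at $v$ would have size $<\ell_i$, yet would need size $>s(v-t_{k^*-1})=\rel(v)-\rel(t_{k^*-1})>2.5\ell_i-\ell_i=1.5\ell_i$, a contradiction.
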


\begin{proof}
	Lemma~\ref{l:main-segments}(i) implies \eqref{eq:master-init}.  We now
	prove \eqref{eq:master} for any proper $i$-segment $(u,v]$ with
	$v-u\geq\ell_i$ and $R=1$. The bound then follows by the
	Master Theorem. Since there is a fault at time $u$,
	we have $L_\OPT(\geq i,(u,v])\leq v-u$.
	
	By divisibility we have $L_\OPT(\geq i,(u,v])=n\ell_i$ for some
	nonnegative integer $n$. We distinguish two cases based on the size
	of the last packet started by \MAIN in the $i$-segment $(u,v]$, which
	is possibly unfinished due to a fault at $v$.
	
	If the unfinished packet has size at most $n\ell_i$, then
	$$L_\MAINs(\geq i,(u,v])>5(v-u)/2-\ell_i-\ell_{i-1}-n\ell_i
	\geq 5n\ell_i/2-3\ell_i/2-n\ell_i\geq (n-1)\ell_i$$
	by Lemma~\ref{lem:simple} and Lemma~\ref{l:main}(ii). Divisibility now implies
	$L_\MAINs(\geq i,(u,v])\geq n\ell_i=L_\OPT(\geq i,(u,v])$.
	
	Otherwise, by divisibility the size of the unfinished packet is at
	least $(n+1)\ell_i$ and the size of the completed packets is
	larger by the condition in Step (3) of the algorithm; here we also use the
	fact that $\MAINs$ completes the packet started at $u$, as its size is
	at most $\ell_i\leq v-u$ (otherwise, $u$ would be $i$-good, thus $C_i\ge u$
	and $(u,v]$ is not a proper $i$-segment). Thus $L_\MAINs(\geq i,(u,v])>(n+1)\ell_i-3\ell_i/2 \geq
	(n-1/2)\ell_i$.  Divisibility again implies $L_\MAINs(\geq i,(u,v])\geq
	n\ell_i=L_\OPT(\geq i,(u,v])$, which shows \eqref{eq:master}.
	%
	\end{proof}

\subsection{Some Examples for {\MAIN}} \label{sec:examples}


\subsubsection{General Packet Sizes}

\paragraph{Speeds below 2}

We show an instance on which the performance of \MAINs matches the bound of Theorem~\ref{thm:gen}.

\begin{remark}
	{\MAINs} has competitive ratio at least $1 + 2/s$ for $s < 2$.
\end{remark}

\begin{proof}
Choose a large enough integer $N$.
At time 0 the following packets are released:
$2N$ packets of size $1$, one packet of size $2$
and $N$ packet of size $4/s - \varepsilon$ for a small enough $\varepsilon > 0$
such that it holds $2 < 4/s - \varepsilon$.
These are all packets in the instance.

First there are $N$ phases, each of length $4/s - \varepsilon$ and ending by a fault.
{\OPT} completes a packet of size $4/s - \varepsilon$
in each phase, while {\MAINs} completes $2$ packets of size $1$
and then it starts a packet of size $2$ which is not finished.

Then there is a fault every $1$ unit of time, so that 
{\OPT} completes all packets of size 1, while the algorithm
has no pending packet of size 1 and as $s < 2$ the length of the phase is
not sufficient to finish a longer packet.

Overall, {\OPT} completes packets of total size $2 + 4/s - \varepsilon$
per phase, while the algorithm completes packets of total size only $2$ per phase.
The ratio thus tends to $1 + 2/s$ as $\varepsilon \rightarrow 0$.
 \end{proof}

\paragraph{Speeds between 2 and 4}

Now we show an instance which proves that \MAINs is not $1$-competitive for
$s<4$.  In particular, this implies that the speed sufficient for
$1$-competitiveness in Theorem~\ref{thm:four}, which we prove later,
cannot be improved.

\begin{remark}
	{\MAINs} has competitive ratio at least $4/s>1$ for $s \in [2,4)$.
\end{remark}

\begin{proof}
Choose a large enough integer $y$.
There will be four packet sizes: $1, x, y$ and $z$ such that $1 < x < y < z$,
$z = x+y-1$, and $x = y\cdot (s-2) / 2 + 2$; 
as $s\ge 2$ it holds $x > 1$ and
as $s < 4$ we have $x \leq y-1$ for a large enough $y$.

We will have $N$ phases again.
At time 0 the adversary releases all $N(y-1)$ packets of size $1$,
all $N$ packets of size $y$ and a single packet of size $z$
(never completed by either {\OPT} or {\MAINs}), whereas the packets of size $x$ are 
released one per phase. 

In each phase {\MAINs} completes, in this order:
$y-1$ packets of size $1$ and then a packet of size $x$, which has arrived just after
the $y-1$ packets of size $1$ are completed.
Next, it will start a packet of size $z$ and fail due to a jam.
We show that {\OPT} will complete a packet of size $y$. 
To this end, it is required that $y < 2(x+y-1) / s$, or equivalently
$x > y\cdot (s-2) / 2 + 1$ which holds by the choice of $x$.

After these $N$ phases, we will have jams every $1$ unit of time,
so that {\OPT} can complete all the $N(y-1)$ packets of size $1$,
while {\MAINs} will be unable to complete any packet (of size $y$ or larger).
The ratio per phase is
\[
\frac{\OPT}{\MAINs} = \frac{y-1 + y}{y-1 + x}
= \frac{2y-1}{y-1 + \frac{y\cdot (s-2)}{2} + 2}
= \frac{2y-1}{\frac{y\cdot s}{2} + 1}
\]
which tends to $4/s$ as $y \to \infty$.

\end{proof}

This example also disproves the claim
of Anta \etal~\cite{Anta13-dual} that their $(m,\beta)$-LAF algorithm is $1$-competitive
at speed $3.5$, even for one channel, i.e., $m=1$, where it behaves almost exactly as {\MAINs} ---
the sole difference is that LAF starts a phase by choosing a ``random'' packet.
As this algorithm is deterministic, we understand this to mean ``arbitrary'', so in
particular the same as chosen by {\MAINs}.

\subsubsection{Divisible Case}

We give an example that shows that \MAIN\ is not
very good for divisible instances, in particular it is not $1$-competitive
for any speed $s<2.5$ and thus the bound in Theorem~\ref{thm:main-divisible}
is tight. 

\begin{remark}
	{\MAINs} has competitive ratio at least $4/3$ on divisible instances if $s < 2.5$.
\end{remark}

\begin{proof}
We use packets of sizes $1$,
$\ell$, and $2\ell$ and we take $\ell$ sufficiently large compared to the
given speed or competitive ratio. There are many packets of size $1$
and $2\ell$ available at the beginning, the packets of size $\ell$
arrive at specific times where \MAIN\ schedules them immediately.

The faults occur at times divisible by $2\ell$, so
the optimum schedules one packet of size $2\ell$ in each phase between
two faults. We have $N$ such phases, $N(2\ell-1)$ packets of size $1$
and $N$ packets of size $2\ell$ available at the beginning. In each phase, $\MAINs$ schedules
$2\ell-1$ packets of size $1$, then a packet of size $\ell$ arrives
and is scheduled, and then a packet of size $2\ell$ is scheduled. The
algorithm would need speed $2.5-1/(2\ell)$ to complete it. So, for
$\ell$ large, the algorithm completes only packets of total size
$3\ell-1$ per phase. After these $N$ phases, we have faults every 1 unit of time, so the optimum
schedules all packets of size $1$, but the algorithm has no packet of size 1 pending
and it is unable to finish a longer packet.
The optimum thus finishes all packets $2\ell$ plus all small packets,
a total of $4\ell-1$ per phase. Thus the ratio tends to $4/3$ as $\ell\to\infty$.

\end{proof}

\subsection{Algorithm {\DIV} and its Analysis}
\label{sec:div}

We introduce our other algorithm \DIV designed for divisible instances.
Actually, it is rather a fine-tuned version of \MAIN, as it differs from it only in Step (3),
where \DIV enforces an additional {\it divisibility condition}, set apart by italics
in its formalization below.
Then, using our framework of local analysis from this section,
we give a simple proof that \DIV matches the performance of the algorithms
from~\cite{JurdzinskiKL13} on divisible instances.

\begin{center}
	\fbox{\parbox{0.9\textwidth}{
			{\bf Algorithm \DIV}
			\begin{enumerate}[label=(\arabic*), nosep]
				\item 
				If no packet is pending, stay idle until the next release time.
				\item
				Let $i$ be the maximal $i \leq k$ such that there is a pending packet
				of size $\ell_i$ and $\ell(P^{<i})<\ell_i$.  Schedule a packet of size
				$\ell_i$ and set $t_B = t$.
				\item
				Choose the maximum $i$ such that
				\\
				\mbox{}
				\quad
				(i) there is a pending packet of size $\ell_i$, 
				\\
				\mbox{}
				\quad
				(ii) $\ell_i \leq \rel(t)$ and
				\\
				\mbox{}
				\quad
				{\it (iii) $\ell_i$ divides $\rel(t)$.}
				\\
				Schedule a packet of size $\ell_i$.  Repeat Step (3) as long as such $i$
				exists.
				\item
				If no packet satisfies the condition in Step (3), go to Step (1). 
			\end{enumerate}
	}}
\end{center}

Throughout the section we assume that the packet sizes are divisible.
We note that Lemmata~\ref{lem:simple} and~\ref{l:master-small} and the Master Theorem 
apply to \DIV as well, since their proofs are not influenced by the divisibility condition.
In particular, the definition of critical times $C_i$
(Definition~\ref{def:critical}) remains the same.
Thus, this section is devoted to leveraging divisibility
to prove stronger stronger analogues of Lemma~\ref{l:main} and Lemma~\ref{l:main-segments}
(which are not needed to prove the Master Theorem) in this order.
Once established, these are combined with the Master Theorem to prove
that $\DIV(2)$ is $1$-competitive and $\DIV(1)$ is $2$-competitive.
Recall that $\rel(t) = s\cdot (t - t_B)$ is the relative time after the start
of the current phase $t_B$, scaled by the speed of the algorithm.

\begin{lemma}
	\label{l:div-obs}
	\begin{enumerate}[label=\rm(\roman*), nosep]
		\item
		If \DIV\ starts or completes a packet of size $\ell_i$ at time $t$,
		then $\ell_i$ divides $\rel(t)$.
		\item
		Let $t$ be a time with $\rel(t)$ divisible by $\ell_i$ and
		$\rel(t)>0$. If a packet of size $\ell_i$ is pending at time $t$, then
		\DIV\ starts or continues running a packet of size at least $\ell_i$
		at time $t$.
		\item
		If at the beginning of phase at time $u$ a packet of size $\ell_i$
		is pending and no fault occurs before time $t=u+\ell_i/s$, then the
		phase does not end before $t$. 
	\end{enumerate}
\end{lemma}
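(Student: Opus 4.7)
My plan is to prove (i), (ii), (iii) in sequence, each part building on the previous.

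For part (i), I will induct on the packets scheduled in a phase. Step~(2) starts with $\rel(t_B)=0$, trivially divisible by any $\ell_i$. Every later packet of size $\ell_i$ is chosen by Step~(3), whose new condition~(iii) directly enforces $\ell_i \mid \rel(t)$. Upon its completion at $t + \ell_i/s$, the new value $\rel(t) + \ell_i$ remains divisible by $\ell_i$, closing the induction.

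For part (ii), I will case on \DIV's state at time $t$. If \DIV is strictly in the middle of a packet of size $\ell_{j'}$ started at $\tau$ with $\tau < t < \tau + \ell_{j'}/s$, I suppose $\ell_{j'} < \ell_i$ for contradiction: divisibility of sizes gives $\ell_{j'} \mid \ell_i$, part~(i) applied at $\tau$ gives $\ell_{j'} \mid \rel(\tau)$, and combining with $\ell_{j'} \mid \ell_i \mid \rel(t)$ forces $\ell_{j'} \mid \rel(t) - \rel(\tau) = s(t - \tau)$, impossible since $0 < s(t - \tau) < \ell_{j'}$. Otherwise $t$ is a Step~(3) moment (Step~(2) being excluded by $\rel(t) > 0$), and a pending size-$\ell_i$ packet satisfies all Step~(3) conditions---$\ell_i \leq \rel(t)$ follows from $\rel(t) > 0$ and $\ell_i \mid \rel(t)$---so max-selection picks some size $\geq \ell_i$.

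For the main part (iii), I will argue by contradiction: suppose the phase ends at some $t' \in (u, u + \ell_i/s)$ without a fault. Then $\rel(t') < \ell_i$, the $\ell_i$-packet at $u$ is still pending at $t'$, and Step~(3) failed there. Let $\ell_j$ be the smallest pending size at $t'$; the failure of Step~(3) on $\ell_j$ leaves case~(a)~$\ell_j > \rel(t')$ or case~(b)~$\ell_j \leq \rel(t')$ with $\ell_j \nmid \rel(t')$. Case~(a) I handle in parallel with Lemma~\ref{l:main}(i): no packet of size $<\ell_j$ is pending at $t'$, so all such packets pending at $u$ were scheduled, giving $\ell(P^{<j}(u)) \leq \rel(t') < \ell_j$; taking $\ell_{j'} \geq \ell_j$ to be the smallest size pending at $u$ above $\ell_j$ (which exists since $\ell_i$ is pending), we obtain $\ell(P^{<j'}(u)) = \ell(P^{<j}(u)) < \ell_{j'}$, forcing Step~(2) to select some $\ell_{i^*} \geq \ell_{j'} \geq \ell_j$, which contradicts that the first scheduled packet is a summand of $\rel(t') < \ell_j$.

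Case~(b), the main new obstacle, leverages part~(ii). Fix the specific $\ell_j$-packet pending at $t'$, released at some $\tau_r \leq t'$, and let $\tau^*$ be the smallest time of the form $u + c\ell_j/s$ with $c \geq 1$ and $\tau^* \geq \tau_r$. In the generic situation $\tau^* \leq t'$, at $\tau^*$ the value $\rel$ is a positive multiple of $\ell_j$ and our packet is pending, so part~(ii) forces \DIV to process a packet of size $\geq \ell_j$ there; since any such size is itself a multiple of $\ell_j$ by divisibility, iterating part~(ii) at every subsequent completion keeps $\rel$ on multiples of $\ell_j$, so $\rel(t')$ too would be a multiple of $\ell_j$, contradicting~(b). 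The delicate residual situation is $\tau^* > t'$, which arises only when $\ell_j$ is released so close to $t'$ that no multiple of $\ell_j$ lies in $[\tau_r, t']$; this is the hardest part of the argument and I would rule it out by a Step~(2)-style analysis applied to the smallest size $\ell_{j_1} > \ell_j$ pending at $u$ (which satisfies $\ell_{j_1} \geq 2\ell_j$ by divisibility), combined with careful accounting of what \DIV could possibly have scheduled in the tight interval $(\tau_r, t']$ of length less than $\ell_j/s$.
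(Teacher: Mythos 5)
Parts (i) and (ii) of your proposal are correct and essentially match the paper's (more terse) reasoning. For (ii) you give a careful elementary argument for why the packet running through $t$ cannot be smaller than $\ell_i$, which is exactly the content behind the paper's one-line appeal to (i).

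Part (iii), however, takes a genuinely different route from the paper, and it has a gap. The paper proves (iii) by \emph{induction on $i$}: it fixes $P^{<i}$ at time $u$ (total size $\geq \ell_i$ by Step~(2)), applies the induction hypothesis to the largest size $\ell_j$ occurring in $P^{<i}$ to get the phase to $\rel=\ell_j$, and then argues that from that point on $\rel$ remains a multiple of the current largest pending size from $P^{<i}$, so Step~(3) keeps succeeding until all of $P^{<i}$ is drained and $\rel\geq\ell_i$. You instead argue by direct contradiction, anchoring on the smallest size $\ell_j$ pending \emph{at the putative end time} $t'$ and splitting into cases (a) $\ell_j>\rel(t')$ and (b) $\ell_j\leq\rel(t')$, $\ell_j\nmid\rel(t')$.

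Your case (a) is sound: it is the same Step~(2) contradiction used in the paper's Lemma~\ref{l:main}(i). The genuine gap is in case (b). The ``generic'' sub-case ($\tau^*\leq t'$) works: part (ii) at $\tau^*$ and iteration does force $\rel(t')$ onto a multiple of $\ell_j$. But the ``residual'' sub-case ($\tau^*>t'$, i.e., the $\ell_j$-packet is released strictly after the last time $\sigma^*\in(u,t')$ at which $\rel$ is a positive multiple of $\ell_j$) is not actually proved; you only sketch a direction (``Step~(2)-style analysis on $\ell_{j_1}$ ... combined with careful accounting'') and explicitly concede it is ``the hardest part''. This is a real hole. By choosing $\ell_j$ relative to $t'$, your anchor size depends on packets released late in the phase, and neither (i) nor (ii) gives you any handle on the interval $(\sigma^*,t']$ where only packets of size $<\ell_j$ can be scheduled and the $\ell_j$-packet is not yet present; one would need some recursion or a different invariant to close it. This is precisely the difficulty the paper sidesteps by anchoring the induction on the set $P^{<i}$ fixed at time $u$, which is immune to later releases. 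As written, part (iii) is therefore not established.
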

\begin{proof}
	(i) follows trivially from the description of the algorithm.
	
	(ii): If \DIV\ continues running  some packet at $t$, it cannot be a packet
	smaller than $\ell_i$ by (i) and the claim follows. If \DIV\
	starts a new packet, then a packet of size $\ell_i$ is pending by the
	assumption. Furthermore, it satisfies all the conditions from Step 3
	of the algorithm, as $\rel(t)$ is divisible by $\ell_i$ and
	$\rel(t)\geq\ell_i$ (from $\rel(t)>0$ and divisibility).
	Thus the algorithm starts a packet of size at least $\ell_i$.
	
	(iii): We proceed by induction on $i$. Assume that no fault happens
	before $t$. If the phase starts by a packet of size at least $\ell_i$,
	the claim holds trivially, as the packet is not completed before
	$t$. This also proves the base of the induction for $i=1$.
	
	It remains to handle the case when the phase starts by a packet
	smaller than $\ell_i$. Let $P^{<i}$ be the set of all packets of size
	smaller than $\ell_i$ pending at time $u$. By the Step (2) of the
	algorithm, $\ell(P^{<i})\geq\ell_i$. We
	show that all packets of $P^{<i}$ are completed if no fault happens, which
	implies that the phase does not end before $t$.
	
	Let $j$ be such that $\ell_j$ is the maximum size of a packet in $P^{<i}$;
	note that $j$ exists, as the phase starts by a packet smaller than
	$\ell_i$.  By the induction assumption, the phase does not end before
	time $t'=u+\ell_j/s$.  From time $t'$ on, the conditions in Step (3)
	guarantee that the remaining packets from $P^{<i}$ are processed
	from the largest ones, possibly interleaved with some of the newly
	arriving packets of larger sizes, as $\rel(\tau)$ for the current
	time $\tau\ge t'$ such that a packet completes at $\tau$ is always divisible by the size of the largest pending
	packet from $P^{<i}$. This shows that the phase cannot end before all
	packets from $P^{<i}$ are completed if no fault happens.
	%
\end{proof}

Now we prove a stronger analogue of Lemma~\ref{l:main-segments}.
\begin{lemma}
	\label{l:div-critical}
	\begin{enumerate}[label=\rm(\roman*), nosep]
		\item
		If $(u,v]$ is the initial $i$-segment,
		then $$L_\DIVs(\geq i,(u,v])>s(v-u)-3\ell_k\,.$$
		\item
		If $(u,v]$ is a proper $i$-segment and $v-u\geq \ell_i$ then $$L_\DIVs(\geq
		i,(u,v])>s(v-u)/2-\ell_i\,.$$ Furthermore, $L_\DIVs((u,v])>s(v-u)/2$ and
		$L_\DIVs((u,v])$ is divisible by $\ell_i$.
	\end{enumerate}
\end{lemma}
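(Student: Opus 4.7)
My plan is to transfer the proof strategy of Lemma~\ref{l:main-segments} to \DIV, exploiting divisibility to replace \MAIN's bound of $\ell_i+\ell_{i-1}$ on smalls in a phase by the sharper $\ell_i$. The key single-phase fact I will establish is: within any phase in which a packet of size $\ell_i$ is pending throughout, \DIV completes smalls totaling at most $\ell_i$. To see this, as soon as $\rel$ reaches the first positive multiple of $\ell_i$, Lemma~\ref{l:div-obs}(ii) forces \DIV to schedule a packet of size at least $\ell_i$; by Lemma~\ref{l:div-obs}(i) such a packet's size is itself a multiple of $\ell_i$, so $\rel$ lands on a multiple of $\ell_i$ again at completion, Lemma~\ref{l:div-obs}(ii) reapplies, and by induction the phase runs only bigs from that point on (``once big, always big''). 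Hence smalls live in an initial $\rel$-window of length at most $\ell_i$. A further divisibility observation (every small size divides $\ell_i$) shows that if any small is actually run then the window fills exactly to $\ell_i$, so the smalls total in a phase is either $0$ or exactly $\ell_i$.

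For part~(ii), a proper $i$-segment $(u,v]$ with $v-u\geq\ell_i$, I will argue that since $u$ is a fault time and not $i$-good, a new phase starts at $u$ with first packet of size $\ell_{j_0}\leq\ell_i$; the assumption $v-u\geq\ell_i$ together with $s\geq 1$ ensures this packet completes, so Lemma~\ref{lem:simple} (whose proof works unchanged for \DIV) gives $L_\DIVs((u,v])>s(v-u)/2$. Applying the single-phase bound to this phase yields $L_\DIVs(<i,(u,v])\leq\ell_i$, from which the stated bound $L_\DIVs(\geq i,(u,v])>s(v-u)/2-\ell_i$ follows by subtraction. The ``$0$ or exactly $\ell_i$'' sharpening makes $L_\DIVs(<i,(u,v])$ a multiple of $\ell_i$, and since the bigs completed by \DIV are sums of sizes that are multiples of $\ell_i$, $L_\DIVs((u,v])$ itself is divisible by $\ell_i$.

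For part~(i), I follow the proof of Lemma~\ref{l:main-segments}(i): let $u'$ be the end of the phase containing $u$ if that phase ends in $(u,v]$, and $u'=u$ otherwise; then bound $L_\DIVs(\geq i,(u,v])\geq L_\DIVs(\geq i,(u',v])$, apply the single-phase bound to the phase starting at $u'$ (or, when $u'=u$, to the phase covering $(u,v]$, where a possible straddling small of size at most $\ell_{i-1}$ must be accounted for), and use $L_\DIVs((u',v])>s(v-u')-\ell_k$ for the unfinished packet at $v$. Summing the losses --- at most $\ell_i$ for the shift $u\to u'$, at most $\ell_k$ for the unfinished packet, and at most $\ell_i$ (or $\ell_i+\ell_{i-1}<2\ell_k$ in the one-phase case) for smalls --- gives the strict bound $>s(v-u)-3\ell_k$ after using $\ell_i\leq\ell_k$. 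The hardest step, I expect, will be verifying $s(u'-u)\leq\ell_i$: the \MAIN argument (that $\ell_i$ is a valid Step~(3) choice already at $u+\ell_i/s$) fails for \DIV because Step~(3) also requires $\ell_i\mid\rel$. Instead, I would take the first time $t^*\in(u,u+\ell_i/s]$ at which $\rel$ is a positive multiple of $\ell_i$ (such $t^*$ exists because $\rel$ spans an interval of length $\ell_i$), apply Lemma~\ref{l:div-obs}(ii) at $t^*$ to force \DIV to run a packet of size $\geq\ell_i$ there, and then use ``once big, always big'' in the fault-free interval $(t^*,v]$ to preclude the phase from ending strictly after $t^*$, yielding $u'\leq t^*\leq u+\ell_i/s$.
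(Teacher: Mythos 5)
Your overall strategy matches the paper's: exploit divisibility via Lemma~\ref{l:div-obs} to get a ``once big, always big'' invariant within a phase (once $\rel$ hits a positive multiple of $\ell_i$ with an $\ell_i$-packet pending, only sizes $\ge\ell_i$ are run and Step~(4) is never reached), conclude that the smalls in a fresh phase total exactly $0$ or exactly $\ell_i$, and feed this into the accounting of Lemma~\ref{l:main-segments}. That is precisely the paper's route; in particular the paper also observes $L_\DIVs(<i,(u,v])\in\{0,\ell_i\}$ for part~(ii), and for part~(i) it too splits on whether the phase of $u$ ends by $u+\ell_i/s$.

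There is, however, a logical slip in your treatment of part~(i). You define $u'$ as the end of the phase containing $u$ ``if that phase ends in $(u,v]$'', and then argue that ``once big, always big'' on the fault-free stretch ``precludes the phase from ending strictly after $t^*$'', concluding $u'\le t^*\le u+\ell_i/s$. That inference does not hold: what the invariant rules out is a Step~(4) termination in $(t^*,v)$, but the phase can (and typically does) end at $v$ itself because $v$ is a fault. Under your definition that gives $u'=v$, and then $s(u'-u)=s(v-u)$ is not bounded by $\ell_i$, so your bookkeeping breaks. The fix is to set $u'$ equal to the phase's end only when that end lies \emph{strictly before} $v$, and $u'=u$ otherwise; the latter is then handled by your one-phase analysis with the possible straddling small of size $\le\ell_{i-1}$. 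With this repair your two branches coincide with the paper's Case~1 / Case~2, and the loss budget ($\le\ell_i$ for the shift, $\le\ell_i$ for smalls in the new phase, $\le\ell_k$ for the unfinished packet, or $\le\ell_i+\ell_{i-1}+\ell_k$ in the one-phase branch) gives the stated $>s(v-u)-3\ell_k$. One further point worth making explicit in both parts is that the phase reaching $\rel=\ell_i$ before ending is not automatic and relies on Lemma~\ref{l:div-obs}(iii); you use this implicitly when invoking ``$\rel$ spans an interval of length $\ell_i$,'' and the paper invokes it in its Case~1 of part~(i).
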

\begin{proof}
	Suppose that time $t\in[u,v)$ satisfies that $\rel(t)$ is divisible by
	$\ell_i$ and $\rel(t)>0$. Then observe that Lemma~\ref{l:div-obs}(ii)
	together with the assumption that a packet of size $\ell_i$ is
	always pending in $[u,v)$ implies that from time $t$ on only packets
	of size at least $\ell_i$ are scheduled, and thus the current phase does not
	end before $v$. 
	
	For a proper $i$-segment $(u,v]$, the previous observation for
	$t=u+\ell_i/s$ immediately implies (ii): Observe that $t\leq v$ by the
	assumption of (ii). Now $L_\DIVs(<i,(u,v])$ is either equal to $0$ (if
	the phase starts by a packet of size $\ell_i$ at time $u$), or equal to
	$\ell_i$ (if the phase starts by a smaller packet). In both cases
	$\ell_i$ divides $L_\DIVs(<i,(u,v])$ and thus also $L_\DIVs((u,v])$. As
	in the analysis of \MAIN, the total size of completed packets is more
	than $s(v-u)/2$ and (ii) follows.
	
	For the initial $i$-segment $(u,v]$ we first observe that the claim is
	trivial if $s(v-u)\leq 2\ell_i$. So we may assume that
	$u+2\ell_i/s\leq v$. Now we distinguish two cases:
	\begin{enumerate}
		\item The phase of $u$ ends at some time $u'\leq u+\ell_i/s$: Then, by 
		Lemma~\ref{l:div-obs}(iii) and the initial observation, the phase that 
		immediately follows the one of $u$ does not end in $(u',v)$ and from time 
		$u'+\ell_i/s$ on, only packets of size at least $\ell_i$ are scheduled.
		Thus $L_\DIVs(<i,(u,v])\leq 2\ell_i$.
		\item The phase of $u$ does not end by time $u+\ell_i/s$:
		Thus there exists $t\in (u,u+\ell_i/s]$ such that $\ell_i$ divides
		$\rel(t)$ and also $\rel(t)>0$ as $t>u$. Using the initial
		observation for this $t$ we obtain that
		the phase does not end in $(u,v)$ and from time
		$t$ on only packets of size at least $\ell_i$ are
		scheduled. Thus $L_\DIVs(<i,(u,v])\leq\ell_i$.
	\end{enumerate}
	In both cases $L_\DIVs(<i,(u,v])\leq2\ell_i$, furthermore only a single
	packet is possibly unfinished at time $v$. Thus $L_\DIVs(\geq
	i,(u,v])>s(v-u)-2\ell_i-\ell_k$ and (i) follows.
\end{proof}

\begin{theorem}\label{thm:DIVresults}
	Let the packet sizes be divisible. Then $\DIV(1)$ is
	$2$-competitive. Also, for any speed $s\geq 2$, $\DIVs$ is
	$1$-competitive.
\end{theorem}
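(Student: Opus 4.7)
The plan is to invoke the Master Theorem (Theorem~\ref{thm:master}) with $R=1$ when $s\geq 2$ and with $R=2$ when $s=1$. The initial-segment condition \eqref{eq:master-init} is immediate from Lemma~\ref{l:div-critical}(i), which delivers the even stronger bound $L_{\DIVs}(\geq i,(u,v])>s(v-u)-3\ell_k$ for every $s\geq 1$, so the real task is to verify \eqref{eq:master} on every proper $i$-segment $(u,v]$ with $v-u\geq\ell_i$.

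The pivotal structural fact I would exploit is divisibility of both sides. On the optimum's side, $L_{\OPT}(\geq i,(u,v])$ is a sum of packet sizes each divisible by $\ell_i$, hence equals $n\ell_i$ for some integer $n\geq 0$; since a fault occurs at $u$, we have $n\ell_i\leq v-u$. On the algorithm's side, Lemma~\ref{l:div-critical}(ii) states that $L_{\DIVs}((u,v])$ is divisible by $\ell_i$; inspection of its proof also shows $L_{\DIVs}(<i,(u,v])\in\{0,\ell_i\}$, so $L_{\DIVs}(\geq i,(u,v])$ is divisible by $\ell_i$ as well. The arithmetic consequence is that any strict lower bound of the shape $>(n-1)\ell_i$ on a quantity known to be a multiple of $\ell_i$ automatically promotes to $\geq n\ell_i$, which is exactly what \eqref{eq:master} requires.

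For $s\geq 2$ and $R=1$, Lemma~\ref{l:div-critical}(ii) gives
\[
L_{\DIVs}(\geq i,(u,v])\;>\;\frac{s(v-u)}{2}-\ell_i\;\geq\;(v-u)-\ell_i\;\geq\;(n-1)\ell_i,
\]
and the divisibility promotion above yields $L_{\DIVs}(\geq i,(u,v])\geq n\ell_i=L_{\OPT}(\geq i,(u,v])$, i.e.\ \eqref{eq:master}. For $s=1$ and $R=2$, summing the two inequalities in Lemma~\ref{l:div-critical}(ii) gives
\[
L_{\DIVs}((u,v])+L_{\DIVs}(\geq i,(u,v])\;>\;\frac{v-u}{2}+\left(\frac{v-u}{2}-\ell_i\right)\;=\;(v-u)-\ell_i\;\geq\;(n-1)\ell_i;
\]
both summands on the left are multiples of $\ell_i$, so their sum is too and therefore at least $n\ell_i=L_{\OPT}(\geq i,(u,v])$, which is \eqref{eq:master} with $R=2$. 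Plugging both verifications into Theorem~\ref{thm:master} yields the two claims.

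I do not expect any real difficulty in this wrap-up; the entire weight of the argument sits in Lemma~\ref{l:div-critical}, which is itself the main obstacle and the reason the divisibility condition in Step~(3) of \DIV was introduced. The delicate point is ensuring that $L_{\DIVs}((u,v])$ is a multiple of $\ell_i$, which forces the phase either to start with an $\ell_i$-aligned packet or to accumulate exactly $\ell_i$ worth of smaller packets before the first $\ell_i$-aligned time.
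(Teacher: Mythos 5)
Your proposal is correct and follows essentially the same route as the paper: invoke the Master Theorem with $R=1$ (for $s\geq 2$) or $R=2$ (for $s=1$), feed in Lemma~\ref{l:div-critical}, and use divisibility to promote the strict bound $>(n-1)\ell_i$ to $\geq n\ell_i$. One tiny simplification you could make: divisibility of $L_{\DIVs}(\geq i,(u,v])$ by $\ell_i$ needs no ``inspection of the proof'' of Lemma~\ref{l:div-critical}(ii), since it is a sum of packet sizes $\ell_j$ with $j\geq i$, each of which is a multiple of $\ell_i$ by the divisibility assumption (this is also how one sees that $L_{\OPT}(\geq i,(u,v])$ is a multiple of $\ell_i$).
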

\begin{proof}
	Lemma~\ref{l:div-critical}(i) implies \eqref{eq:master-init}.  We now
	prove (\ref{eq:master}) for any proper $i$-segment $(u,v]$ with
	$v-u\geq\ell_i$ and appropriate $R$. The theorem then follows by the
	Master Theorem.
	
	Since $u$ is a time of a fault, we have $L_\OPT(\geq i,(u,v])\leq v-u$.
	If $L_\OPT(\geq i,(u,v])=0$, (\ref{eq:master}) is trivial. Otherwise
	$L_\OPT(\geq i,(u,v])\geq\ell_i$, thus $v-u\geq \ell_i$ and the
	assumption of Lemma~\ref{l:div-critical}(ii) holds.
	
	\bigskip
	For $s\geq 2$, Lemma~\ref{l:div-critical}(ii) implies
	\[
	L_\DIVs(\geq i,(u,v])>s(v-u)/2-\ell_i
	\geq v-u-\ell_i\geq L_\OPT(\geq i,(u,v])-\ell_i
	\,.
	\]
	Since both $L_\DIVs(\geq i,(u,v])$ and $L_\OPT(\geq i,(u,v])$ are
	divisible by $\ell_i$, this implies $L_\DIVs(\geq i,(u,v])\geq
	L_\OPT(\geq i,(u,v])$, i.e., (\ref{eq:master}) holds for $R=1$.
	
	\bigskip
	For $s=1$, Lemma~\ref{l:div-critical}(ii) implies
	\begin{align*}
		L_\DIV((u,v])+ L_\DIV(\geq i,(u,v])&>(v-u)/2+(v-u)/2-\ell_i\\
		&\geq v-u-\ell_i\geq L_\OPT(\geq i,(u,v])-\ell_i\,.
	\end{align*}
	Since $L_\DIV((u,v])$, $L_\DIV(\geq i,(u,v])$, and $L_\OPT(\geq
	i,(u,v])$ are all divisible by
	$\ell_i$, this implies
	$L_\DIV((u,v])+
	L_\DIV(\geq i,(u,v])\geq L_\OPT(\geq i,(u,v])$,
	i.e., (\ref{eq:master}) holds for $R=2$.
\end{proof}

\subsubsection{Example with Two Divisible Packet Sizes}\label{sec:example-2sizes}

We show that for our algorithms speed $2$ is necessary
if we want a ratio below $2$, even if there are only two packet sizes
in the instance. This matches the upper bound given in Theorem~\ref{thm:gen} for $\MAIN(2)$
and our upper bounds for $\DIV(s)$ on divisible instances, i.e., ratio $2$ for $s<2$
and ratio $1$ for $s \geq 2$. 
We remark that by Theorem~\ref{thm:LB2}, no deterministic algorithm can be 
$1$-competitive with speed $s<2$ on divisible instances, but this example shows
a stronger lower bound for our algorithms, namely that their ratios are at least $2$.

\begin{remark}
\MAIN and \DIV have ratio no smaller than $2$ when $s<2$, even if packet sizes are only $1$ and
$\ell \geq \max\{s+\epsilon,\ \epsilon / (2-s) \}$
for an arbitrarily small $\epsilon>0$.
\end{remark}

\begin{proof}
We denote either algorithm by {\ALG}.
There will be $N$ phases, that all look the same: In each phase, issue
one packet of size $\ell$ and $\ell$ packets of size 1, and have the phase
end by a fault at time $(2\ell-\varepsilon)/s \geq \ell$ which holds by 
the bounds on $\ell$.
Then {\ALG} will complete all $\ell$ packets of size $1$
but will not complete the one of size $\ell$.  By the previous inequality,
{\OPT} can complete the packet of size $\ell$ within the phase. 
Once all $N$ phases are over, the jams occur every $1$ unit of time,
which allows {\OPT} completing all $N\ell$ remaining packets of size $1$.
However, {\ALG} is unable to complete any of the packets of size $\ell$.
Thus the ratio is $2$.
 \end{proof}

\section{PrudentGreedy with Speed 4}
\label{sec:four}

In this section we prove that speed 4 is sufficient for \MAIN\ to be
1-competitive.  An example in Section~\ref{sec:examples} show that speed
4 is also necessary for our algorithm.

\begin{theorem}\label{thm:four}
$\MAINs$ is 1-competitive for $s\geq 4$.
\end{theorem}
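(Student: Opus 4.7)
The plan is to go beyond the local per-segment analysis of Section~\ref{sec:local} and construct a global charging from the packets completed by \OPT to those completed by \MAINs. The reason the Master Theorem cannot give $s=4$ directly is that a proper $i$-segment $(u,v]$ may be short, namely $\ell_i \leq v-u < 2\ell_i$: in such a segment Lemma~\ref{l:main}(ii) only limits \MAIN's small work to less than $\ell_i+\ell_{i-1}$, so the consequent bound $L_\MAINs(\geq i,(u,v]) > 2(v-u)-\ell_i-\ell_{i-1}$ can be strictly below $L_\OPT(\geq i,(u,v])$ when \OPT fits in a single packet of size in $[\ell_i,2\ell_i)$ while \MAIN uses most of $(u,v]$ on smaller packets. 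By contrast, long proper $i$-segments with $v-u\geq 2\ell_i$ already satisfy \eqref{eq:master} with $R=1$ locally, so the non-local argument only has to redistribute the deficits arising from short bad segments.

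First I would keep the skeleton provided by the critical times $C_k\leq\dots\leq C_0$ from Definition~\ref{def:critical}, together with Lemma~\ref{l:master-small}, which already handles the packets of size $\ell_i$ that \OPT completes in $(0,C_i]$. What remains is to bound $\sum_i L_\OPT(\geq i,(C_i,C_{i-1}])$. I would classify each proper $i$-segment $(u,v]$ as \emph{good} when the local inequality $L_\MAINs(\geq i,(u,v])\geq L_\OPT(\geq i,(u,v])$ holds (always true for $v-u\geq 2\ell_i$ at speed $4$), and as \emph{bad} otherwise; in a bad segment \MAIN has completed small packets of total size close to $\ell_i+\ell_{i-1}$, still finishes at least one packet of size $\geq \ell_i$ (since at $s=4$ we have $L_\MAINs((u,v])>2\ell_i$ while the small work is bounded by Lemma~\ref{l:main}(ii)), and the excess large mass of \OPT inside the bad segment is at most one extra large packet. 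I would then design a charging that sends this excess to \MAIN's work in neighbouring segments, exploiting the prudent rule of Step~(2): whenever pending small packets accumulate beyond a threshold $\ell_j$, the next phase starts with a packet of size at least $\ell_j$, so the small-packet work \MAIN performs inside a bad segment is "repaid" by a large packet \MAIN eventually completes in a later phase, unless that phase is interrupted by a fault which then creates its own segment already covered by the local bound.

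The main obstacle is showing that this charging is consistent, i.e.\ that no \MAINs-completed packet receives total charge exceeding its own size, even when bad segments at several different levels $i$ point to the same nearby \MAIN packets. I expect to have to argue via a time-ordered induction, sweeping across phases and keeping a potential that tracks the currently unmatched large mass of \OPT minus the accumulated large mass completed by \MAIN after the most recent fault, and showing that this potential stays $O(\ell_k)$ thanks to the speed-$4$ bound $L_\MAINs((u,v])>2(v-u)$ in every proper phase. The delicate part is the interaction with the opening of each new phase, where Step~(2) chooses the first packet based on the pending set: I will need to argue that whenever a bad segment at level $i$ transfers a deficit forward in time, the next phase either immediately starts a packet of size $\geq \ell_i$ (absorbing the deficit) or the pending small packets are so few that the deficit is of size $O(\ell_k)$ and can be folded into the additive constant. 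Once this consistency is set up, summing the charges together with Lemma~\ref{l:master-small} and the initial $i$-segment bound of Lemma~\ref{l:main-segments}(i) gives $L_\OPT \leq L_\MAINs + O(k\ell_k)$, which is the claimed $1$-competitiveness.
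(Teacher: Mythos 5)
Your proposal correctly identifies the core obstacle (short proper $i$-segments where \OPT fits a single packet of size in $[\ell_i,2\ell_i)$ while \MAIN mostly works on smaller packets), and the observation that at $s=4$ the algorithm still finishes at least one packet of size $\geq\ell_i$ in any such segment is right. However, the mechanism you sketch has a directional error that is fatal and is visible already in the paper's own warm-up example. You propose to repair a bad segment at level $i$ by charging the deficit to ``a large packet \MAIN eventually completes in a later phase,'' and your potential is designed to flow forward in time. But consider the paper's illustrative instance (right after Theorem~\ref{thm:four}): there, $N$ early phases have \MAIN completing packets of size $1.5-2\eps$ while \OPT completes only size-$1$ packets, and $N$ later phases have \OPT completing size-$(1.5-2\eps)$ packets while \MAIN, having already exhausted that size, completes only size-$1$ packets. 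The surplus that pays for the later deficit occurs \emph{earlier}, not later. A forward-only transfer, or a potential that is reset at each fault (which is how you describe it: ``accumulated large mass completed by \MAIN \emph{after the most recent fault}''), cannot account for this; the surplus must be carried across an unbounded number of blocks and in the backward direction.

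This points to the missing ingredient: the paper does not try to route ``large mass'' generically between neighbouring segments. Instead it sets up a size-preserving $1$-to-$1$ charge from \ADV's $m$-th packet of size $\ell_i$ (outside the constant-size exception set $A$) to \MAIN's $m$-th completed packet of size $\ell_i$, for each size class $i$ separately; these charges can be up, forward, or crucially \emph{back} charges. It then drops the nested ordered $C_i$-segment structure of Section~\ref{sec:local} in favour of fault-delimited blocks and processes them left to right, within each block assigning the remaining (uncharged or forward-charged) \ADV packets to \MAIN's uncharged packets and \emph{editing the future \ADV schedule} whenever a back charge is consumed. Making your approach rigorous would in effect require reconstructing this machinery: you would need (a) size-specific rather than aggregate charges, (b) a way to carry surplus backward as well as forward, and (c) a bookkeeping device (the paper's swap of $p$ for $f^{-1}(q)$ in later blocks) ensuring no \MAIN packet is overcharged when bad segments at several levels compete for it. As it stands, your ``main obstacle'' paragraph acknowledges the consistency problem but does not solve it, and the resolution you gesture at contradicts the known hard instance.
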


\paragraph{Intuition}
For $s\ge 4$ we have that if at the start of a phase $\MAINs$ has a
packet of size $\ell_i$ pending and the phase has length at least
$\ell_i$, then $\MAINs$ completes a packet of size at least
$\ell_i$. To show this, assume that the phase starts at time $t$. Then
the first packet $p$ of size at least $\ell_i$ is started before time
$t+2\ell_i/s$ by Lemma~\ref{l:main}(ii) and by the condition in Step (3) it has size smaller
than $2\ell_i$. Thus it completes before time $t+4\ell_i/s\leq
t+\ell_i$, which is before the end of the phase. This property does
not hold for $s<4$. It is important in our proof, as it shows that if
the optimal schedule completes a job of some size, and such job is
pending for $\MAINs$, then $\MAINs$ completes a job of the same size
or larger. However, this is not sufficient to complete the proof by a
local (phase-by-phase) analysis similar to the previous section, as
the next example shows.

Assume that at the beginning, we release $N$ packets of size $1$, $N$
packets of size $1.5 - 2\eps$, one packet of size $3-2\eps$ and a sufficient
number of packets of size $1-\eps$, for a small $\eps>0$.  Our focus is on
packets of size at least $1$.  Supposing $s=4$ we have the following phases:
\begin{itemize}[nosep]
\item First, there are $N$ phases of length $1$. In each phase the optimum
  completes a packet of size $1$, while among packets of size at least $1$, 
  $\MAINs$ completes a packet of size $1.5 - 2\eps$, as it starts packets of sizes $1-\eps$, $1-\eps$, $1.5 - 2\eps$,
  $3-2\eps$, in this order, and the last packet is jammed.
\item Then there are $N$ phases of length $1.5 - 2\eps$ where the optimum completes a packet
  of size $1.5 - 2\eps$ while among packets of size at least $1$, the algorithm 
  completes only a single packet of size $1$, as it starts packets of sizes 
  $1-\eps$, $1-\eps$, $1$, $3-2\eps$, in this order.
  The last packet is jammed, since for $s=4$
  the phase must have length at least $1.5 - \eps$ to complete it.
\end{itemize}
In phases of the second type, the algorithm does not complete more (in terms of
total size) packets of size at least $1$ than the optimum.  Nevertheless, in our
example, packets of size $1.5 - 2\eps$ were already finished by the algorithm,
and this is a general rule.  The novelty in our proof is a complex charging 
argument that exploits such subtle interaction between phases.

\paragraph{Outline of the proof}
We define critical times $C'_i$ similarly as before, but without the
condition that they should be ordered
(thus either $C'_i \le C'_{i-1}$ or $C'_i > C'_{i-1}$ may hold).
Then, since the algorithm has nearly no pending packets of size $\ell_i$
just before $C'_i$, we can charge almost all adversary's packets of size $\ell_i$ started
before $C'_i$ to algorithm's packets of size $\ell_i$ completed
before $C'_i$ in a 1-to-1 fashion;  we thus call these charges 1-to-1
charges. We account for the first few packets of each size
completed at the beginning of \ADV, the schedule of the adversary,
in the additive constant of the competitive ratio,
thereby shifting the targets of the 1-to-1 charges backward
in time. This also resolves what to do with the yet uncharged
packets pending for the algorithm just before $C'_i$.
%


After the critical time $C'_i$, packets of size $\ell_i$ are always
pending for the algorithm, and thus (as we observed above) the
algorithm schedules a packet of size at least $\ell_i$ when
the adversary completes a packet of size $\ell_i$.
It is actually more convenient not to work with phases,
but partition the schedule into blocks inbetween
successive faults. A block can contain several phases of the algorithm
separated by an execution of Step (4); however, in the most important
and tight part of the analysis the blocks coincide with phases.

In the crucial lemma of the proof, based on these observations and their refinements,
we show that we can assign the remaining packets in \ADV to algorithm's packets in the same block so
that for each algorithm's packet $q$ the total size of packets
assigned to it is at most $\ell(q)$. However, we cannot use this
assignment directly to charge the remaining packets, as some of the
algorithm's big packets may receive 1-to-1 charges, and in this case
the analysis needs to handle the interaction of different blocks. This
very issue can be seen even in our introductory example.

To deal with this, we process blocks in the order of time from the
beginning to the end of the schedule, simultaneously completing the
charging to the packets in the current block of the schedule of \MAINs and
possibly modifying \ADV in the future
blocks.  In fact, in the assignment described above, we include not
only the packets in \ADV without 1-to-1 charges, but also packets in
\ADV with a 1-to-1 charge to a later block. After creating the
assignment, if we have a packet $q$ in \MAIN that receives a 1-to-1
charge from a packet $p$ in a later block of \ADV, we remove $p$ from
\ADV in that later block and replace it there by the packets assigned
to $q$ (that are guaranteed to be of smaller total size than $p$).
After these swaps, the 1-to-1 charges together with the assignment
form a valid charging that charges the remaining not swapped packets
in \ADV in this block together with the removed packets from the later
blocks in \ADV to the packets of \MAINs in the current block. This
charging is now independent of the other blocks, so we can continue
with the next block.


\subsection{Blocks, Critical Times, 1-to-1 Charges and the Additive Constant} 
We now formally define the notions of blocks and (modified) critical times.

\begin{definition}
Let $f_1, f_2,\ldots, f_N$ be the times of faults. Let $f_0=0$ and
$f_{N+1}=T$ is the end of schedule. Then the time interval
$(f_i, f_{i+1}]$, $i=0,\ldots,N$, is called a block. 
\end{definition}

\begin{definition}
For $i=1,\ldots k$, the critical time $C'_i$ is the supremum of
$i$-good times $t\in[0,T]$, where $T$ is the end of the schedule and
$i$-good times are as defined in Definition~\ref{def:critical}.
\end{definition}

All $C'_i$'s are defined, as $t=0$ is $i$-good for all $i$. Similarly
to Section~\ref{sec:master}, each $C'_i$ is of one of the following
types: (i) $C'_i$ starts a phase and a packet larger than $\ell_i$ is
scheduled, (ii) $C'_i=0$, (iii) $C'_i=T$, or (iv) just before time
$C'_i$ no packet of size $\ell_i$ is pending but at time $C'_i$ one or
more packets of size $\ell_i$ are pending; in this case $C'_i$ is not
$i$-good but only the supremum of $i$-good times. We observe that in
each case, at time $C'_i$ the total size of packets $p$ of size
$\ell_i$ pending for $\MAINs$ and released before $C'_i$ is less than
$\ell_k$.

Next we define the set of packets that contribute to the additive constant.
\begin{definition}
Let the set $A$ contain for each $i = 1, \dots, k$:
\begin{enumerate}[label=\rm(\roman*), nosep]
\item the first $\lceil 4\ell_k / \ell_i \rceil$ packets of size $\ell_i$
  completed by the adversary, and
\item the first $\lceil 4\ell_k / \ell_i \rceil$ packets of size $\ell_i$
completed by the adversary after $C'_i$.
\end{enumerate}
If there are not sufficiently many packets of size $\ell_i$ completed
by the adversary in (i) or (ii), we take all the packets in (i) or all the packets completed
after $C'_i$ in (ii), respectively.
\end{definition}
For each $i$, we put into $A$  packets of size $\ell_i$ of total size
at most $10\ell_k$. Thus we have $\ell(A) = \Oh(k \ell_k)$ 
which implies that packets in $A$ can be counted in the additive
constant.

We define 1-to-1 charges for packets of size $\ell_i$ as follows. Let
$p_1$, $p_2$, \ldots, $p_n$ be all the packets of size $\ell_i$
started by the adversary before $C'_i$ that are not in $A$. We claim
that $\MAINs$ completes at least $n$ packets of size $\ell_i$ before
$C'_i$ if $n\ge 1$. Indeed, if $n\ge 1$, before time $C'_i$ at least $n+\lceil 4\ell_k / \ell_i
\rceil$ packets of size $\ell_i$ are started by the adversary and thus
released; by the definition of $C'_i$ at time $C'_i$ fewer than
$\ell_k/\ell_i$ of them are pending for $\MAINs$, one may be running and
the remaining ones must be completed. We now charge each $p_m$ to the
$m$th packet of size $\ell_i$ completed by $\MAINs$.
Note that each packet started by the adversary is charged at most once
and each packet completed by $\MAINs$ receives at most one charge.

\begin{figure}
  \begin{center}
    \includegraphics[width=0.65\textwidth]{./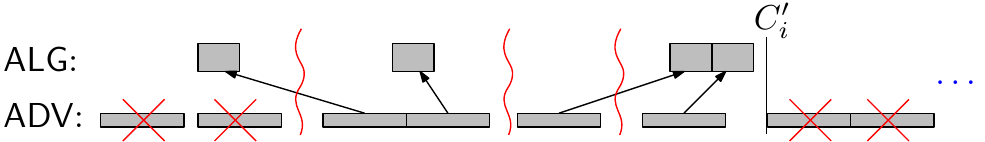}
  \end{center}
  \caption{An illustration of back, up, and forward 1-to-1 charges for $\ell_i$-sized packets
  (other packets are not shown). 
  The winding lines depict the times of jamming errors, i.e., the beginnings and ends of blocks.
  Note that the packets in the algorithm's schedule are shorter, but wider, which illustrates
  that the algorithm runs the packets with a higher speed for a shorter time (the area
  thus corresponds to the amount of work done).
  Crossed packets are included in the set $A$ (and thus contribute to the additive constant).}
  \label{fig:1to1charges}
\end{figure}

We call a 1-to-1 charge starting and ending in the same block an
\textit{up charge}, a 1-to-1 charge from a block starting at $u$ to a
block ending at $v' \le u$ a \textit{back charge}, and a 1-to-1 charge
from a block ending at $v$ to a block starting at $u' \ge v$ a
\textit{forward charge}; see Figure~\ref{fig:1to1charges} for an illustration.
A \textit{charged packet} is a packet charged
by a 1-to-1 charge. The definition of $A$ implies the following
two important properties.

\begin{lemma}\label{l:forward}
Let $p$ be a packet of size $\ell_i$, started by the adversary at time $t$,
charged by a forward charge to a packet $q$ started by
$\MAINs$ at time $t'$. Then at
any time $\tau\in[t-3\ell_k,t')$, more than $\ell_k/\ell_i$ packets of
size $\ell_i$ are pending for $\MAINs$.
\end{lemma}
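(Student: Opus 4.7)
I would count the number of $\ell_i$-packets released by the adversary up to $\tau$ and compare with the number that have been consumed by $\MAINs$ up to $\tau$. The first step is to unpack the forward-charge hypothesis: since $p$ completes within its block ending at $v$ and $q$ starts in a block starting at $u'\ge v$, we have $t+\ell_i\le v\le u'\le t'$, so in particular $t'>t$. By the definition of the $1$-to-$1$ charges and the set $A$, the packet $p=p_m$ is the $m$-th packet of size $\ell_i$ started by the adversary before $C'_i$ that is not in $A$, so by time $t$ the adversary has already started all of $p_1,\ldots,p_m$ together with the $\lceil 4\ell_k/\ell_i\rceil$ packets of size $\ell_i$ in part (i) of $A$; every one of these is released by time $t$, yielding at least $m+\lceil 4\ell_k/\ell_i\rceil$ released packets of size $\ell_i$ at $t$.

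Next I would split on $\tau$. On the $\MAINs$ side, since $q$ is the $m$-th size-$\ell_i$ packet completed by $\MAINs$ and it starts at $t'$, at every $\tau<t'$ the algorithm has completed at most $m-1$ such packets and is running at most one packet in total; jammed packets are re-attemptable and so remain in the pending pool without subtracting from its size. For $\tau\in[t,t')$ the release count is monotone and the pending count is at least $(m+\lceil 4\ell_k/\ell_i\rceil)-m=\lceil 4\ell_k/\ell_i\rceil$, which clearly exceeds $\ell_k/\ell_i$. For $\tau\in[t-3\ell_k,t)$, I would propagate the release count backwards: the adversary, being a single machine at speed $1$ without preemption, can have started at most $\lceil (t-\tau)/\ell_i\rceil\le \lceil 3\ell_k/\ell_i\rceil$ size-$\ell_i$ packets in the window $(\tau,t]$, so at least $m+\lceil 4\ell_k/\ell_i\rceil-\lceil 3\ell_k/\ell_i\rceil$ are released by $\tau$, giving a pending count of at least $\lceil 4\ell_k/\ell_i\rceil-\lceil 3\ell_k/\ell_i\rceil$, which a careful ceiling analysis shows strictly exceeds $\ell_k/\ell_i$.

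The main obstacle is the tight ceiling arithmetic in this backwards-propagation step: the difference $\lceil 4\ell_k/\ell_i\rceil-\lceil 3\ell_k/\ell_i\rceil$ only narrowly beats $\ell_k/\ell_i$, and in degenerate cases when $\ell_k/\ell_i$ is exactly an integer one must rescue an additional unit. I would do this either by using part (ii) of $A$ to contribute at least one more released size-$\ell_i$ packet by $\tau$, or by observing that whenever many size-$\ell_i$ packets are pending, Step (3) of $\MAINs$ together with the choice rule in Step (2) prevents the algorithm's currently running packet at $\tau$ from being of size exactly $\ell_i$, which reduces the subtracted count from $m$ to $m-1$. The constant $4\ell_k$ appearing in the definition of $A$ is tuned precisely against the $3\ell_k$ window in the lemma statement to make this balance close.
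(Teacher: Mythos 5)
Your approach mirrors the paper's: count the $\ell_i$-sized packets the adversary has completed before $p$ (at least $m+\lceil 4\ell_k/\ell_i\rceil$ of them, using the $\lceil 4\ell_k/\ell_i\rceil$ packets from part~(i) of $A$), propagate the release count backwards over the window of length $3\ell_k$, and subtract what $\MAINs$ could have consumed. This is the right decomposition and you correctly exploit that a forward charge forces $t'\ge t+\ell_i>t$. However, there is a genuine gap in the backwards-propagation arithmetic, and your proposed rescues do not close it.

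The gap: you bound the number of $\ell_i$-sized packets the adversary starts in $(\tau,t]$ by $\lceil(t-\tau)/\ell_i\rceil\le\lceil 3\ell_k/\ell_i\rceil$, and then claim $\lceil 4\ell_k/\ell_i\rceil-\lceil 3\ell_k/\ell_i\rceil>\ell_k/\ell_i$. That inequality is false in general: for $\ell_k/\ell_i=1$ the left side is $4-3=1$, and for $\ell_k/\ell_i=3/2$ it is $6-5=1<3/2$. The looseness comes from counting packets merely \emph{started} in the window via a ceiling. The paper's argument is sharper at precisely this point: the $m+\lceil 4\ell_k/\ell_i\rceil$ packets are all \emph{completed} by the adversary before $p$ is started at $t$, so any of them started in $(t-3\ell_k,t]$ is also completed by $t$ and occupies a full $\ell_i$-length subinterval of the half-open window $(t-3\ell_k,t]$. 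These subintervals are disjoint (single machine, speed $1$), and because the left endpoint is excluded, strictly fewer than $3\ell_k/\ell_i$ of them fit. Subtracting a quantity strictly below $3\ell_k/\ell_i$ from $\lceil 4\ell_k/\ell_i\rceil$ leaves strictly more than $\ell_k/\ell_i$, with no ceiling slack to lose.

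Your two suggested rescues do not work. Part~(ii) of $A$ consists of $\ell_i$-sized packets completed by the adversary \emph{after} $C'_i$; but $p$ is started before $C'_i$, so $\tau<t<C'_i$, and those packets give you no guaranteed extra releases by time $\tau$. The second idea, that Steps~(2)/(3) somehow prevent $\MAINs$'s currently running packet at $\tau$ from having size exactly $\ell_i$ when many $\ell_i$'s are pending, is not supported by the algorithm's rules (Step~(3) readily picks a packet of size exactly $\ell_i$), and in any case it addresses the at-most-one-running-packet term rather than the window-counting term where the shortfall actually arises. Replace your $\lceil 3\ell_k/\ell_i\rceil$ bound with the paper's ``strictly fewer than $3\ell_k/\ell_i$ start-and-complete'' observation and the argument goes through.
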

\begin{proof}
Let $m$ be the number of packets of size $\ell_i$ that $\MAINs$ completes
before $q$. Then, by the definition of $A$, the adversary completes
$m+\lceil 4\ell_k / \ell_i \rceil$ packets of size $\ell_i$ before
$p$. As fewer than $3\ell_k/\ell_i$ of these packets are started in
$(t-3\ell_k,t]$, the remaining more than $m+\ell_k / \ell_i$ packets
have been released before or at time $t-3\ell_k$. As only $m$ of them
are completed by $\MAINs$ before $t'$, the remaining more than $\ell_k /
\ell_i$ packets are pending at any time $\tau\in[t-3\ell_k,t')$.
\end{proof}

\begin{lemma}\label{l:notcharged}
Let $p\not\in A$ be a packet of size $\ell_i$ started by the
adversary at time $t$ that is not charged. Then $t-4\ell_k\geq C'_i$
and thus at any $\tau\geq t-4\ell_k$, a packet of size $\ell_i$ is
pending for $\MAINs$.
\end{lemma}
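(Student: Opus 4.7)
The plan is to first derive $t - 4\ell_k \geq C'_i$ from the definitions of $A$ and of the $1$-to-$1$ charges, and then invoke the fact that $C'_i$ is the supremum of $i$-good times to conclude that a packet of size $\ell_i$ is pending throughout $[t - 4\ell_k,\infty)$ for $\MAINs$.

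First I would argue $t \geq C'_i$: the $1$-to-$1$ charging scheme assigns a charge to every packet of size $\ell_i$ that \ADV\ starts strictly before $C'_i$ and that is not in $A$. Since $p \notin A$ and $p$ is uncharged, $p$ cannot be started strictly before $C'_i$, so $t \geq C'_i$. Consequently $p$ completes at $t + \ell_i > C'_i$, placing $p$ in the candidate pool for clause~(ii) of the definition of $A$. Now I would count adversary work: since $p \notin A$, $p$ is not among the first $m := \lceil 4\ell_k/\ell_i \rceil$ packets of size $\ell_i$ completed by \ADV\ after $C'_i$, so at least $m$ such packets have completion times in the interval $(C'_i, t]$ — they must complete by time $t$, because \ADV\ is sequential and occupies the slot $(t, t+\ell_i]$ with $p$ itself. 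Each of these $m$ packets requires $\ell_i$ units of \ADV's time on a single machine, giving total work $m\ell_i \geq 4\ell_k$; by sequentiality at most one of them can straddle $C'_i$ (start before, complete after), and a direct accounting of the disjoint processing intervals then yields $t - C'_i \geq 4\ell_k$.

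For the pending claim, fix $\tau \geq t - 4\ell_k$; by the above, $\tau \geq C'_i$. If $\tau > C'_i$, then since $C'_i$ is the supremum of $i$-good times, $\tau$ is not $i$-good, and as $\tau > 0$ the only way to fail to be $i$-good is for condition~(ii) of Definition~\ref{def:critical} to fail, which is exactly the statement that a packet of size $\ell_i$ is pending for $\MAINs$ at $\tau$. The boundary case $\tau = C'_i$ is handled by the type analysis immediately following the definition of $C'_i$: in the nontrivial types a packet of size $\ell_i$ is either pending at $C'_i$ or immediately thereafter, so the conclusion extends to the endpoint.

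The main obstacle is the bookkeeping in the second step, specifically the treatment of the one adversary packet that could straddle $C'_i$: one must argue that the $m\ell_i \geq 4\ell_k$ units of work on the $m$ size-$\ell_i$ packets really do fit inside $[C'_i, t]$ with the stated constant, and not a slightly smaller one. Everything else is a direct reading of the definitions of $A$, the $1$-to-$1$ charges, $i$-good times, and $C'_i$.
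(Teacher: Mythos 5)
Your approach is exactly the paper's: derive $t - 4\ell_k \geq C'_i$ from the definitions of $A$ and of the 1-to-1 charges, then read off the pending property from the fact that $C'_i$ is the supremum of $i$-good times. The paper's own proof is a two-liner (``Any packet of size $\ell_i$ started before $C'_i+4\ell_k$ is either charged or put in $A$, thus $t-4\ell_k\geq C'_i$. After $C'_i$, a packet of size $\ell_i$ is pending by the definition of $C'_i$.''), so you are not missing any extra ingredient.

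What you flag as the ``main obstacle'' is a genuine observation, and it is worth spelling out why. Your counting is correct up to the last inequality: the $m = \lceil 4\ell_k/\ell_i\rceil$ size-$\ell_i$ packets completed by \ADV\ in $(C'_i, t]$ occupy disjoint intervals of length $\ell_i$ ending in $(C'_i, t]$, but one of them may start \emph{strictly before} $C'_i$ and thus contribute strictly less than $\ell_i$ of work to $[C'_i, t]$. The honest conclusion is therefore $t - C'_i > (m-1)\ell_i \geq 4\ell_k - \ell_i \geq 3\ell_k$, not $t - C'_i \geq 4\ell_k$. Concretely, with $4\ell_k/\ell_i$ an integer and a straddling packet starting at $C'_i - \delta$ for $\delta$ just below $\ell_i$, the $m$-th packet after the straddler can start at $C'_i + m\ell_i - \delta < C'_i + 4\ell_k$ and is the $(m+1)$-st completion after $C'_i$, hence neither charged nor in $A$. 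So the paper's stated constant is slightly too optimistic, and its one-line justification glosses over exactly the point you identified. This is harmless in context: the uses of the lemma (Main Case 1 and Lemma~\ref{l:assignment}) only invoke pendingness from time $t - 3\ell_k$ onward (to match Lemma~\ref{l:forward}), which the corrected bound $t - C'_i > 3\ell_k$ still delivers; alternatively, one could enlarge the constant in the definition of $A$ from $\lceil 4\ell_k/\ell_i\rceil$ to $\lceil 5\ell_k/\ell_i\rceil$, which only affects the (already order-of-magnitude) additive constant. One small additional remark: for $\tau = C'_i$ exactly, the pending claim can actually fail when $C'_i$ is $i$-good via a phase start with a packet larger than $\ell_i$ and no $\ell_i$-packet pending; again this does not matter because every application uses $\tau$ strictly above $C'_i$.
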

\begin{proof}
Any packet of size $\ell_i$ started before $C'_i+4\ell_k$ is either
charged or put in $A$, thus $t-4\ell_k\geq C'_i$. After $C'_i$, a
packet of size $\ell_i$ is pending by the definition of $C'_i$.
\end{proof}

\subsection{Processing Blocks}
Initially, let \ADV\ be an optimal (adversary) schedule. First, we
remove all packets in $A$ from \ADV.  Then we process blocks one by
one in the order of time.  When we process a block, we modify \ADV\ so
that we (i) remove some packets from \ADV, so that the total size of
removed packets is at most the total size of packets completed by
$\MAINs$ in this block, and (ii) reschedule any remaining packet in
\ADV\ in this block to one of the later blocks, so that the schedule
of remaining packets is still feasible. Summing over all blocks, (i)
guarantees that $\MAINs$ is $1$-competitive with an additive constant
$\ell(A)$.

When we reschedule a packet in \ADV, we keep the packet's 1-to-1
charge (if it has one), however, its type may change due to
rescheduling. Since we are moving packets to later times only,
the release times are automatically respected.  Also it follows that
we can apply Lemmata~\ref{l:forward} and~\ref{l:notcharged} even to
\ADV after rescheduling.

After processing of a block, there will remain no charges to or from
it. For the charges from the block, this is automatic, as \ADV contains
no packet in the block after we process it. For the charges to the block, this is
guaranteed as in the process we remove from \ADV all the packets in
later blocks charged by back charges to the current block.

From now on, let $(u,v]$ be the current block that we are processing;
all previous blocks ending at $v' \le u$ are processed. As there are
no charges to the previous blocks, any packet scheduled in \ADV\ in
$(u,v]$ is charged by an up charge or a forward charge, or else it is
not charged at all.  We distinguish two main cases of the proof,
depending on whether $\MAINs$ finishes any packet in the current block.

\subsubsection{Main Case 1: Empty Block}
The algorithm does not finish any packet in $(u,v]$. We claim that
\ADV\ does not finish any packet. The processing of the block is
then trivial.

For a contradiction, assume that \ADV\ starts a packet $p$ of size
$\ell_i$ at time $t$ and completes it. The packet $p$ cannot be
charged by an up charge, as $\MAINs$ completes no packet in this
block. Thus $p$ is either charged by a forward charge or not
charged. Lemma~\ref{l:forward} or~\ref{l:notcharged} implies
that at time $t$ some packet of size $\ell_i$ is pending for $\MAINs$.

Since \MAIN\ does not idle unnecessarily, this means that some packet
$q$ of size $\ell_j$ for some $j$ is started in $\MAINs$ at time
$\tau\leq t$ and running at $t$. As $\MAINs$ does not complete any
packet in $(u,v]$, the packet $q$ is jammed by the fault at time
$v$. This implies that $j>i$, as $\ell_j>s(v-\tau) \geq v-t\geq
\ell_i$; we also have $t-\tau<\ell_j$.  Moreover, $q$ is the
only packet started by $\MAINs$ in this block, thus it starts a phase.

As this phase is started by packet $q$ of size $\ell_j>\ell_i$, the
time $\tau$ is $i$-good and $C'_i\geq\tau$. All packets \ADV\ started
before time $C'_i+4\ell_k/s$ are charged, as the packets in $A$ are
removed from \ADV\ and packets in \ADV are rescheduled only to later
times. Packet $p$ is started before $v<\tau+\ell_j/s<C'_i+\ell_k/s$,
thus it is charged. It follows that $p$ is charged by a forward
charge. We now apply Lemma~\ref{l:forward} again and observe that it
implies that at $\tau>t-\ell_j$ there are more than $\ell_k/\ell_i$
packets of size $\ell_i$ pending for $\MAINs$. This is in contradiction
with the fact that at $\tau$, $\MAINs$ started a phase by $q$ of size
$\ell_j>\ell_i$.

\subsubsection{Main Case 2: Non-empty Block}
Otherwise, $\MAINs$ completes a packet in the current block $(u, v]$.

Let $Q$ be the set of packets completed by $\MAINs$ in $(u,v]$ that do
not receive an up charge. Note that no packet in $Q$ receives a
forward charge, as the modified \ADV\ contains no packets before $u$,
so packets in $Q$ either get a back charge or no charge at all.  Let
$P$ be the set of packets completed in \ADV\ in $(u,v]$ that are not
charged by an up charge. Note that $P$ includes packets charged by a
forward charge and uncharged packets, as no packets are charged to a
previous block.

We first assign packets in $P$ to packets in $Q$ so that for each
packet $q\in Q$ the total size of packets assigned to $q$ is at most
$\ell(q)$.  Formally, we iteratively define a provisional assignment
$f: P \rightarrow Q$ such that $\ell(f^{-1}(q)) \le \ell(q)$ for each
$q\in Q$.

\paragraph{Provisional assignment}
We maintain a set $O\subseteq Q$ of \textit{occupied} packets
that we do not use for a future assignment.  Whenever we assign a
packet $p$ to $q\in Q$ and $\ell(q) - \ell(f^{-1}(q)) <
\ell(p)$, we add $q$ to $O$.  This rule guarantees that each packet
$q\in O$ has $\ell(f^{-1}(q))>\ell(q)/2$.

We process packets in $P$ in the order of decreasing sizes as
follows. We take the largest unassigned packet $p\in P$ of size $\ell(p)$
(if there are more unassigned packets of size $\ell(p)$, we take an arbitrary one)
and choose an arbitrary packet $q\in Q\setminus O$ such that
$\ell(q)\ge \ell(p)$; we prove in Lemma~\ref{l:assignment} below that
such a $q$ exists. We assign $p$ to $q$, that is, we set $f(p) =
q$. Furthermore, as described above, if $\ell(q) - \ell(f^{-1}(q))<
\ell(p)$, we add $q$ to $O$. We continue until all packets are assigned.

If a packet $p$ is assigned to $q$ and $q$ is not put in $O$, it
follows that $\ell(q)-\ell(f^{-1}(q))\geq\ell(p)$. This implies that
after the next packet $p'$ is assigned to $q$, we have
$\ell(q)\geq\ell(f^{-1}(q))$, as the packets are processed from the
largest one and thus $\ell(p')\leq\ell(p)$. If follows that at the end
we obtain a valid provisional assignment.

\begin{lemma}
\label{l:assignment}
The assignment process above assigns all packets in $P$.
\end{lemma}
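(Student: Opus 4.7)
The proof is by contradiction: assume that at some step the procedure fails to place some $p \in P$ of size $\ell_i$, which means that every $q \in Q$ with $\ell(q) \geq \ell_i$ lies in $O$. Write $Q_{\geq i} := \{q \in Q : \ell(q) \geq \ell_i\}$ and $P_{\geq i} := \{p' \in P : \ell(p') \geq \ell_i\}$.

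The first step is an upper bound on $\ell(Q_{\geq i})$ coming from the assignment bookkeeping alone. Since $P$ is processed in non-increasing order of size, every packet of $P$ already assigned has size at least $\ell_i$ and, by the rule $\ell(q) \geq \ell(p')$, was placed into some $q \in Q_{\geq i}$. Hence $\sum_{q \in Q_{\geq i}} \ell(f^{-1}(q))$ equals the total mass of already-assigned packets, which is bounded by $\ell(P_{\geq i}) - \ell_i$ (since $p$ is not yet assigned). Combining $Q_{\geq i} \subseteq O$ with the property $\ell(f^{-1}(q)) > \ell(q)/2$ for every $q \in O$, one sums over $Q_{\geq i}$ to obtain
\[
\ell(Q_{\geq i}) \;<\; 2\ell(P_{\geq i}) - 2\ell_i.
\]

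The second step is to derive the contradictory lower bound $\ell(Q_{\geq i}) \geq 2\ell(P_{\geq i}) - 2\ell_i$. The key ingredient is a pending-packet property: I claim that a packet of size $\geq \ell_i$ is pending for $\MAINs$ throughout $(u,v]$. This I establish by applying Lemma~\ref{l:forward} (for forward-charged packets) or Lemma~\ref{l:notcharged} (for uncharged packets) to every $p' \in P_{\geq i}$ in the current modified \ADV\ and patching together the resulting time windows, using that $p$ itself lies in $(u,v]$. A phase-by-phase application of Lemma~\ref{l:main}(ii) then bounds the $\MAINs$-work spent on packets smaller than $\ell_i$ inside $(u,v]$ by $\Oh(\ell_i)$. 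Combined with $s \geq 4$ and the fact that at most one packet can be jammed at $v$, I obtain $L_{\MAINs}(\unarygeq i, (u, v]) \geq 4(v-u) - \Oh(\ell_k)$. Subtracting the up-charge mass of size $\geq \ell_i$ from both this quantity and from $L_{\ADV}(\unarygeq i, (u,v]) \leq v - u$ yields $\ell(Q_{\geq i}) \geq 4\ell(P_{\geq i}) - \Oh(\ell_k)$. When $\ell_i$ is comparable to $\ell_k$ this already beats $2\ell(P_{\geq i}) - 2\ell_i$; the tight regime is $\ell(P_{\geq i}) = \ell_i$, in which the upper bound of the first step forces $Q_{\geq i} = \emptyset$, and one then invokes directly the observation from the beginning of Section~\ref{sec:four}---at speed $4$, a pending size-$\ell_i$ packet at the start of a phase of length at least $\ell_i$ forces $\MAINs$ to complete a packet of size $\geq \ell_i$---together with a mass argument ruling out that this completed packet is up-charged (since \ADV\ must also accommodate $p$ inside a block of length $\geq \ell_i$), to exhibit a member of $Q_{\geq i}$ and contradict $Q_{\geq i} = \emptyset$.

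The main obstacle is justifying the pending-packet property under the modified \ADV. A packet currently in $P$ may have been swapped in from an earlier block, possibly through a chain of swaps, and the type of its 1-to-1 charge may have changed; for instance, a former forward charge may become a back charge, for which no direct pending-window lemma is available. One must verify that each packet on which the argument relies still carries a pending window covering $(u,v]$, and must also extend the single-phase bound of Lemma~\ref{l:main}(ii) on $L_{\MAINs}(\unarylt i, \cdot)$ across the potentially several phases the block may contain between the faults at $u$ and $v$, arguing that the Step (4) transitions between them are compatible with the always-pending-$\ell_i$ condition.
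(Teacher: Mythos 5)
Your proof-by-contradiction framing is a sensible reformulation of the paper's incremental invariant (``$a>(2n-2)\ell_j$ before processing size-$\ell_j$ packets''), and the step-1 bound $\ell(Q_{\geq i}) < 2\ell(P_{\geq i}) - 2\ell_i$ derived from the rule $\ell(f^{-1}(q))>\ell(q)/2$ for occupied packets is clean. However, there is a genuine gap, and it is precisely the half of the paper's proof that you have skipped.

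The ``pending-packet property'' — that some packet of size $\geq \ell_i$ is pending for $\MAINs$ throughout $(u,v]$ — is \emph{not true in general}, and neither Lemma~\ref{l:forward} nor Lemma~\ref{l:notcharged} gives it to you. Those lemmata yield a pending window starting at $t - 3\ell_k$ (resp.\ $t - 4\ell_k$) where $t$ is when \ADV\ starts the relevant packet; if $t > u + 3\ell_k$ (a long block), the window simply does not reach back to $u$, and nothing forces $u \geq C'_i$. The paper's proof of Lemma~\ref{l:assignment} explicitly splits into two cases based on whether $\ell_j$ is pending throughout $[u,v)$. You have, in effect, only treated the first case. The second case — where there is a time in $[u,v)$ with no size-$\ell_j$ packet pending — is the genuinely hard part: the paper introduces $\tau$ (the supremum of times when no packet of size $\geq\ell_j$ is pending), proves that the small-packet volume completed after $\tau$ is less than $3\ell_k$ by an ad-hoc phase-transition argument (this is where ``Step~(4) transitions'' actually arise, contrary to your setup in which $\ell_i$ pending throughout would give a single phase by Lemma~\ref{l:main}(ii) and no Step~(4) at all), and then derives \emph{two} separate lower bounds on $v-\tau$ from \ADV's schedule, crucially using $t\geq\tau+3\ell_k$ to get $v-\tau\geq n\ell_j+3\ell_k$. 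None of this appears in your proposal, and your ``patching together time windows'' heuristic cannot replace it.

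Two further corrections. First, the claimed inequality $L_{\MAINs}(\unarygeq i,(u,v])\geq 4(v-u)-\Oh(\ell_k)$ is off by a factor of two: the jammed packet at the end of the block can have size up to the total of what was completed before it, so by Lemma~\ref{lem:simple} one only gets $L_{\MAINs}((u,v])> 2(v-u)$, hence $L_{\MAINs}(\unarygeq i,(u,v])>2(v-u)-2\ell_i$. (Interestingly, the corrected bound still closes your contradiction in the pending-throughout case, since $\ell(Q_{\geq i})>2(v-u)-2\ell_i-z$ and $2\ell(P_{\geq i})-2\ell_i\leq 2(v-u)-2z-2\ell_i$.) Second, your ``main obstacle'' is misdiagnosed: the modified \ADV\ is not the issue — the paper explicitly observes that Lemmata~\ref{l:forward} and~\ref{l:notcharged} remain applicable after rescheduling, and the processing order guarantees no back charges arrive at the current block. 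The obstacle is that the pending property itself fails, not that it becomes hard to verify after swaps.
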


\begin{proof}
For each size $\ell_j$ we show that all packets of size $\ell_j$ in
$P$ are assigned, which is clearly sufficient. We fix the size
$\ell_j$ and define a few quantities.

Let $n$ denote the number of packets of size $\ell_j$ in $P$.  Let
$o$ denote the total \textit{occupied size}, defined as $o=\ell(O) +
\sum_{q\in Q\setminus O} \ell(f^{-1}(q))$ at the time just before we
start assigning the packets of size $\ell_j$. Note that the rule for
adding packets to $O$ implies that $\ell(f^{-1}(Q))\geq o/2$. Let $a$
denote the current total \textit{available size} defined as $a =
\sum_{q\in Q\setminus O: \ell(q)\ge \ell_j} (\ell(q)-\ell(f^{-1}(q)))$.
We remark that in the definition of $a$ we restrict attention only to packets
of size $\ge \ell_j$, but in the definition of $o$ we consider all packets in $Q$;
however, as we process in the order of decreasing sizes,
so far we have assigned packets from $P$ only to packets of size $\ge \ell_j$ in $Q$.

First, we claim that it is sufficient to show that $a>(2n-2)\ell_j$
before we start assigning the packets of size $\ell_j$.  As long as
$a>0$, there is a packet $q\in Q\setminus O$ of size at least $\ell_j$
and thus we may assign the next packet (and, as noted before, actually
$a\geq \ell_j$, as otherwise $q\in O$). Furthermore, assigning a
packet $p$ of size $\ell_j$ to $q$ decreases $a$ by $\ell_j$ if $q$ is
not added to $O$ and by less than $2\ell_j$ if $q$ is added to
$O$. Altogether, after assigning the first $n-1$ packets, $a$
decreases by less than $(2n-2)\ell_j$, thus we still have $a>0$, and
we can assign the last packet. The claim follows.

We now split the analysis into two cases, depending on whether there
is a packet of size $\ell_j$ pending for $\MAINs$ at all times in $[u,v)$,
or not.  In either case, we prove that the available space $a$ is
sufficiently large before assigning the packets of size $\ell_j$.

In the first case, we suppose that a packet of size $\ell_j$ is
pending for $\MAINs$ at all times in $[u,v)$.  Let $z$ be the total size
of packets of size at least $\ell_j$ charged by up charges in this
block.  The size of packets in $P$ already assigned is at least
$\ell(f^{-1}(Q))\geq o/2$ and we have $n$ yet unassigned packets of
size $\ell_j$ in $P$. As \ADV\ has to schedule all these packets and
the packets with up charges in this block, its size satisfies $v-u\ge
\ell(P)+z\ge n\ell_j+o/2+z$. Now consider the schedule of $\MAINs$ in
this block.  By Lemma~\ref{l:main}, there is no end of phase in
$(u,v)$ and jobs smaller than $\ell_j$ scheduled by $\MAINs$ have
total size less than $2\ell_j$. All the other completed packets
contribute to one of $a$, $o$, or $z$.  Using Lemma~\ref{lem:simple},
the previous bound on $v-u$ and $s\geq 4$, the total size of completed
packets is at least $s(v-u)/2 \ge 2n\ell_j+o+2z$.  Hence
$a>(2n\ell_j+o+2z)-2\ell_j-o-z\geq (2n-2)\ell_j$, which completes the
proof of the lemma in this case.

Otherwise, in the second case, there is a time in $[u,v)$ when no
packet of size $\ell_j$ is pending for $\MAINs$.
Let $\tau$ be the supremum of times $\tau'\in[u,v]$ such that
$\MAINs$ has no pending packet of size at least $\ell_j$ at time
$\tau'$; if no such $\tau'$ exists we set $\tau=u$.
Let $t$ be the time when the adversary starts the first packet $p$ of
size $\ell_j$ from $P$.

Since $p$ is charged using a forward charge or $p$ is not charged, we
can apply Lemma~\ref{l:forward} or~\ref{l:notcharged}, which implies
that packets of size $\ell_j$ are pending for $\MAINs$ from time
$t-3\ell_k$ till at least $v$. By the case condition, there is a time
in $[u,v)$ when no packet of size $\ell_j$ is pending, and this time
is thus before $t-3\ell_k$, implying $u<t-3\ell_k$. The definition
of $\tau$ now implies that $\tau \le t-3\ell_k$.

Towards bounding $a$, we show that (i) $\MAINs$ runs a limited amount of
small packets after $\tau$ and thus $a+o$ is large, and that (ii)
$f^{-1}(Q)$ contains only packets run by \ADV\ from $\tau$ on, and
thus $o$ is small.

We claim that the total size of packets smaller than $\ell_j$
completed in $\MAINs$ in $(\tau,v]$ is less than $3\ell_k$. This claim is
similar to Lemma~\ref{l:main} and we also argue similarly.  Let
$\tau_1<\tau_2<\ldots<\tau_\alpha$ be all the ends of phases in
$(\tau,v)$ (possibly there is none, then $\alpha=0$); also let
$\tau_0=\tau$. For $i=1,\ldots,\alpha$, let $r_i$ denote the packet
started by $\MAINs$ at $\tau_i$; note that $r_i$ exists since after
$\tau$ there is a pending packet at any time in $[\tau,v]$ by the
definition of $\tau$. See Figure~\ref{fig:s4taus} for an illustration.
First note that any packet started at
or after time $\tau_\alpha+\ell_k/s$ has size at least $\ell_j$, as
such a packet is pending and satisfies the condition in Step (3) of
the algorithm. Thus the total amount of the small packets completed in
$(\tau_\alpha,v]$ is less than $\ell_k+\ell_{k-1}<2\ell_k$. The claim
now follows for $\alpha=0$. Otherwise, as there is no fault in
$(u,v)$, at $\tau_i$, $i=1,\ldots,\alpha$, Step (4) of the algorithm
is reached and thus no packet of size at most $s(\tau_i-\tau_{i-1})$
is pending. In particular, this implies that
$\ell(r_i)>s(\tau_i-\tau_{i-1})$ for $i=1,\ldots,\alpha$. This also
implies that the amount of the small packets completed in
$(\tau_0,\tau_1]$ is less than $\ell_k$ and the claim for $\alpha=1$
follows. For $\alpha\geq 2$ first note that by Lemma~\ref{l:main}(i),
$s(\tau_i-\tau_{i-1})\geq\ell_j$ for all $i=2,\ldots,\alpha$ and thus
$r_i$ is not a small packet. Thus for $i=3,\ldots,\alpha$, the amount
of small packets in $(\tau_{i-1},\tau_i]$ is at most
$s(\tau_i-\tau_{i-1})-\ell(r_{i-1})<\ell(r_i)-\ell(r_{i-1})$. The amount
of small packets completed in $(\tau_1,\tau_2]$ is at most
$s(\tau_2-\tau_1)<\ell(r_2)$ and the amount of small packets completed
in $(\tau_\alpha,v]$ is at most $2\ell_k-\ell(r_\alpha)$. Summing this
together, the amount of small packets completed in $(\tau_1,v]$ is at
most $2\ell_k$ and the claim follows. 

\begin{figure}
  \begin{center}
    \includegraphics[width=\textwidth]{./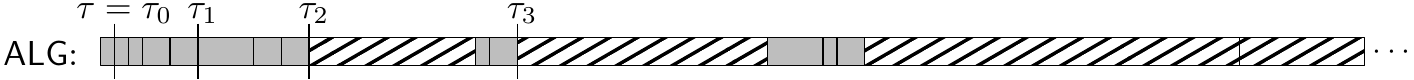}
  \end{center}
  \caption{An illustration of bounding 
  the total size of small packets completed after $\tau$ in the case
  when $\ell_j$ is not pending in the whole block.
  Gray packets are small, while hatched packets have size at least $\ell_j$.
  The times $\tau_1, \tau_2,$ and $\tau_3$ are the ends of phases after $\tau$ (thus $\alpha=3$),
  but $\tau$ need not be the end of a phase.
  }
  \label{fig:s4taus}
\end{figure}

Let $z$ be the total size of packets of size at least
 $\ell_j$ charged by up charges in this block and completed by
$\MAINs$ after $\tau$.
After $\tau$, $\MAINs$ processes packets of total size more than
$s(v-\tau)-\ell_k$ and all of these packets contribute to one of $a$, $o$,
$z$, or the volume of less than $3\ell_k$ of small packets from the
claim above. Thus, using $s\geq4$, we get
\begin{equation}\label{eq:LBon_a}
a>4(v-\tau)-o-z-4\ell_k\,.
\end{equation}

Now we derive two lower bounds on $v-\tau$ using \ADV\ schedule. 
 
Observe that no packet contributing to $z$ except for possibly one (the one
possibly started by $\MAINs$ before $\tau$) is started by \ADV\ before
$\tau$ as otherwise, it would be pending for $\MAINs$ just before $\tau$,
contradicting the definition of $\tau$.

Also, observe that in $(u,\tau]$, \ADV\ runs no packet $p\in P$ with
$\ell(p)>\ell_j$: For a contradiction, assume that such a $p$
exists. As $\tau\leq C_{j'}$ for any $j'\ge j$, such a $p$ is
charged. As $p\in P$, it is charged by a forward charge. However,
then Lemma~\ref{l:forward} implies that at all times between the
start of $p$ in \ADV and $v$ a packet of size $\ell(p)$ is pending
for $\MAINs$; in particular, such a packet is pending in the interval
before $\tau$, contradicting the definition of $\tau$.

These two observations imply that in
$[\tau,v]$, \ADV\ starts and completes all the assigned packets from
$P$, the $n$ packets of size $\ell_j$ from $P$, and all packets except
possibly one contributing to $z$. This gives $v-\tau\geq
\ell(f^{-1}(Q))+n\ell_j+z-\ell_k \geq o/2 +n\ell_j+z-\ell_k$.

To obtain the second bound, we observe that the $n$ packets of size
$\ell_j$ from $P$ are scheduled in $[t,v]$ and together with $t\geq
\tau+3\ell_k$ we obtain $v-\tau=v-t+t-\tau\geq n\ell_j+3\ell_k$.

Summing the two bounds on $v-\tau$ and multiplying by two we get
$4(v-\tau)\geq 4n\ell_j+4\ell_k+o+2z$. Summing with (\ref{eq:LBon_a}) we get
$a>4n\ell_j+z\geq 4n\ell_j$. This completes the proof of the second case.
\end{proof}

As a remark, note that in the previous proof, the first case deals
with blocks after $C_j$, it is the typical and tight case. The second
case deals mainly with the block containing $C_j$, and also with some
blocks before $C_j$, which brings some
technical difficulties, but there is a lot of slack.  This is
similar to the situation in the local analysis using the Master Theorem.

\paragraph{Modifying the adversary schedule}
Now all the packets from $P$ are provisionally assigned by $f$ and for
each $q\in Q$ we have that $\ell(f^{-1}(q)) \le \ell(q)$.

We process each packet $q$ completed by $\MAINs$ in $(u,v]$
according to one of the following three cases; in each case we
remove from \ADV\ one or more packets with total size at most $\ell(q)$.

If $q\not\in Q$, then the definition of $P$ and $Q$ implies that $q$ is
charged by an up charge from some packet $p\not\in P$ of the same
size. We remove $p$ from \ADV.

If $q\in Q$ does not receive a charge, we remove $f^{-1}(q)$ from
\ADV. We have $\ell(f^{-1}(q)) \le \ell(q)$, so the size is as
required. If any packet $p\in f^{-1}(q)$ is charged (necessarily by a
forward charge), we remove this charge.

\begin{figure}
  \begin{center}
    \includegraphics[width=\textwidth]{./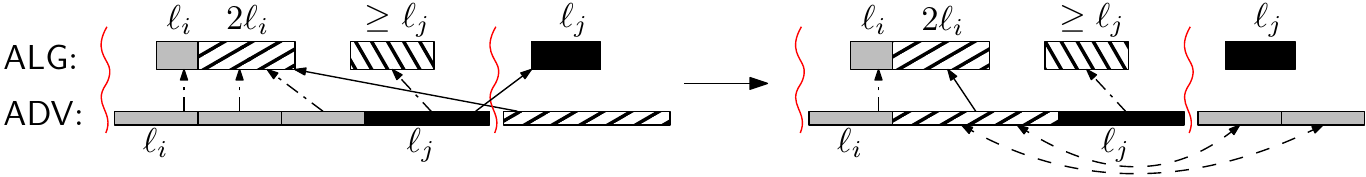}
  \end{center}
  \caption{An illustration of the provisional assignment on the left; note
  that a packet of size $\ell_j$ with a forward charge is also assigned.
  Full arcs depict 1-to-1 charges and dashed arcs depict the provisional assignment.
  The result of modifying the adversary schedule on the right.
  }
  \label{fig:s4assignment}
\end{figure}

If $q\in Q$ receives a charge, it is a back charge from some packet
$p$ of the same size. We remove $p$ from \ADV\ and in the interval
where $p$ was scheduled, we schedule packets from $f^{-1}(q)$ in an
arbitrary order. As $\ell(f^{-1}(q)) \le \ell(q)$, this is
feasible. If any packet $p\in f^{-1}(q)$ is charged, we keep its
charge to the same packet in $\MAINs$; the charge was necessarily a
forward charge, so it leads to some later block.
See Figure~\ref{fig:s4assignment} for an illustration.

After we have processed all the packets $q$, we have modified \ADV\ by
removing an allowed total size of packets and rescheduling the
remaining packets in $(u,v]$ so that any remaining charges go to later
blocks. This completes processing of the block $(u,v]$ and thus also
the proof of 1-competitiveness.

\section{Lower Bounds}\label{sec:LBs}

\subsection{Lower Bound with Two Packet Sizes}

In this section we study lower bounds on the speed necessary to achieve 1-competitiveness.
We start with a lower bound of 2 which holds even for the divisible case.
It follows that our algorithm \DIV\ and the algorithm in Jurdzinski~\etal~\cite{JurdzinskiKL13}
are optimal.
Note that this lower bound follows from results of Anta \etal~\cite{Anta13-dual}
by a similar construction, although the packets in their construction are not released together.

\begin{theorem}\label{thm:LB2}
There is no 1-competitive deterministic online algorithm running with speed $s < 2$,
even if packets have sizes only $1$ and $\ell$ for $\ell > 2s / (2-s)$
and all of them are released at time 0.
\end{theorem}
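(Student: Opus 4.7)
The plan is to construct, parametrized by a large integer $N$, an instance on which any deterministic speed-$s$ algorithm $\ALG$ falls behind $\OPT$ by $\Omega(N)$, which is unbounded and therefore precludes $1$-competitiveness for any additive constant depending only on the two packet sizes. At time $0$ I release exactly $N\ell$ unit-size packets and $N$ packets of size $\ell$. The adversary then schedules $N$ ``phases'' separated by $N$ adaptively chosen jams: in phase $i$ starting at $u_i$ (with $u_1 = 0$), the adversary simulates $\ALG$ (which we may assume is work-conserving, as idling only hurts it) and observes the relative time $\tau_i \geq 0$ at which $\ALG$ first starts a size-$\ell$ packet. Such $\tau_i$ is finite because only $N\ell$ unit-size packets are ever available, so $\ALG$ must eventually attempt a large one. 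The adversary then jams at $u_i + \tau_i + \ell/s - \varepsilon$ for a tiny $\varepsilon > 0$, corrupting the large-packet attempt, and sets $u_{i+1}$ to this jam time. After $N$ phases the schedule ends at $T := u_{N+1}$.

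The analysis proceeds in three steps. First, in phase $i$ the algorithm completes exactly the $s\tau_i$ unit-size packets it processed in $[u_i, u_i+\tau_i]$ and no large packet, giving $L_\ALG = s\sum_i\tau_i$; since only $N\ell$ unit packets exist, $\sum_i \tau_i \leq N\ell/s$ and therefore $T \leq 2N\ell/s$. Second, as $\OPT$ is offline with full knowledge of the jams, it may pack each phase of length $L_i := \tau_i + \ell/s - \varepsilon$ by placing $\lfloor L_i/\ell\rfloor$ large packets and filling the remaining time with unit-size packets, achieving work $L_i - O(1)$ per phase. The aggregate budgets suffice: the total available packet mass $2N\ell$ exceeds $T$, and even in the extreme case where $\OPT$ uses all $N$ large packets, the unit deficit $T - N\ell \leq N\ell(2-s)/s \leq N\ell$ stays within budget. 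Hence $L_\OPT \geq T - O(N)$. Third, combining,
\[
L_\OPT - L_\ALG \;\geq\; T - L_\ALG - O(N) \;=\; N(\ell/s - \varepsilon) - (s-1)\sum_i\tau_i - O(N) \;\geq\; \tfrac{N\ell(2-s)}{s} - O(N),
\]
where the final inequality substitutes $\sum_i\tau_i \leq N\ell/s$. The hypothesis $\ell > 2s/(2-s)$ rewrites as $\ell(2-s)/s > 2$, so the right-hand side is at least $\Omega(N)$ and grows without bound as $N \to \infty$, contradicting $1$-competitiveness for any additive constant $A$.

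The subtle step is verifying $L_\OPT \geq T - O(N)$ uniformly in $\ALG$'s choices of $\tau_i$. Uneven distributions may combine short phases (length $< \ell$, where no large packet fits) with very long phases (accommodating two or more large packets), and one must show that long phases absorb the large packets that short phases cannot host. The bound $\sum_i\tau_i \leq N\ell/s$ keeps the aggregate phase lengths controlled enough that the number of large packets fittable satisfies $\sum_i\lfloor L_i/\ell\rfloor \geq N(2-s)/s$, which is exactly the minimum $\OPT$ needs so that its remaining time $T - \sum_i\lfloor L_i/\ell\rfloor\ell$ stays within the unit-packet budget $N\ell$. A short LP-relaxation of the packing, or an explicit case analysis on the distribution of $\tau_i$, formalizes this.
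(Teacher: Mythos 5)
Your proposal is a genuinely different route from the paper's, and the core idea is sound. The paper's adversary strategy branches: if \ALG\ starts the size-$\ell$ packet ``late'' (Case~\caseref{p:ell}, $\tau \ge t + \ell/s - 2$, which in particular covers $\tau = \infty$), the adversary jams at $t+\ell$ and packs one large packet into that block; if \ALG\ starts it ``early'' (Case~\caseref{p:one}), the adversary jams at $\tau + \ell/s - \varepsilon$ and strictly out-earns \ALG\ in unit packets. This case split makes the accounting entirely per-block (each block either gains a large packet for \OPT\ or gains at least one extra unit packet for \OPT), at the cost of a slightly fiddlier choice of $N_0,N_1$. Your strategy instead uses the single jam rule $\tau_i + \ell/s - \varepsilon$ in every block and pushes the difficulty into a global packing bound $L_\OPT \geq T - O(N)$, whose correctness rests on the identity $T - K'\ell \leq N\ell$ where $K' = \min\bigl(N, \sum_i \lfloor L_i/\ell\rfloor\bigr)$; the case $\sum_i \lfloor L_i/\ell\rfloor < N$ works because then each residual $L_i - \lfloor L_i/\ell\rfloor\ell$ is below $\ell$, and the case $\geq N$ works because $T \leq 2N\ell/s$. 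Each approach buys something: the paper's stays local and avoids a packing LP, yours is conceptually a one-liner once the packing claim is granted.

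Two points in your write-up are genuine gaps, though both are repairable. First, you invoke ``we may assume \ALG\ is work-conserving'' and use it both to ensure $\tau_i < \infty$ and to get the exact identity $L_{\ALG} = s\sum_i \tau_i$. Against an adaptive adversary this reduction is not automatic, and if $\tau_i = \infty$ your jam time is undefined and $T$ is unbounded; you need a fallback (e.g.\ jam at $u_i + \ell$, exactly the paper's Case~\caseref{p:ell}) and a weaker but still sufficient inequality $L_{\ALG} \leq s\sum_i \tau_i$ in place of equality. Second, the packing argument that $L_\OPT \geq T - O(N)$ subject to the two inventory constraints is the crux of your proof and cannot be left at ``a short LP-relaxation \dots\ formalizes this''; the case split I sketched above needs to appear explicitly, in particular the observation that whenever the total time $T$ approaches $2N\ell/s > N\ell$, the phases are necessarily long enough that $\sum_i \lfloor L_i / \ell\rfloor$ is large, so \OPT\ never actually needs more than $N\ell$ worth of unit packets.
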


\begin{proof}
For a contradiction,
consider an algorithm \ALG\ running with speed $s < 2$ that is claimed to be 1-competitive
with an additive constant $A$ where $A$ may depend on $\ell$. 
At time 0 the adversary releases $N_1 = \lceil A / \ell \rceil + 1$ packets of size $\ell$
and $\displaystyle{N_0 = \left\lceil \frac{2\ell}{s} \cdot\big( N_1\cdot (s - 1)\cdot \ell + A + 1\big)\right\rceil}$
packets of size 1. These are all packets in the instance.

The adversary's strategy works by blocks where a block is a time interval 
between two faults and the first block begins at time 0.
The adversary ensures that in each such block \ALG\ completes no packet of size 
$\ell$ and moreover \ADV\ either completes an $\ell$-sized packet, or completes 
more $1$'s (packets of size $1$) than \ALG.

Let $t$ be the time of the last fault; initially $t = 0$.
Let $\tau \ge t$ be the time when \ALG\ starts the first $\ell$-sized packet after $t$ (or at $t$)
if now fault occurs after $t$;
we set $\tau = \infty$ if it does not happen. Note that we use here that \ALG\ is deterministic.
In a block beginning at time $t$,
the adversary proceeds according to the first case below that applies.
\begin{enumerate}[label=(D\arabic*)]
\item If \ADV\ has less than $2\ell / s$ pending packets of size 1, then  \label{p:oneEnd}
the end of the schedule is at $t$.

\item If \ADV\ has all packets of size $\ell$ completed, \label{p:ellEnd}
then it stops the current process and issues faults at times $t+1, t+2, \dots$
Between every two consecutive faults after $t$ it completes one packet of size 1
and it continues issuing faults until 
it has no pending packet of size 1. Then there is the end of the schedule.
Clearly, \ALG\ may complete only packets of size 1 after $t$ as
$\ell > 2s / (2-s) > s$ for $s<2$.

\item If $\tau \ge t+\ell / s - 2$, then the next fault is at time $t+\ell$. \label{p:ell}
In the current block, the adversary completes a packet $\ell$.
\ALG\ completes at most $s\cdot \ell$ packets of size 1 and then it possibly
starts $\ell$ at $\tau$ (if $\tau < t+\ell$) which is jammed,
since it would be completed at
\begin{equation*}
\tau + \frac{\ell}{s} \ge t + \frac{2\ell}{s} - 2 = t + \ell + \left(\frac{2}{s} - 1\right)\ell - 2 > t + \ell
\end{equation*}
where the last inequality follows from $\left(\frac{2}{s} - 1\right)\ell > 2$
which is equivalent to $\ell > 2s / (2-s)$.
Thus the $\ell$-sized packet would be completed after the fault.
See Figure~\ref{fig:lbtwoLate} for an illustration.

\begin{figure}
\centering
\begin{minipage}{.5\textwidth}
    \includegraphics[width=0.95\textwidth]{./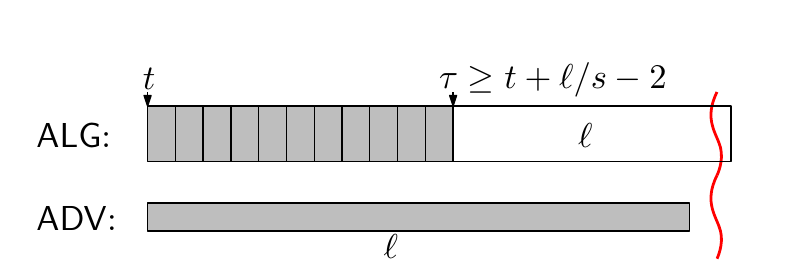}
  \caption{An illustration of Case~\caseref{p:ell}.}
  \label{fig:lbtwoLate}
\end{minipage}%
\begin{minipage}{.5\textwidth}
    \includegraphics[width=0.95\textwidth]{./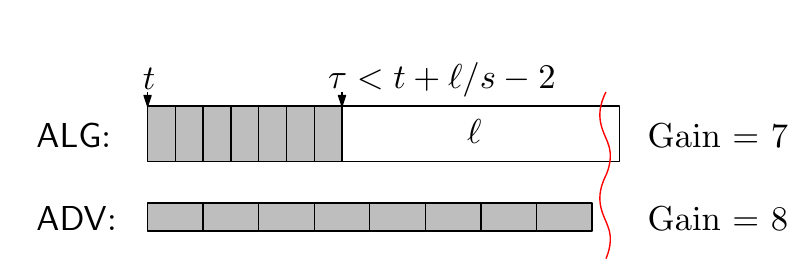}
  \caption{An illustration of Case~\caseref{p:one}.}
  \label{fig:lbtwoEarly}
\end{minipage}
\end{figure}

\item Otherwise, if $\tau < t+\ell / s - 2$, then the next fault
is at time $\tau + \ell / s - \varepsilon$ \label{p:one}
for a small enough $\varepsilon > 0$.
In the current block, \ADV\ completes as many packets of size 1 as it can,
that is $\lfloor \tau + \ell / s - \varepsilon - t\rfloor$ packets of size 1;
note that by Case~\caseref{p:oneEnd}, \ADV\ has enough 1's pending.
Again, the algorithm does not complete the packet of size $\ell$ started at $\tau$,
because it would be finished at $\tau + \ell / s$.
See Figure~\ref{fig:lbtwoEarly} for an illustration.
\end{enumerate}

First notice that the process above ends, since in each block 
the adversary completes a packet. 
We now show $L_{\ADV} > L_{\ALG} + A$ which contradicts the claimed 1-competitiveness of \ALG.

If the adversary's strategy ends in Case~\caseref{p:ellEnd},
then \ADV\ has all $\ell$'s completed and then it schedules all 1's, thus 
$L_{\ADV} = N_1\cdot \ell + N_0 > A + N_0$.
However, \ALG\ does not complete any $\ell$-sized packet and hence $L_{\ALG} \le N_0$
which concludes this case.

Otherwise, the adversary's strategy ends in Case~\caseref{p:oneEnd}.
We first claim that in a block $(t, t']$ created in Case~\caseref{p:one}, \ADV\ 
finishes more 1's than \ALG. Indeed, let $o$ be the number
of 1's completed by \ALG\ in $(t, t']$.
Then $\tau \ge t + o / s$ where $\tau$ is from the adversary's strategy in $(t, t']$,
and we also have $o < \ell - 2s$ or equivalently $\ell > o + 2s$,
because $\tau < t+\ell / s - 2$ in Case~\caseref{p:one}.
The number of 1's scheduled by \ADV\ is
\begin{align}
\left\lfloor \tau + \frac{\ell}{s} - \varepsilon - t\right\rfloor
&\ge \left\lfloor t + \frac{o}{s} + \frac{\ell}{s} - \varepsilon - t\right\rfloor 
\ge \left\lfloor \frac{o}{s} + \frac{o + 2s}{s} - \varepsilon\right\rfloor 
= \left\lfloor \frac{2}{s}o + 2 - \varepsilon\right\rfloor \nonumber \\
&\ge \left\lfloor \frac{2}{s}o + 1\right\rfloor \ge o + 1\nonumber 
\end{align}
and we proved the claim.

Let $\alpha$ be the number of blocks created in Case~\caseref{p:ell};
note that $\alpha\le N_1$, since in each such block \ADV\ finishes one $\ell$-sized packet.
\ALG\ completes at most $s\ell$ packets of size 1 in such a block,
thus $L_{\ADV}((u, v]) - L_{\ALG}((u, v]) \ge (1 - s)\cdot \ell$ for a block $(u, v]$
created in Case~\caseref{p:ell}.

Let $\beta$ be the number of blocks created in Case~\caseref{p:one}.
We have
\begin{equation*}
\beta > \frac{s}{2\ell} \cdot \left(N_0 - \frac{2\ell}{s}\right) = \frac{s\cdot N_0}{2\ell} - 1 = N_1\cdot (s - 1)\cdot \ell + A\,,
\end{equation*}
because in each such block \ADV\ schedules less than
$2\ell / s$ packets of size 1 and less than $2\ell / s$ of these packets are pending at the end.
By the claim above, we have $L_{\ADV}((u, v]) - L_{\ALG}((u, v]) \ge 1$
for a block $(u, v]$ created in Case~\caseref{p:one}.

Summing over all blocks and using the value of $N_0$ we get
\begin{align}
L_{\ADV} - L_{\ALG} &\ge \alpha \cdot (1 - s)\cdot \ell + \beta
>  N_1 \cdot (1 - s)\cdot \ell + N_1\cdot (s - 1)\cdot \ell + A = A\nonumber
\end{align}
where we used $s\ge 1$ which we may suppose w.l.o.g.
This concludes the proof.
\end{proof}

\subsection{Lower Bound for General Packet Sizes}

Our main lower bound of $\phi + 1 = \phi^2 \approx 2.618$
(where $\phi = (\sqrt{5}+1)/2$ is the golden ratio) generalizes the 
construction of Theorem~\ref{thm:LB2}
for more packet sizes, which are no longer divisible.
Still, we make no use of release times.

\begin{theorem}\label{thm:LBphi+1}
There is no 1-competitive deterministic online algorithm running with speed $s < \phi + 1$,
even if all packets are released at time 0.
\end{theorem}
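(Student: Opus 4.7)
The plan is to extend the adversarial construction of Theorem~\ref{thm:LB2} from two to many packet sizes, replacing the size ratio $\ell : 1$ (which drives the bound of $2$) by a Fibonacci-like recurrence whose asymptotic ratio is the golden ratio $\phi$. The threshold $\phi+1$ arises from the identity $\phi + 1 = \phi^2 = \sum_{i \ge 0} \phi^{-i}$, which is exactly the speed at which the per-block gain of the adversary collapses to zero.

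Fix any $s < \phi+1$ and pick $\varepsilon>0$ and $k$ large so that $s < \phi+1 - c\varepsilon$ for a suitable constant $c$ and $k$ levels of recursion suffice. I would define packet sizes $\ell_1 < \ell_2 < \cdots < \ell_k$ by a recurrence of the form $\ell_{i+1} = s\cdot \ell_i + \ell_{i-1} + O(\varepsilon \ell_i)$ (tuned precisely so that each case below yields strict positive gain). For large $i$, this recurrence is linear with characteristic root approaching $\phi$ when $s \to \phi+1$. All packets are released at time $0$ and the multiplicities $N_i$ are chosen sufficiently large to overwhelm any additive constant $A$, in analogy with $N_0, N_1$ in Theorem~\ref{thm:LB2}.

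The adversary works in blocks and, at the start of each block at time $t$, watches \ALG's deterministic schedule forward. Using the behavior of \ALG, it chooses the next fault among four cases that directly generalize \caseref{p:oneEnd}--\caseref{p:one}: a terminal case ending the instance when some size class has been exhausted (analogous to \caseref{p:oneEnd}); a clean-up case where \ADV has completed all of the largest packets and drains the rest with unit-length blocks (analogous to \caseref{p:ellEnd}); a ``jam-large'' case where \ALG postpones its first attempt at an $\ell_{i+1}$-sized packet past a threshold $t + \theta_i$, letting the adversary place the fault at $t + \ell_{i+1}$ and complete an $\ell_{i+1}$-sized packet while \ALG's work is limited by its speed $s$ and by the condition in Step (3) of \MAIN; and a ``jam-early'' case where \ALG starts $\ell_{i+1}$ at some $\tau < t + \theta_i$, in which the adversary jams at $\tau + \ell_{i+1}/s - \varepsilon$, wasting \ALG's large packet, while \ADV completes one $\ell_i$-sized packet together with enough smaller packets to fill the block. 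The index $i$ in these cases is the largest index for which the required inequalities are consistent with \ALG's actual behavior; this largest index always exists by a straightforward induction and the monotonicity of the recurrence.

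The bookkeeping then shows that in every non-terminal block, $L_\ADV$ exceeds $L_\ALG$ by at least a positive quantity proportional to $(\phi + 1 - s)\,\ell_i$ for the relevant level $i$, while the terminal clean-up phase preserves this gap up to $O(\ell_k)$. Since the number of blocks can be made arbitrarily large by choosing the $N_i$ large, we conclude $L_\ADV > L_\ALG + A$ for any fixed additive $A$, contradicting $1$-competitiveness.

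The main obstacle I anticipate is the joint design of the packet sizes, the thresholds $\theta_i$, and the target packet \ADV completes in each case, so that (i) in the jam-early case the inequality $\tau + \ell_{i+1}/s > $ (length of block needed for \ALG to still fit $\ell_{i+1}$) is strict, which forces an upper bound on $\ell_{i+1}$ in terms of $\ell_i, \ell_{i-1}$, and $s$; and (ii) in the jam-large case \ADV can actually schedule $\ell_{i+1}$ plus the small packets it wants inside a block of length $\ell_{i+1}$, which forces a lower bound of the same form. Matching these two bounds is exactly the condition that pins down the characteristic equation whose root is $\phi$ when $s = \phi+1$, and that leaves strictly positive slack when $s < \phi+1$. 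Handling the ``boundary'' between levels $i$ and $i+1$ (when \ALG's behavior does not cleanly fit a single level) and ensuring that the terminal cases are reached only after many productive blocks will require the same style of delicate but elementary accounting as in the proof of Theorem~\ref{thm:LB2}.
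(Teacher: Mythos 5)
Your high-level intuition is sound: the threshold $\phi+1$ does arise from a geometric series with ratio $\phi$ that sums to $\phi+1$, and the adversary's cases do generalize the jam-early/jam-large/terminal structure of Theorem~\ref{thm:LB2}. But two of your concrete claims are false and would sink the proof. First, the proposed recurrence $\ell_{i+1} = s\ell_i + \ell_{i-1} + O(\varepsilon\ell_i)$ has characteristic root $(s+\sqrt{s^2+4})/2 \approx 2.96$ at $s=\phi+1$, not $\phi$. The paper instead uses the $s$-independent sizes $\ell_i = \phi^{i-1}$ (plus a tiny $\ell_0=\varepsilon$), which satisfy the genuine Fibonacci relation $\ell_{i+1}=\ell_i+\ell_{i-1}$; the golden ratio enters because in a block of length $\ell_k$ the algorithm must first burn through $\sum_{j<k}\ell_j \approx \phi\,\ell_k$ of smaller packets, so total work $\approx(\phi+1)\ell_k$ exceeds the budget $s\ell_k$ exactly when $s<\phi+1$. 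Second, the claim that ``in every non-terminal block $L_\ADV$ exceeds $L_\ALG$ by $\approx(\phi+1-s)\ell_i$'' is wrong: in the jam-large blocks where \ADV completes an $\ell_k$, \ALG running at speed $s>1$ completes up to $s\ell_k$ worth of smaller packets and comes out \emph{ahead}. The adversary's net gain comes only from the jam-early (epsilon-filled) blocks and from the terminal strategy, and the multiplicities $N_0\gg N_1\gg\cdots\gg N_k$ must be tuned so those dominate the bounded loss in the $\sum_{i\ge 1}N_i$ jam-large blocks. That bookkeeping is where the theorem is actually won, and your plan assumes it away.

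You are also missing an essential case: \ALG can \emph{skip} a size level, starting some packet of size $\ell_j$ with $j>i+1$ before ever starting $\ell_{i+1}$. The paper handles this with a dedicated jump case (fault at $t+\ell_i$) and shows, via the specific Fibonacci sizes, that the skipping algorithm still cannot finish its first long packet (Lemma~\ref{l:tauI+1beforeTauI}). Without this case the adversary cannot maintain the invariant that \ALG never completes an $\ell_k$-sized packet, on which the whole argument rests. Likewise, the terminal strategy once \ADV has exhausted some size $\ell_i$ is not a simple unit-length drain: it must keep blocking \ALG from ever finishing any packet of size $\ge\ell_i$, which needs its own early/late case split mirroring the main loop. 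As written, your proposal names the right ingredients but has the wrong recurrence, the wrong sign in the per-block accounting, and omits the skip case, so it does not constitute a proof.
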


\paragraph{Outline of the proof} We start by describing the adversary's strategy which works against an algorithm
running at speed $s < \phi + 1$, i.e., it shows that it is not $1$-competitive.
It can be seen as a generalization of the strategy with two packet sizes above,
but at the end the adversary sometimes needs a new strategy how to complete
all short packets (of size less than $\ell_i$ for some $i$), preventing the algorithm to complete a
long packet (of size at least $\ell_i$).

Then we show a few lemmata about the behavior of the algorithm.
Finally, we prove that the gain of the adversary, i.e., the total size of its completed packets,
is substantially larger than the gain of the algorithm.

\paragraph{Adversary's strategy}
The adversary chooses $\varepsilon > 0$ small enough and $k\in \N$ large enough
so that $s < \phi + 1 - 1 / \phi^{k-1}$.
For convenience, the smallest size in the instance is $\varepsilon$ instead of $1$.
There will be $k+1$ packet sizes in the instance, namely $\ell_0 = \varepsilon$,
and $\ell_i = \phi^{i-1}$ for $i = 1, \dots, k$.

Suppose for a contradiction that there is an algorithm \ALG\ running with speed $s < \phi + 1$
that is claimed to be 1-competitive with an additive constant $A$ where $A$ may depend on $\ell_i$'s,
in particular on $\varepsilon$ and $k$.
The adversary issues $N_i$ packets of size $\ell_i$ at time 0, for $i = 0, \dots, k$;
$N_i$'s are chosen so that $N_0\gg N_1\gg \dots \gg N_k$. These are all packets in the instance.

More precisely, $N_i$'s are defined inductively so that it holds that $N_k > A / \ell_k$, 
$N_i > \phi s \ell_k  \sum_{j = i+1}^k N_j + A / \ell_i$ for $i = k-1, \dots, 1$,
and finally $$N_0 > \frac{A + 1 + \phi \ell_k}{\varepsilon^2} \cdot \big(\phi s \ell_k \sum_{i=1}^k N_i\big)\,.$$

The adversary's strategy works 
by blocks where a block is again a time interval between two faults
and the first block begins at time 0.
Let $t$ be the time of the last fault; initially $t = 0$.
Let $\tau_i \ge t$ be the time when \ALG\ starts the first packet of size $\ell_i$ after $t$ (or at $t$)
if no fault occurs after $t$;
we set $\tau_i = \infty$ if it does not happen. Again, we use here that \ALG\ is deterministic.
Let $\tau_{\ge i} = \min_{j\ge i}\tau_j$ be the time when \ALG\ starts the first packet 
of size at least $\ell_i$ after $t$.
Let $P_{\ADV}(i)$ be the total size of $\ell_i$'s (packets of size $\ell_i$)
pending for the adversary at time $t$.

In a block beginning at time $t$,
the adversary proceeds according to the first case below that applies.
Each case has an intuitive explanation which we make precise later.
\begin{enumerate}[label=(B\arabic*)]
\item  \label{case:epsEnd} If there are less than $\phi\ell_k / \varepsilon$
packets of size $\varepsilon$ pending for \ADV, then
the end of the schedule is at time $t$.

Lemma~\ref{l:moreEpsilons} below shows that in blocks in which \ADV\ schedules $\varepsilon$'s
it completes more than \ALG\ in terms of total size.
It follows that the schedule of \ADV\ has much larger total completed size 
for $N_0$ large enough, since the adversary scheduled nearly all packets of size $\varepsilon$;
see Lemma~\ref{l:epsEnd}.

\item \label{case:iEnd} If there is $i \ge 1$ such that $P_{\ADV}(i) = 0$, then \ADV\ stops the current process
and continues by Strategy \textsc{Finish} described below.

\item \label{case:tau1tooEarly} If $\tau_1 < t + \ell_1 / (\phi\cdot s)$, then the next fault occurs
at time $\tau_1 + \ell_1 / s - \varepsilon$,
so that \ALG\ does not finish the first $\ell_1$-sized packet. \ADV\ schedules as many
$\varepsilon$'s as it can.

In this case, \ALG\ schedules $\ell_1$ too early and
in Lemma~\ref{l:moreEpsilons} we show that the total size of packets completed by \ADV\ is larger
than the total size of packets completed by \ALG.

\item \label{case:tau2tooEarly} If $\tau_{\ge 2} < t + \ell_2 / (\phi\cdot s)$, then the next fault is at time
$\tau_{\ge 2} + \ell_2 / s - \varepsilon$,
so that \ALG\ does not finish the first packet of size at least $\ell_2$.
\ADV\ again schedules as many $\varepsilon$'s as it can.
Similarly as in the previous case, \ALG\ starts $\ell_2$ or a larger packet too early
and we show that \ADV\ completes more in terms of size than \ALG, again using Lemma~\ref{l:moreEpsilons}.

\item \label{case:tauI+1beforeTauI} If there is $1\le i < k$ such that $\tau_{\ge i+1} < \tau_i$,
then we choose the smallest such $i$ and
the next fault is at time $t + \ell_i$. 
\ADV schedules a packet of size $\ell_i$.
See Figure~\ref{fig:lbphitauI+1beforeTauI} for an illustration.

Intuitively, this case means that \ALG\ skips $\ell_i$ and
schedules $\ell_{i+1}$ (or a larger packet) earlier.
Lemma~\ref{l:tauI+1beforeTauI} shows that the algorithm cannot finish its first packet of size
at least $\ell_{i+1}$ (thus it also does not schedule $\ell_i$)
provided that this case is not triggered for a smaller $i$, or previous cases are not triggered.

\begin{figure}
\centering
\begin{minipage}{.38\textwidth}
    \includegraphics[width=\textwidth]{./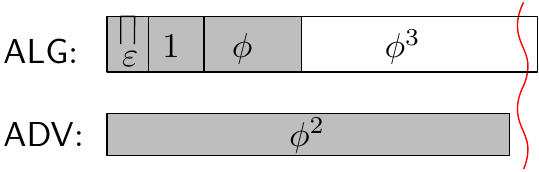}
  \caption{An illustration of Case~\caseref{case:tauI+1beforeTauI}.}
  \label{fig:lbphitauI+1beforeTauI}
\end{minipage}%
\begin{minipage}{.06\textwidth}
~~~~
\end{minipage}%
\begin{minipage}{.55\textwidth}
    \includegraphics[width=\textwidth]{./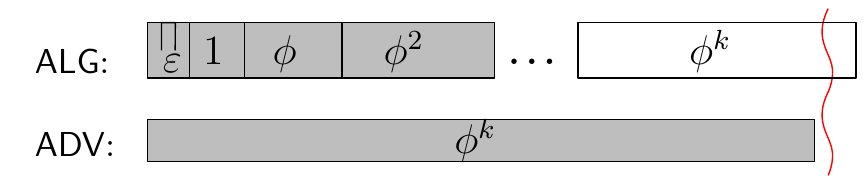}
  \caption{An illustration of Case~\caseref{case:tauKmustBeLate}.}
  \label{fig:lbphiTauKlate}
\end{minipage}
\end{figure}

\item \label{case:tauKmustBeLate} Otherwise, the next fault occurs at $t+\ell_k$
and \ADV\ schedules a packet of size $\ell_k$ in this block. Lemma~\ref{l:tauKmustBeLate} shows that \ALG\ cannot
complete an $\ell_k$-sized packet in this block.
See Figure~\ref{fig:lbphiTauKlate} for an illustration.
\end{enumerate}

We remark that the process above eventually ends
either in Case~\caseref{case:epsEnd}, or in Case~\caseref{case:iEnd},
since in each block \ADV\ schedules a packet.
Also note that the length of each block is at most $\phi\ell_k$.

We describe Strategy \textsc{Finish}, started in Case~\caseref{case:iEnd}. 
Let $i$ be the smallest index $i'\ge 1$ such that $P_{\ADV}(i') = 0$.
For brevity, we call a packet of size at least $\ell_i$ \textit{long},
and a packet of size $\ell_j$ with $1\le j < i$ \textit{short};
note that $\varepsilon$'s are not short packets.
In a nutshell, \ADV\ tries to schedule all short packets,
while preventing the algorithm from completing any long packet.
Similarly to Cases~\caseref{case:tau1tooEarly} and~\caseref{case:tau2tooEarly},
if \ALG\ is starting a long packet too early, 
\ADV\ schedules $\varepsilon$'s and gains in terms of total size.

Adversary's Strategy \textsc{Finish} works again by blocks.
Let $t$ be the time of the last fault.
Let $\tau \ge t$ be the time when \ALG\ starts
the first long packet after $t$;
we set $\tau = \infty$ if it does not happen.
The adversary proceeds according to the first case below that applies:
\begin{enumerate}[label=(F\arabic*)]
\item  \label{case2:epsEnd}
If $P_{\ADV}(0) < \phi\ell_k$, then
the end of the schedule is at time $t$.

\item \label{case2:bigEnd}
If \ADV\ has no pending short packet, then the strategy \textsc{Finish} ends
and \ADV\ issues faults at times $t+\varepsilon, t+2\varepsilon, \dots$
Between every two consecutive faults after $t$ it completes one packet of size $\varepsilon$
and it continues issuing faults until 
it has no pending $\varepsilon$. Then there is the end of the schedule.
Clearly, \ALG\ may complete only $\varepsilon$'s after $t$ if $\varepsilon$ is small enough.
Note that for $i=1$ this case is immediately triggered, as $\ell_0$-sized packets
are not short, hence there are no short packets whatsoever.

\item \label{case2:tauTooEarly}
If $\tau < t + \ell_i / (\phi\cdot s)$, then the next fault is at time
$\tau + \ell_i / s - \varepsilon$,
so that \ALG\ does not finish the first long packet.
\ADV\ schedules as many $\varepsilon$'s as it can.
Note that the length of this block is less than $\ell_i / (\phi\cdot s) + \ell_i / s \le \phi \ell_k$.
Again, we show that \ADV\ completes more in terms of size using Lemma~\ref{l:moreEpsilons}.

\item \label{case2:regular}
Otherwise, $\tau \ge t + \ell_i / (\phi\cdot s)$.
\ADV\ issues the next fault at time $t+\ell_{i-1}$.
Let $j$ be the largest $j' < i$ such that $P_{\ADV}(j') > 0$.
\ADV\ schedules a packet of size $\ell_j$ which is completed as $j\le i-1$.
Lemma~\ref{l:FinishRegularCase} shows that \ALG\ does not complete the long packet started at $\tau$. 
\end{enumerate}

Again, in each block \ADV\ completes a packet, thus 
Strategy \textsc{Finish} eventually ends.
Note that the length of each block is less than $\phi \ell_k$.

\paragraph{Properties of the adversary's strategy}
We now prove the lemmata mentioned above. In the following, $t$ is
the beginning of the considered block and $t'$ is the end of the block, i.e., the time of the next fault after $t$.
Recall that $L_{\ALG}((t, t'])$ is the total size of packets completed by \ALG\ in $(t, t']$.
We start with a general lemma that covers all cases in which \ADV\ schedules many $\varepsilon$'s.

\begin{lemma}\label{l:moreEpsilons}
In Cases~\caseref{case:tau1tooEarly}, \caseref{case:tau2tooEarly}, and~\caseref{case2:tauTooEarly},
$L_{\ADV}((t, t']) \ge L_{\ALG}((t, t']) + \varepsilon$ holds.
\end{lemma}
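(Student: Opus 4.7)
The plan is to handle all three cases by a single unified calculation, after identifying that the common structure is the same: in each case the algorithm starts some packet of size $\ell$ at time $\tau$, where $\tau - t < \ell/(\phi s)$, and the adversary picks the next fault time to be $t' = \tau + \ell/s - \varepsilon$, so the packet ALG started at $\tau$ is jammed. Concretely, $\ell = \ell_1$ in Case~\caseref{case:tau1tooEarly}, $\ell = \ell_j$ for the size $j \ge 2$ of the packet started by ALG at $\tau_{\ge 2}$ in Case~\caseref{case:tau2tooEarly}, and $\ell = \ell_j$ for the size $j \ge i$ of the long packet started by ALG at $\tau$ in Case~\caseref{case2:tauTooEarly}.

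First I would bound the algorithm's gain. Since ALG works continuously on the packet of size $\ell$ during $(\tau, t']$ and $t' - \tau = \ell/s - \varepsilon < \ell/s$, this packet is not finished, so every packet ALG completes in $(t,t']$ must already complete by time $\tau$. As ALG runs at speed $s$, this gives the simple bound
\[
L_{\ALG}((t,t']) \;\le\; s(\tau - t).
\]

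Next I would bound the adversary's gain from below. In all three cases the adversary schedules $\varepsilon$-sized packets throughout the block, and it has enough of them pending because none of Cases~\caseref{case:epsEnd}/\caseref{case2:epsEnd} triggered (so more than $\phi\ell_k/\varepsilon$ copies of $\varepsilon$ are still available, which is more than the maximum block length $\phi\ell_k$ divided by $\varepsilon$). Hence
\[
L_{\ADV}((t,t']) \;\ge\; t' - t - \varepsilon \;=\; (\tau - t) + \ell/s - 2\varepsilon.
\]

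Subtracting and using $\tau - t < \ell/(\phi s)$ together with $s \ge 1$ gives
\[
L_{\ADV}((t,t']) - L_{\ALG}((t,t']) \;\ge\; \frac{\ell}{s} - (s-1)(\tau - t) - 2\varepsilon \;\ge\; \frac{\ell(\phi+1-s)}{\phi s} - 2\varepsilon.
\]
Since $\ell \ge \ell_1 = 1$ and the adversary fixed $s < \phi + 1 - 1/\phi^{k-1}$ at the outset, the coefficient $(\phi+1-s)/(\phi s)$ is bounded below by a positive constant that depends only on $k$. Choosing $\varepsilon$ small enough (as the adversary does at the very start of its construction) makes this lower bound at least $\varepsilon$, which is exactly the claim.

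The only real obstacle is the bookkeeping of identifying the right value of $\ell$ in each case and checking that $\tau$ in that case really does satisfy $\tau - t < \ell/(\phi s)$ (which is literally the case condition in \caseref{case:tau1tooEarly}, \caseref{case:tau2tooEarly}, \caseref{case2:tauTooEarly}) and that the packet ALG starts at $\tau$ indeed cannot complete by $t'$ (immediate from $t' - \tau < \ell/s$). Everything else is a single shared estimate, and the strict inequality $s < \phi+1$ is precisely what makes the decisive coefficient positive.
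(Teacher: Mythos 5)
Your unified calculation is a genuine simplification of the paper's proof, which instead introduces $a = L_{\ALG}((t,t'])$, uses $\tau \ge t + a/s$, and then splits into two subcases depending on whether $a$ is smaller than $\ell_i/\phi - 3s\varepsilon/\phi$ or not. Your single estimate $L_{\ADV} - L_{\ALG} \ge \ell/s - (s-1)(\tau-t) - 2\varepsilon \ge \ell(\phi+1-s)/(\phi s) - 2\varepsilon$ dispenses with that case split cleanly. However, there is a concrete error in your identification of $\ell$. In Case~\caseref{case:tau2tooEarly} the adversary's fault time is $\tau_{\ge 2} + \ell_2/s - \varepsilon$ (and in Case~\caseref{case2:tauTooEarly} it is $\tau + \ell_i/s - \varepsilon$), not $\tau + \ell_j/s - \varepsilon$ with $\ell_j$ the size of the started packet. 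When the packet \ALG{} starts has size strictly larger than $\ell_2$ (resp.\ $\ell_i$), your claimed lower bound $L_{\ADV}((t,t']) \ge (\tau-t) + \ell_j/s - 2\varepsilon$ \emph{overshoots} the actual block length $t'-t = (\tau-t) + \ell_2/s - \varepsilon$ (since $\ell_j/s - 2\varepsilon > \ell_2/s - \varepsilon$ once $\ell_j - \ell_2 > s\varepsilon$), so the inequality is simply false as written.

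The fix is minor: let $\ell$ be the \emph{threshold} from the case condition ($\ell_1$, $\ell_2$, or $\ell_i$ respectively), not the size of the started packet. Then the case condition literally gives $\tau - t < \ell/(\phi s)$, the fault time really is $\tau + \ell/s - \varepsilon$, the started packet has size $\ge \ell$ and so is indeed jammed (it needs time $\ge \ell/s > t'-\tau$), and your chain of estimates goes through verbatim. You still have $\ell \ge \ell_1 = 1$ in every case, so the final bound $\ell(\phi+1-s)/(\phi s) - 2\varepsilon > 1/(\phi^k(\phi+1)) - 2\varepsilon \ge \varepsilon$ for $\varepsilon$ chosen sufficiently small stands. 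With that correction your argument is both valid and more streamlined than the paper's two-case analysis.
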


\begin{proof}
Let $i$ and $\tau$ be as in Case~\caseref{case2:tauTooEarly};
we set $i=1$ and $\tau = \tau_1$ in Case~\caseref{case:tau1tooEarly},
and $i=2$ and $\tau=\tau_{\ge 2}$ in Case~\caseref{case:tau2tooEarly}.
Note that the first packet of size (at least) $\ell_i$ is started at $\tau$
with $\tau < t + \ell_i / (\phi\cdot s)$
and that the next fault occurs at time $\tau + \ell_i / s - \varepsilon$.
Furthermore, $P_{\ADV}(0, t) \ge \phi\ell_k$ by Cases~\caseref{case:epsEnd} and~\caseref{case2:epsEnd}.
As $t' - t\le \phi\ell_k$ it follows that \ADV\ has enough
$\varepsilon$'s to fill nearly the whole block with them, so
in particular $L_{\ADV}((t, t']) > t' - t - \varepsilon$.

Let $a = L_{\ALG}((t, t'])$. Since \ALG\ does not complete the 
$\ell_i$-sized packet we have $\tau \ge t + a / s$ and thus also $a < \ell_i / \phi$
as $\tau < t + \ell_i / (\phi\cdot s)$.

If $a < \ell_i / \phi - 3s\varepsilon / \phi$
which is equivalent to $\ell_i > \phi\cdot a + 3s\varepsilon$, then
we show the required inequality by the following calculation:
\begin{equation*}
L_{\ADV}((t, t']) + \varepsilon > t' - t
= \tau + \frac{\ell_i}{s} - \varepsilon - t
\ge \frac{a}{s} + \frac{\ell_i}{s} - \varepsilon
> \frac{a + \phi\cdot a + 3s\varepsilon}{s} - \varepsilon
> a+2\varepsilon\,,
\end{equation*}
where the last inequality follows from $s<\phi+1$.

Otherwise, $a$ is nearly $\ell_i / \phi$ and thus large enough. Then we get
\begin{equation*}
L_{\ADV}((t, t']) + \varepsilon  > t' - t = \tau + \frac{\ell_i}{s} - \varepsilon - t 
\ge \frac{a}{s} + \frac{\ell_i}{s} - \varepsilon 
> \frac{a}{s} + \frac{\phi a}{s} - \varepsilon 
> a+2\varepsilon 
\end{equation*}
where the penultimate inequality follows by $\ell_i > \phi a$, 
and the last inequality holds as $(1+\phi) a / s > a + 3\varepsilon$
for $\varepsilon$ small enough and $a \geq \ell_i/\phi - 3s\varepsilon / \phi$.
\end{proof}

For brevity, we inductively define $S_0 = \phi - 1$ and $S_i = S_{i-1} + \ell_i$ for $i=1, \dots, k$.
Thus $S_i = \sum_{j=1}^{i} \ell_i + \phi - 1$ and a calculation shows $S_i = \phi^{i+1} - 1$.
We prove a useful observation.

\begin{lemma}\label{l:LBonTauI}
Fix $j\ge 2$.
If Case~\caseref{case:tau1tooEarly} and
Case~\caseref{case:tauI+1beforeTauI} for $i < j$ are not triggered in the block,
then $\tau_{i'+1}\ge t + S_{i'} / s$ for each $i' < j$.
\end{lemma}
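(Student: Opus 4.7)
The plan is to prove the bound $\tau_{i'+1}\geq t+S_{i'}/s$ by induction on $i'\in\{0,1,\ldots,j-1\}$, combining the two excluded cases with the fact that \ALG\ is non-preemptive.

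For the base case $i'=0$, I use that Case~\caseref{case:tau1tooEarly} is not triggered, which directly gives $\tau_1 \geq t + \ell_1/(\phi s)$. Since $\ell_1 = 1$ and $1/\phi = \phi - 1$, this rearranges to $\tau_1 \geq t + (\phi-1)/s = t + S_0/s$.

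For the inductive step, I assume $\tau_{i'}\geq t+S_{i'-1}/s$ with $1\leq i'<j$. Since Case~\caseref{case:tauI+1beforeTauI} is not triggered for $i=i'$, we have $\tau_{\geq i'+1}\geq\tau_{i'}$, so \ALG\ starts no packet of size at least $\ell_{i'+1}$ strictly before $\tau_{i'}$. At time $\tau_{i'}$ the algorithm begins running an $\ell_{i'}$-sized packet; since the $\tau$'s are defined under the hypothetical assumption that no fault occurs after $t$, non-preemptivity ensures that the next packet start cannot occur before $\tau_{i'}+\ell_{i'}/s$. Hence $\tau_{i'+1}\geq\tau_{i'}+\ell_{i'}/s\geq t+(S_{i'-1}+\ell_{i'})/s = t+S_{i'}/s$ by the recurrence $S_{i'}=S_{i'-1}+\ell_{i'}$. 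The degenerate case $\tau_{i'}=\infty$ forces $\tau_{i'+1}=\infty$ and the bound holds trivially.

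The argument is a direct unwinding of the definitions of $\tau_{\geq i+1}$ and $S_{i'}$, plus one use of non-preemptivity; there is no substantive obstacle. The only point requiring some care is remembering that the $\tau$'s refer to the hypothetical fault-free continuation from $t$, so that the non-preemptivity argument applies without needing to worry about the actual fault at $t'$.
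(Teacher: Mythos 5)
Your proof is correct and follows essentially the same route as the paper's: the paper derives $\tau_1 \geq t + (\phi-1)/s$ from Case~\caseref{case:tau1tooEarly} not being triggered, and $\tau_{i+1} \geq \tau_i + \ell_i/s$ from Case~\caseref{case:tauI+1beforeTauI} not being triggered together with non-preemptivity, then sums; you organize the same two facts as an induction on $i'$. The only cosmetic difference is that you make the $\tau_{i'}=\infty$ degeneracy explicit, which the paper leaves implicit.
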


\begin{proof}
We have $\tau_1\ge t + \ell_1 / (\phi\cdot s) = t + (\phi - 1) / s$ by Case~\caseref{case:tau1tooEarly}
and $\tau_{i+1} \ge \tau_i + \ell_i / s$ for any $i < j$, since Case~\caseref{case:tauI+1beforeTauI}
was not triggered for $i < j$ and the first $\ell_i$-sized packet needs to be finished before starting the next packet.
Summing the bounds gives the inequalities in the lemma.
\end{proof}


\begin{lemma}\label{l:tauI+1beforeTauI}
In Case~\caseref{case:tauI+1beforeTauI},
the algorithm does not complete any packet of size $\ell_i$ or larger.
\end{lemma}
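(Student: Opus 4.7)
The plan is to show that the packet of size at least $\ell_{i+1}$ that \ALG\ starts at time $\tau_{\ge i+1}$ cannot finish before the fault at $t+\ell_i$, and that no other packet of size $\ge \ell_i$ is started in the block. The argument splits into lower-bounding $\tau_{\ge i+1}$, then lower-bounding the completion time of that packet, and finally ruling out any other relevant starts.

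For the lower bound on $\tau_{\ge i+1}$: by the minimality of $i$, Case~\caseref{case:tauI+1beforeTauI} fails for every $i' < i$, giving $\tau_{\ge i'+1} \ge \tau_{i'}$; since \ALG\ cannot start a new packet while the $\ell_{i'}$-packet begun at $\tau_{i'}$ is still running, this sharpens to $\tau_{\ge i'+1} \ge \tau_{i'} + \ell_{i'}/s$. Together with $\tau_1 \ge t + \ell_1/(\phi s)$ from Case~\caseref{case:tau1tooEarly} (not triggered either), this is precisely the setup of Lemma~\ref{l:LBonTauI}, and for $i\ge 2$ iterating yields $\tau_{\ge i+1} \ge \tau_{\ge i} \ge \tau_{i-1}+\ell_{i-1}/s \ge t+S_{i-1}/s = t+(\phi^i-1)/s$. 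For the base case $i=1$, Lemma~\ref{l:LBonTauI} is vacuous, but Case~\caseref{case:tau2tooEarly} must also fail in this block, so $\tau_{\ge 2} \ge t + \ell_2/(\phi s) = t + 1/s$.

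Next I would bound the completion time of the packet $p$ of size $\ell_j \ge \ell_{i+1}=\phi^i$ that \ALG\ starts at $\tau_{\ge i+1}$. For $i\ge 2$, its completion time $\tau_{\ge i+1}+\ell_j/s$ is at least $t+(2\phi^i-1)/s$, which exceeds the end-of-block time $t+\ell_i=t+\phi^{i-1}$ precisely when $s < 2\phi - \phi^{-(i-1)}$; using $\phi-1=\phi^{-1}$, this quantity equals $\phi+1$ at $i=2$ and grows with $i$, so the binding case is $i=2$ and the constraint reads $s<\phi+1$. For $i=1$ the analogous estimate gives $s<1+\phi$. Both are strictly implied by the adversary's global choice $s<\phi+1-\phi^{-(k-1)}$. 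Since $p$ does not complete within the block, \ALG\ remains busy with $p$ throughout $[\tau_{\ge i+1}, t+\ell_i]$ so no subsequent packet is started; and before $\tau_{\ge i+1}$, the definition of $\tau_{\ge i+1}$ together with the inequality $\tau_i > \tau_{\ge i+1}$ from Case~\caseref{case:tauI+1beforeTauI} ensures that \ALG\ has started only packets of size $<\ell_i$. Hence no packet of size $\ge \ell_i$ is completed in the block.

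The main obstacle is the case split $i=1$ versus $i\ge 2$: Lemma~\ref{l:LBonTauI} provides no useful bound in the base case, and the needed boost there must come separately from Case~\caseref{case:tau2tooEarly}. A secondary subtlety is verifying that both case-bounds coincide with the threshold $s<\phi+1$, matching the adversary's hypothesis exactly up to the safety margin $\phi^{-(k-1)}$; this is where the precise choice of the constant in $s < \phi + 1 - \phi^{-(k-1)}$ pays off.
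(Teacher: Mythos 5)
Your proposal is correct and follows essentially the same approach as the paper: it reduces to showing that the first packet of size $\ge \ell_{i+1}$ (started at $\tau_{\ge i+1}$) cannot finish before the fault at $t+\ell_i$, splits into the base case $i=1$ (using the non-triggering of Case~\caseref{case:tau2tooEarly}) and $i\ge 2$ (using Lemma~\ref{l:LBonTauI} applied to $\tau_{i-1}$ plus the running time of the $\ell_{i-1}$-packet), and checks that the binding constraint on $s$ is exactly $s<\phi+1$. The only cosmetic difference is that you make the "no other packet of size $\ge\ell_i$ is started in the block" observation explicit, which the paper leaves implicit in its framing via inequality~\eqref{eq:tauI+1LB}.
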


\begin{proof}
Recall that we have $\tau_{\ge i+1} < \tau_i$,
thus the first started packet $p$ of size at least $\ell_i$ has size at least $\ell_{i+1}$.
It suffices to prove
\begin{equation}\label{eq:tauI+1LB}
\tau_{\ge i+1} + \frac{\ell_{i+1}}{s} - t > \ell_i\,,
\end{equation}
which means that $p$ would be completed after the next fault at time $t + \ell_i$.

We start with the case $i=1$ in which $\tau_{\ge 2} < \tau_1$. Since Case~\caseref{case:tau2tooEarly} was not
triggered, we have $\tau_{\ge 2} \ge t + \ell_2 / (\phi\cdot s) = t + 1 / s$. We show (\ref{eq:tauI+1LB}) by the following calculation:
\begin{equation*}
\tau_{\ge 2} + \frac{\ell_2}{s} - t \ge \frac{1}{s} + \frac{\ell_2}{s}
= \frac{1 + \phi}{s} > 1 = \ell_1 \enspace,
\end{equation*}
where the strict inequality holds by $s < \phi + 1$.

Now consider the case $i \ge 2$. By the minimality of $i$ satisfying
the condition of Case~\caseref{case:tauI+1beforeTauI},
we use Lemma~\ref{l:LBonTauI} for $j=i$ and $i'=i-2$ to get
$\tau_{i-1}\ge t + S_{i-2} / s$.
Since a packet $\ell_{i-1}$ is started at $\tau_{i-1}$ and must be finished by $\tau_{\ge i+1}$, it holds
$\tau_{\ge i+1}\ge t+(S_{i-2} + \ell_{i-1}) / s = t+S_{i-1} / s$. Thus
\begin{align*}
\tau_{\ge i+1} + \frac{\ell_{i+1}}{s} - t 
\ge \frac{S_{i-1} + \ell_{i+1}}{s} 
&= \frac{\phi^i - 1 + \phi^i}{s} \\
&= \frac{\phi^{i + 1} + \phi^{i-2} - 1}{s} 
\ge \frac{\phi^{i + 1}}{s} 
> \phi^{i-1} = \ell_i\,, 
\end{align*}
where the penultimate inequality holds by $i\ge 2$ and 
the last inequality by $s < \phi + 1$. 
(We remark that the penultimate inequality has a significant slack for $i > 2$.)
\end{proof}

\begin{lemma}\label{l:tauKmustBeLate}
In Case~\caseref{case:tauKmustBeLate}, \ALG\ does not complete a packet of size $\ell_k$.
\end{lemma}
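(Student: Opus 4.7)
The plan is to exploit the fact that we are in the final case of the adversary's strategy, so none of the earlier cases were triggered. In particular, Case~\caseref{case:tau1tooEarly} did not fire, and Case~\caseref{case:tauI+1beforeTauI} did not fire for any $1 \le i < k$. This puts us exactly in the setting required by Lemma~\ref{l:LBonTauI}, which I would invoke with $j = k$ and $i' = k-1$ to conclude that the first $\ell_k$-sized packet started by \ALG in this block satisfies $\tau_k \ge t + S_{k-1}/s$.

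Next I would argue that it suffices to show the first $\ell_k$-sized packet started at or after $t$ cannot be completed before the next fault at $t + \ell_k$: any later $\ell_k$-sized packet starts even later and has the same processing time $\ell_k/s$, so if the first one fails, none completes in the block. So the goal reduces to verifying $\tau_k + \ell_k/s > t + \ell_k$, which by the bound above is implied by $(S_{k-1} + \ell_k)/s > \ell_k$.

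Finally I would unwind the definitions: $S_{k-1} + \ell_k = \phi^k - 1 + \phi^{k-1} = \phi^{k-1}(1+\phi) - 1 = \phi^{k+1} - 1$ using the identity $1 + \phi = \phi^2$. The desired inequality thus becomes $\phi^{k+1} - 1 > s\,\phi^{k-1}$, i.e., $s < \phi^2 - 1/\phi^{k-1} = \phi + 1 - 1/\phi^{k-1}$, which holds by the choice of $k$ at the start of the adversary's strategy.

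The proof is essentially a one-line application of Lemma~\ref{l:LBonTauI} followed by an algebraic check; the only real ``obstacle'' is recognising that the golden ratio identity $1+\phi = \phi^2$ is exactly what makes the bound $s < \phi + 1 - 1/\phi^{k-1}$ (with $k$ chosen large enough) tight for this case. This is also the place where the construction's parameter $k$ is used, so the lemma in effect justifies why $k$ had to be taken large when $s$ is close to $\phi + 1$.
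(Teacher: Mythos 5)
Your proof is correct and follows essentially the same route as the paper: invoke Lemma~\ref{l:LBonTauI} with $j=k$ to bound $\tau_k \ge t + S_{k-1}/s = t + (\phi^k-1)/s$, then use the golden-ratio identity $\phi+1 = \phi^2$ together with the choice $s < \phi+1 - 1/\phi^{k-1}$ to conclude $\tau_k + \ell_k/s > t + \ell_k$. The paper arranges the algebra slightly differently (stating the target inequality $\tau_k > t + (1-1/s)\ell_k$ first and then deriving it), but the key lemma, the bound, and the role of the parameter $k$ are identical; your explicit note that later $\ell_k$-sized packets also fail is a small clarity bonus.
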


\begin{proof}
It suffices to prove
\begin{equation}\label{eq:tauKlate}
\tau_k > t + \left(1-\frac{1}{s}\right)\ell_k\,,
\end{equation}
since then \ALG\ completes the first $\ell_k$-sized packet at
\begin{equation*}
\tau_k + \frac{\ell_k}{s} > t + \left(1-\frac{1}{s}\right)\ell_k + \frac{\ell_k}{s} = t + \ell_k\,,
\end{equation*}
i.e., after the next fault at time $t+\ell_k$.

Recall that we choose $k$ large enough so that $s < \phi + 1 - 1 / \phi^{k-1}$
or equivalently $\phi - 1/\phi^{k-1} > s - 1$.
We multiply the inequality by $\phi^{k-1}$, divide it by $s$ and add $t$ to both sides and we get
\begin{equation}\label{eq:tauKlate-2}
t + \frac{\phi^k - 1}{s} > t + \left(1-\frac{1}{s}\right) \phi^{k-1} = t + \left(1-\frac{1}{s}\right) \ell_k\,.
\end{equation}
Since Cases~\caseref{case:tau1tooEarly} and~\caseref{case:tauI+1beforeTauI}
are not triggered, we use Lemma~\ref{l:LBonTauI} for $j = k$ to show 
$\tau_k\ge t + S_{k-1} / s = t + (\phi^k - 1) / s$. We combine this with \eqref{eq:tauKlate-2}
and we have 
\begin{equation}
\tau_k\ge t + \frac{\phi^k - 1}{s} > t + \left(1-\frac{1}{s}\right) \ell_k\,,
\end{equation}
which shows \eqref{eq:tauKlate} and concludes the proof of the lemma.
%
\end{proof}

\begin{lemma}\label{l:FinishRegularCase}
In Case~\caseref{case2:regular}, \ALG does not complete any long packet.
\end{lemma}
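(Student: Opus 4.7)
The plan is to show directly that the first long packet the algorithm attempts in this block, started at time $\tau$, cannot be completed before the jam at $t + \ell_{i-1}$; since every subsequent long packet would be started later still, no long packet is completed at all. Note that since Case~\caseref{case2:bigEnd} is immediately triggered when $i=1$, we are in the regime $i\ge 2$, so $\ell_{i-1}=\phi^{i-2}$ is a valid packet size.

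First I would argue that it suffices to consider the first long packet. Any long packet of $\MAINs$ in $(t,t']$ has size at least $\ell_i$, and $\MAINs$ cannot preempt, so the earliest time a long packet could finish is $\tau + \ell_i/s$. If this exceeds $t+\ell_{i-1}$ (the next fault), the first long packet is jammed, and any later long packet is jammed as well.

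Next I would plug in the bound $\tau\ge t+\ell_i/(\phi s)$ from the case condition of~\caseref{case2:regular}, obtaining completion time at least
\[
t + \frac{\ell_i}{\phi s} + \frac{\ell_i}{s}
\;=\; t + \frac{\ell_i}{s}\left(\frac{1}{\phi}+1\right)
\;=\; t + \frac{\phi\,\ell_i}{s}\,,
\]
using the defining identity $1/\phi+1=\phi$ of the golden ratio. Since $\ell_i=\phi\,\ell_{i-1}$, this equals $t + \phi^2\ell_{i-1}/s$. From $s<\phi+1=\phi^2$ we get $\phi^2/s>1$, hence the completion time is strictly larger than $t+\ell_{i-1}$, so the packet is jammed, as required.

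No step is really an obstacle — the proof reduces to a one-line computation once the case condition $\tau\ge t+\ell_i/(\phi s)$ and the size lower bound $\ell_i$ on the long packet are combined and the golden-ratio identity $\phi^{-1}+1=\phi$ is applied. The only thing to be slightly careful about is that strict inequality in $s<\phi+1$ yields strict inequality in the final comparison, ensuring that the long packet is still running when the fault occurs rather than completing exactly at it; our convention (stated in Section~\ref{sec:prelims}) that a packet finishing at a fault time is considered completed makes this distinction matter, but strict inequality handles it.
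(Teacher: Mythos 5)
Your proof is correct and takes essentially the same route as the paper: combine the case condition $\tau \ge t + \ell_i/(\phi s)$ with the size lower bound $\ell_i$ and the golden-ratio identity to conclude that the first long packet would finish after $t+\ell_{i-1}$, hence is jammed. Your explicit remark that $i\ge 2$ (so $\ell_{i-1}=\phi^{i-2}=\ell_i/\phi$, which would fail for $i=1$ since $\ell_0=\varepsilon$) is a small but worthwhile clarification that the paper leaves implicit.
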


\begin{proof}
Recall that the first long packet $p$ is started at $\tau$
and it has size of at least $\ell_i$, thus 
it would be completed at $\tau + \ell_i / s$ or later.
We show $\tau + \ell_i / s - t > \ell_{i-1}$ by the following calculation:
\begin{equation*}
\tau + \frac{\ell_i}{s} - t \ge \frac{\ell_i}{\phi\cdot s} + \frac{\ell_i}{s}
= \frac{\phi \ell_i}{s} > \frac{\ell_i}{\phi} = \ell_{i-1}\,,
\end{equation*}
where the strict inequality holds by $s < \phi + 1$.
This implies that the long packet $p$ would be completed after
the next fault at time $t + \ell_{i-1}$.
\end{proof}

\paragraph{Analysis of the gains}
We are ready to prove that at the end of the schedule
$L_{\ADV} > L_{\ALG} + A$ holds,
which contradicts the claimed 1-competitiveness of \ALG
and proves Theorem~\ref{thm:LBphi+1}.
We inspect all the cases in which the instances may end,
starting with Cases~\caseref{case:epsEnd} and~\caseref{case2:epsEnd}.
We remark that we use only crude bounds to keep the analysis simple.

\begin{lemma}\label{l:epsEnd}
If the schedule ends in Case~\caseref{case:epsEnd} or~\caseref{case2:epsEnd},
we have $L_{\ADV} > L_{\ALG} + A$.
\end{lemma}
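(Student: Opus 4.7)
My plan is to partition the adversary's blocks into \emph{early} blocks (those produced by Case~\caseref{case:tau1tooEarly}, \caseref{case:tau2tooEarly}, or~\caseref{case2:tauTooEarly}) and \emph{long} blocks (those produced by Case~\caseref{case:tauI+1beforeTauI}, \caseref{case:tauKmustBeLate}, or~\caseref{case2:regular}), and then to sum the per-block differences $L_{\ADV}((t,t'])-L_{\ALG}((t,t'])$ across all blocks of the instance. A preliminary observation is that every block has length at most $\phi\ell_k$: for the early cases this follows by combining $\tau<t+\ell_i/(\phi s)$ with the fault at time $\tau+\ell_i/s-\varepsilon$, giving a length below $\phi\ell_i/s\le\phi\ell_k$; for the long cases the length equals $\ell_i$, $\ell_k$, or $\ell_{i-1}$, each at most $\ell_k$.

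For a long block, Lemmata~\ref{l:tauI+1beforeTauI}, \ref{l:tauKmustBeLate}, and~\ref{l:FinishRegularCase} guarantee that \ALG\ finishes no packet of the corresponding ``large'' size, so $L_{\ALG}((t,t'])\le s(t'-t)\le s\ell_k$, whereas \ADV\ completes a single positive-sized packet. Hence the per-block loss $L_{\ALG}((t,t'])-L_{\ADV}((t,t'])$ is at most $s\ell_k$. Moreover, every long block consumes a distinct packet of some size $\ell_j$ with $j\ge 1$, so the number of long blocks is at most $\sum_{i=1}^{k}N_i$ and their cumulative loss is at most $s\ell_k\sum_{i=1}^{k}N_i$.

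For an early block Lemma~\ref{l:moreEpsilons} gives a per-block gain of at least $\varepsilon$. To count early blocks, observe that \ADV\ schedules packets of size $\varepsilon$ exclusively inside early blocks (the long cases and Case~\caseref{case:iEnd} never schedule $\varepsilon$'s) and that each early block, having length at most $\phi\ell_k$, contains at most $\phi\ell_k/\varepsilon$ such packets. The ending condition of Case~\caseref{case:epsEnd} or~\caseref{case2:epsEnd} leaves fewer than $\phi\ell_k/\varepsilon$ $\varepsilon$-packets pending, so \ADV\ completes more than $N_0-\phi\ell_k/\varepsilon$ of them, forcing more than $N_0\varepsilon/(\phi\ell_k)-1$ early blocks and hence a total gain exceeding $N_0\varepsilon^2/(\phi\ell_k)-\varepsilon$.

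Combining the two contributions yields
\[
L_{\ADV}-L_{\ALG}\;>\;\frac{N_0\varepsilon^2}{\phi\ell_k}-\varepsilon-s\ell_k\sum_{i=1}^{k}N_i.
\]
Substituting the chosen lower bound $N_0>\frac{A+1+\phi\ell_k}{\varepsilon^2}\cdot\phi s\ell_k\sum_{i=1}^{k}N_i$ makes the right-hand side exceed $(A+1)s\sum_{i=1}^k N_i+(\phi-1)\ell_k\, s\sum_{i=1}^k N_i-\varepsilon$, which is greater than $A$ for sufficiently small $\varepsilon$ because $s\sum_{i=1}^k N_i\ge 1$. The main difficulty is the bookkeeping: carefully verifying that every $\varepsilon$-packet completed by \ADV\ lies inside an early block and that each long block is tied to a distinct non-$\varepsilon$ packet, after which the calibrated size of $N_0$ easily dominates the loss term.
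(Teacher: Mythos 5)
Your proposal is correct and follows essentially the same approach as the paper: you partition blocks into the same two classes (what the paper calls ``small'' and ``big''), invoke Lemma~\ref{l:moreEpsilons} for the former, bound the count of the latter by $\sum_{i=1}^k N_i$, and conclude by the same substitution of $N_0$. The only minor difference is that you use the slightly tighter bound $s\ell_k$ (instead of the paper's $s\phi\ell_k$) for the per-block loss in long blocks, which does not change the argument.
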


\begin{proof}
Recall that each block $(t, t']$ has length of at most $\phi \ell_k$,
thus $L_{\ALG}((t, t'])\le s\phi \ell_k$ and
$L_{\ADV}((t, t'])\le \phi \ell_k$.

We call a block in which \ADV\ schedules many $\varepsilon$'s \textit{small},
other blocks are \textit{big}. Recall that \ADV schedules no $\varepsilon$ in a big block. Note that 
Cases~\caseref{case:tau1tooEarly}, \caseref{case:tau2tooEarly}, and~\caseref{case2:tauTooEarly}
concern small blocks, whereas
Cases~\caseref{case:tauI+1beforeTauI}, \caseref{case:tauKmustBeLate}, and~\caseref{case2:regular}
concern big blocks.

By Lemma~\ref{l:moreEpsilons}, in each small block $(t, t']$
it holds that $L_{\ADV}((t, t']) \ge L_{\ALG}((t, t']) + \varepsilon$.
Let $\beta$ be the number of small blocks.
We observe that
\begin{equation*}
\beta \ge \frac{\left(N_0 - \frac{\phi\ell_k}{\varepsilon}\right) \varepsilon}{\phi \ell_k}\,,
\end{equation*}
because in each such block \ADV\ schedules at most $\phi \ell_k / \varepsilon$ packets of size $\varepsilon$
and $P_{\ADV}(0) < \phi\ell_k$ at the end in Cases~\caseref{case:epsEnd} and~\caseref{case2:epsEnd}.

The number of big blocks is at most $\sum_{i=1}^k N_i$,
since in each such block \ADV\ schedules a packet of size at least $\ell_1$.
For each such block we have $L_{\ADV}((t, t']) - L_{\ALG}((t, t'])\ge -s\phi \ell_k$
which is only a crude bound, but it suffices for $N_0$ large enough.

Summing over all blocks we obtain
\begin{align}
L_{\ADV} - L_{\ALG} &\ge \beta \varepsilon - \phi s \ell_k \sum_{i=1}^k N_i 
\ge \frac{\left(N_0 - \frac{\phi\ell_k}{\varepsilon}\right) \varepsilon^2}{\phi \ell_k}  - \phi s \ell_k \sum_{i=1}^k N_i \nonumber \\
&> A + \phi s \ell_k \sum_{i=1}^k N_i  - \phi s \ell_k \sum_{i=1}^k N_i  = A \enspace, \label{eqn:epsEndLast}
\end{align}
where (\ref{eqn:epsEndLast}) follows from $N_0 > \phi \ell_k (A + 1 + \phi s \ell_k \sum_{i=1}^k N_i) / \varepsilon^2$.
\end{proof}

It remains to prove the same for termination by Case~\caseref{case2:bigEnd},
since there is no other case in which the strategy may end.

\begin{lemma}\label{l:Finish}
If Strategy \textsc{Finish} ends in Case~\caseref{case2:bigEnd},
then $L_{\ADV} > L_{\ALG} + A$.
\end{lemma}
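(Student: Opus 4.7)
My plan is to prove Lemma~\ref{l:Finish} by an analysis closely parallel to that of Lemma~\ref{l:epsEnd}, with appropriate modifications to handle the phase (c) tail. First I partition every block of the schedule into three groups: (i) small blocks from Cases~\caseref{case:tau1tooEarly}, \caseref{case:tau2tooEarly}, \caseref{case2:tauTooEarly}; (ii) big blocks from Cases~\caseref{case:tauI+1beforeTauI}, \caseref{case:tauKmustBeLate}, \caseref{case2:regular}; and (iii) the short phase (c) blocks, each of length $\varepsilon$. By Lemma~\ref{l:moreEpsilons}, each small block contributes at least $\varepsilon$ to $L_\ADV - L_\ALG$. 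Each big block has length at most $\phi\ell_k$, so \ALG's work in it is bounded by $s\phi\ell_k$ and the per-block loss is at most $s\phi\ell_k$; the total number of big blocks is at most $\sum_{j=1}^k N_j$ since each schedules a distinct non-$\varepsilon$ packet of \ADV.

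For the phase (c) blocks, each of length $\varepsilon$, only $\varepsilon$-packets can be completed within one for $\varepsilon$ small enough: \ADV\ completes exactly one and \ALG\ at most $\lfloor s\rfloor$. The key point is that \ADV\ eventually completes all $N_0$ $\varepsilon$-packets (the (c) strategy continues until no $\varepsilon$ is pending for \ADV), while \ALG\ completes at most $N_0$ over the whole schedule. Hence the aggregate contribution of $\varepsilon$-packets to $L_\ADV - L_\ALG$ is $(N_0 - y_L)\varepsilon \ge 0$, where $y_L$ is \ALG's total $\varepsilon$-count; this lets us effectively disregard (c) in the remaining non-$\varepsilon$ analysis.

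The main positive contribution driving $L_\ADV - L_\ALG > A$ comes from \ADV's non-$\varepsilon$ completions: all $N_{i^*}$ packets of size $\ell_{i^*}$ (completed before Finish started, as required by the triggering of Case~\caseref{case:iEnd}) and all $\sum_{j<i^*} N_j$ short packets (completed during Finish, as required by the termination in Case~\caseref{case2:bigEnd}). To bound \ALG's matching completions, I exploit that in any small block, \ALG's completed work is strictly less than $\ell_{i^*-1}$ (or than $\ell_1$ in (a) small blocks), by the case condition $\tau<t+\ell_i/(\phi s)$; consequently \ALG\ completes no packet of size $\ge \ell_{i^*}$ in any small block. Combined with Lemmata~\ref{l:tauI+1beforeTauI}, \ref{l:tauKmustBeLate}, and~\ref{l:FinishRegularCase}, this restricts \ALG's completions of packets of size $\ge \ell_{i^*}$ to big blocks of type Case~\caseref{case:tauI+1beforeTauI} with $i'>i^*$ and to Case~\caseref{case:tauKmustBeLate}, whose aggregate length is $O(\ell_k \sum_{j>i^*} N_j)$, yielding an \ALG\ gain on such packets of $O(s\ell_k \sum_{j>i^*} N_j)$. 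The inductive choice $N_{i^*}>\phi s \ell_k \sum_{j>i^*} N_j + A/\ell_{i^*}$ then ensures that $N_{i^*}\ell_{i^*}$ dominates this loss and $L_\ADV - L_\ALG > A$.

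The main obstacle I expect is the quantitatively tight bookkeeping needed to control \ALG's large-packet completions in the restricted block types, especially in the edge case $i^*=1$ where the short-packet gain is empty and $N_1\ell_1$ alone must outweigh the absorbed $O(s\ell_k \sum_{j\ge 2} N_j)$ loss; the constant $\phi$ appearing in the definition of the $N_i$'s is chosen precisely to make this work, leaving little slack in the estimates.
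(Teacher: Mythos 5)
Your proposal is correct and ultimately reduces to the same argument as the paper: split $L_\ADV - L_\ALG$ by packet size at $\ell_{i^*}$, observe that \ADV\ completes every packet of size $< \ell_{i^*}$ (so that part is nonnegative), and bound \ALG's completions of size $\ge \ell_{i^*}$ by counting the ``bad'' blocks in which they can occur (only Cases~\caseref{case:tauI+1beforeTauI} with larger index and~\caseref{case:tauKmustBeLate}, at most $\sum_{j>i^*}N_j$ of them, each of length $\le\phi\ell_k$), after which the choice of $N_{i^*}$ closes the gap. The opening paragraph framing things as a per-block accounting in the style of Lemma~\ref{l:epsEnd} is a detour the paper avoids — there the $\varepsilon$-gain per small block had to carry the whole bound, whereas here it does no work at all; the paper instead states directly that $L_\ADV(<i) \ge L_\ALG(<i)$ and moves on — but this doesn't affect correctness.
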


\begin{proof}
Note that \ADV\ schedules all short packets and all $\varepsilon$'s, i.e., those of size less than 
$\ell_i$.  In particular, we have $L_{\ADV}(<i) \ge L_{\ALG}(<i)$.

Call a block in which \ALG\ completes a packet of size at least $\ell_i$ \textit{bad}.
As the length of any block is at most $\phi\ell_k$ we get that $L_{\ALG}(\ge i, (t, t']) \le s\phi \ell_k$
for a bad block $(t, t']$.
Bad blocks are created only in Cases~\caseref{case:tauI+1beforeTauI} and~\caseref{case:tauKmustBeLate},
but in each bad block \ADV\ finishes a packet strictly larger than $\ell_i$;
note that here we use Lemmata~\ref{l:tauI+1beforeTauI} and~\ref{l:tauKmustBeLate}.
Hence the number of bad blocks is bounded by $\sum_{j=i+1}^{k} N_j$. 
As \ADV\ completes all $\ell_i$'s we obtain
\begin{align*}
L_{\ADV}(\ge i) - L_{\ALG}(\ge i)
&\ge \ell_i N_i - \phi s \ell_k \sum_{j=i+1}^{k} N_j\\
&> \ell_i \phi s \ell_k \sum_{j = i+1}^k N_j + A - \phi s \ell_k \sum_{j=i+1}^{k} N_j \ge A\,,
\end{align*}
where the strict inequality follows from $N_k > A / \ell_k$ for $i = k$
and from $N_i > \phi s \ell_k \sum_{j = i+1}^k N_j + A / \ell_i$ for $i < k$.
By summing it with $L_{\ADV}(<i) \ge L_{\ALG}(<i)$ we conclude that $L_{\ADV} > L_{\ALG} + A$.
\end{proof}



\end{document}